\newcommand{\algrule}[1][.2pt]{\par\vskip.5\baselineskip\hrule height #1\par\vskip.5\baselineskip}
\newcommand{\prf}[1]{{}}
\newtheorem{Def}{Definition}[section]
\newcommand{\sacode}[5]
{ \vspace{.06in} \hrule \vspace{.06in} 
 \noindent {\bf #1}: \\
 \footnotesize \noindent {\bf Signature:}\B \nobreak
 \normalsize \begin{quote} \nobreak #2 \end{quote}
 \footnotesize \noindent {\bf States:}\B \nobreak
 \begin{quote} \nobreak #3 \end{quote}
 \noindent {\bf Transitions:} \nobreak
 \vspace{-.2in} \nobreak
 \normalsize #4
 \vspace{-.06in} \hrule \vspace{.06in} 
}
\newcommand{\act}[1]{%
    \relax\ifmmode
        \mathord{\mathcode`\-="702D\sf #1\mathcode`\-="2200}%
    \else
        $\mathord{\mathcode`\-="702D\sf #1\mathcode`\-="2200}$%
    \fi
}
\newcommand{\tup}[1]{%
    \relax\ifmmode
      \langle #1 \rangle%
    \else
        $\langle$#1$\rangle$%
    \fi
}
\newcommand{\seq}[1]{%
    \relax\ifmmode
      \langle \! \langle #1 \rangle \! \rangle%
    \else
        $\langle \! \langle$ #1 $\rangle \! \rangle$%
    \fi
}
\newcommand{\B}{\vspace*{-\smallskipamount}}
\newcommand{\T}{\hspace*{1em}}
\newcommand{\Section}[1]{\section{#1}}
\newcommand{\ms}[1]{%
    \relax\ifmmode
        \mathord{\mathcode`\-="702D\it #1\mathcode`\-="2200}%
    \else
        {\it #1}%
    \fi
}
\newcommand{\lit}[1]{%
    \relax\ifmmode
        \mathord{\mathcode`\-="702D\sf #1\mathcode`\-="2200}%
    \else
        {\it #1}%
    \fi
}
\newcommand{\XDK}[1]{}
\newcommand{\remove}[1]{} 
\newcommand{\uselater}[1]{} 
\renewcommand{\setminus}{-}
\def\mainlistofsymbols{
  \normalsize
  \vspace*{1.5 em}
  \@starttoc{los}
}
\def\partonelistofsymbols{
  \normalsize
  \vspace*{1.5 em}
  \@starttoc{p1los}
}
\def\parttwolistofsymbols{
  \normalsize
  \vspace*{1.5 em}
  \@starttoc{p2los}
}
\def\l@symbol#1#2{\addpenalty{-\@highpenalty} \vskip 4pt plus 2pt
{\@dottedtocline{0}{0em}{8em}{#1}{#2}}}
\newcommand{\newhiddensym}[2]{%
}
\newcommand{\algIOA}[2]{\ifmmode{\text{#1}_{#2}}\else{$\text{#1}_{#2}$}\fi}
\newcommand{\EX}{\ifmmode{\xi}\else{$\xi$}\fi}
\newcommand{\EXF}{\ifmmode{\phi}\else{$\phi$}\fi}
\newcommand{\inter}[1]{
	\ifmmode{\left(\bigcap_{\mathcal{Q}\in#1}\mathcal{Q}\right)}
	\else{$\left(\bigcap_{\mathcal{Q}\in#1}\mathcal{Q}\right)$}
	\fi
}
\newcommand{\op}{\pi}
\mathchardef\mhyphen="2D
\newcommand{\vid}[1]{\ifmmode{\nu_{#1}}\else{$\nu_{#1}$}\fi}
\newcommand{\seen}{\ifmmode{seen}\else{$seen$}\fi}
\newcommand{\maxts}[1]{\ifmmode{maxTS_{#1}}\else{$maxTS_{#1}$}\fi}
\newcommand{\maxtag}[1]{\ifmmode{maxTag_{#1}}\else{$maxTag_{#1}$}\fi}
\newcommand{\maxpair}[1]{\ifmmode{maxMPair_{#1}}\else{$maxMPair_{#1}$}\fi}
\newcommand{\mintag}[1]{\ifmmode{minTag_{#1}}\else{$minTag_{#1}$}\fi}
\newcommand{\maxps}{\ifmmode{maxPS}\else{$maxPS$}\fi}
\newcommand{\conftg}[1]{\ifmmode{confirmed_{#1}}\else{$confirmed_{#1}$}\fi}
\newcommand{\maxconftag}{\ifmmode{\ms{maxCT}}\else{$maxCT$}\fi}
\newtheorem{theorem}{Theorem}[section]
\newtheorem{lemma}[theorem]{Lemma}
\newtheorem*{nono-theorem}{Theorem Satement}
\newtheorem*{nono-lemma}{Lemma Satement}
\newtheorem{note}{Remark}
\newtheorem{definition}{Definition}
\newcommand{\PutData}{{ \it{put-data}}}
\newcommand{\PutTag}{{ \it{put-tag}}}
\newcommand{\GetData}{{ \it{get-data}}}
\newcommand{\PutDataResp}{{ \it{put-data-resp}}}
\newcommand{\GetDataResp}{{ \it{get-data-resp}}}
\newcommand{\writetoLtwo}{{ \it{write-to-L2}}}
\newcommand{\readfromLtwo}{{ \it{regenerate-from-L2}}}
\newcommand{\readfromLtworesp}{{ \it{regenerate-from-L2-resp}}}
\newcommand{\readfromLtwocomplete}{{ \it{regenerate-from-L2-complete}}}
\newcommand{\GetTag}{{ \it{get-tag}}}
\newcommand{\GetTagResp}{{ \it{get-tag-resp}}}
\newcommand{\GetCommitedTag}{{ \it{get-commited-tag}}}
\newcommand{\GetCommitedTagResp}{{ \it{get-commited-tag-resp}}}
\newcommand{\PutDateResp}{{ \it{put-data-resp}}}
\newcommand{\PutTagResp}{{ \it{put-tag-resp}}}
\newcommand{\WriteAckResp}{{ \it{write-to-L2-complete}}}
\newcommand{\broadcastResp}{{ \it{broadcast-resp}}}
\newcommand{\queryData}{{\sc query-data}}
\newcommand{\queryTag}{{\sc query-tag}}
\newcommand{\queryCommitedTag}{{\sc query-comm-tag}}
\newcommand{\reqCommitTag}{{\sc commit-tag}}
\newcommand{\writeCodeElem}{{\sc write-code-elem}}
\newcommand{\queryCodeElem}{{\sc query-code-elem}}
\newcommand{\sendCodeElem}{{\sc send-helper-elem}}
\newcommand{\ackCodeElem}{{\sc ack-code-elem}}
\newcommand{\putTagLabel}{{\sc put-tag}}
\newcommand{\putDataLabel}{{\sc put-data}}
\newcommand{\calc}{{LDS}} 
\begin{document}
	
	\title{A Layered Architecture for Erasure-Coded Consistent Distributed Storage  }
	\author{
		\IEEEauthorblockN{Kishori M. Konwar, N. Prakash, Nancy Lynch, Muriel M{\'{e}}dard}\\
		\IEEEauthorblockA{
			Department of Electrical Engineering \& Computer Science\\ Massachusetts Institute of Technology 
			\\ Cambridge, MA, USA
			\\ \emph{\{kishori, lynch\}@csail.mit.edu, \{prakashn, medard\}@mit.edu} }
		\thanks{A shorter version of this work appears as a regular paper in the ACM Proceedings of Principles of Distributed Computing (PODC) 2017, DOI: 10.1145/3087801.3087832.
		The work is supported in part by AFOSR  under grants FA9550-13-1-0042, FA9550-14-1-043, FA9550-14-1-0403, and in part by NSF under awards CCF-1217506, CCF-0939370.}
	}
	
	
	\maketitle

\begin{abstract}
Motivated by emerging applications to the \emph{edge computing} paradigm, we introduce a two-layer erasure-coded fault-tolerant distributed storage system offering atomic access for read and write operations. In edge computing, clients interact with an edge-layer of servers that is geographically near; the edge-layer in turn interacts with a back-end layer of servers. The edge-layer provides low latency access and temporary storage for client operations, and uses the back-end layer for persistent storage. Our algorithm, termed Layered Data Storage (LDS) algorithm, offers several features suitable for edge-computing systems, works under asynchronous message-passing environments, supports multiple readers and writers, and can tolerate $f_1 < n_1/2$ and $f_2 < n_2/3$ crash failures in the two layers having $n_1$ and $n_2$ servers, respectively. We use a class of erasure codes known as regenerating codes for storage of data in the back-end layer. The choice of regenerating codes, instead of popular choices like Reed-Solomon codes, not only  optimizes the cost of back-end storage, but also helps in optimizing communication cost of read operations, when the value needs to be recreated all the way from the back-end. The two-layer architecture permits a modular  implementation of atomicity and erasure-code protocols; the implementation of erasure-codes is mostly limited to interaction between the two layers. We prove liveness and atomicity of LDS, and also compute performance costs associated with read and write operations. In a system with $n_1 = \Theta(n_2), f_1 = \Theta(n_1), f_2 = \Theta(n_2)$, the write and read costs are respectively given by $\Theta(n_1)$ and $\Theta(1)  + n_1\mathcal{I}(\delta > 0)$. Here $\delta$ is a parameter closely related to the number of write operations that are concurrent with the read operation, and $\mathcal{I}(\delta > 0)$ is $1$ if $\delta > 0$, and $0$ if $\delta = 0$. The cost of persistent storage in the back-end layer is $\Theta(1)$. The impact of temporary storage is minimally felt in a multi-object system running $N$ independent instances of \calc, where only a small fraction of the objects undergo concurrent accesses at any point during the execution. For the multi-object system, we identify a condition on the rate of concurrent writes in the system such that the overall storage cost is dominated by that of persistent storage in the back-end layer, and is given by $\Theta(N)$.
\end{abstract}

\section{Introduction}
We introduce a two-layer erasure-coded fault-tolerant distributed storage system offering atomic access \cite{Lynch1996} for read and write operations. Providing consistent access to stored data is a fundamental problem in distributed computing. The most desirable form of consistency is atomicity, which in  simple terms, gives the users of the data service  the impression that the various concurrent read and write operations take place sequentially. Our work is motivated by applications to decentralized \emph{edge computing}, which is an emerging distributed computing paradigm where processing of data moves closer to the users instead of processing the entire data in distant data centers or cloud centers~\cite{Bonomi2012,EvansCisco,AkamaiNygren:2010,EdgeComputing:2016}. Edge computing is considered to be a key enabler for Internet of Things. In this form of computing, the users  or clients interact with servers in the edge of the network, which forms the first layer of servers. The edge servers in turn interact with a second layer servers in the back-end, which is either a distant data-center or a cloud center. Geographic proximity of edge servers to clients permits high speed operations between clients and the edge layer, whereas communication between the edge and the back-end layer is typically much slower~\cite{AkamaiNygren:2010}. Thus, it is desirable whenever possible to complete client operations via interaction only with the edge layer. The edge servers however are severely restricted in their total storage capacity. We envisage a system that handles millions of files, which we call objects; the edge servers clearly do not  have the capacity to store all the objects for the entire duration of execution. In practice, at any given time, only a tiny fraction of all objects undergo concurrent accesses; in our system, the limited storage space in the edge layer acts as a temporary storage for those objects that are getting accessed. The second layer of servers provide permanent storage for all the objects for the entire duration of execution. The servers in the first layer act as virtual clients of the second layer servers.

An important  requirement in edge-computing systems is to reduce the cost of operation of the back-end layer, by making efficient use of the edge layer~\cite{EdgeComputing:2016}. Communication between the two layers, and persistent storage in the second layer contribute to the cost of operation of the second layer. We address both these factors in our system design. The layered approach to implementing an atomic storage service carries the advantage that, during intervals of high concurrency from write operations on any one object, the edge layer can be used to \emph{retain} the more recent versions of the object that are being (concurrently) written, while \emph{filtering} out the outdated versions. The ability to avoid writing every version to the second layer decreases the overall write communication cost between the two layers. Our architecture also permits the edge layer to be configured as a \emph{proxy cache} layer for objects that are frequently read, and thus avoids the need to read from the back-end layer for such objects. 

In this work, we use a recent class of erasure codes known as \emph{regenerating codes}~\cite{dimakis} for storage of data in the back-end layer. From a storage cost view-point, these are as efficient as popular erasure codes like Reed-Solomon codes~\cite{rscodes}. In our system, usage of regenerating codes, instead of Reed-Solomon codes, provides the extra advantage of reducing read communication cost when the object needs to be recreated from the coded data in the cloud layer. Specifically, we rely on class of regenerating codes known as \emph{minimum bandwidth regenerating codes} for simultaneously optimizing read and storage costs.

While this may be the first work that explicitly uses regenerating codes for consistent data storage, the study of erasure codes---like Reed-Solomon codes---in implementations of consistent distributed storage, is an active area of research by itself~\cite{DGL08, CadambeLMM14, SODA2016, kedar_bounds, cadambe_podc}. In the commonly used single-layer storage systems, for several regimes of operation, cost metrics of Reed-Solomon-code based implementations~\cite{aguilera, DGL08, CadambeLMM14, SODA2016} outperform those of replication based implementations~\cite{ABD96}. In comparison with single layer systems, the layered architecture naturally permits a layering of the protocols needed to implement atomicity, and erasure code in the cloud layer. The protocols needed to implement atomicity are largely limited to interactions between the clients and the edge servers, while those needed to implement the erasure code are largely limited to interactions between the edge and cloud servers. From an engineering viewpoint, the modularity of our implementation  makes it suitable even for situations that does not necessarily demand a two-layer system.

\subsection{Our Algorithm for the Two-Layer System} 
We propose the \emph{Layered Distributed Storage} (LDS) algorithm for implementing a multi-writer, multi-reader atomic storage service over a two-layer asynchronous network. The algorithm is designed to address the various requirements described above for edge computing systems. A write operation completes after writing the object value to the first layer; it does not wait for the first layer to store the corresponding coded data in the second layer. For a read operation, concurrency with write operations increases the chance of it being served directly from the first layer; otherwise, servers in the first layer \emph{regenerate} coded data from the second layer, which are then relayed to the reader. Servers  in the first layer interact with those of second layer via the well defined actions (which we call as \emph{internal operations}) \writetoLtwo~ and \readfromLtwo~ for implementing the regenerating code in the second layer.  The algorithm is designed to tolerate $f_1 < n_1/2$ and $f_2 < n_2/3$ crash failures in the first and second layers, having $n_1$ and $n_2$ servers, respectively. We prove liveness and atomicity properties of the algorithm, and also calculate various performance costs. In a system with $n_1 = \Theta(n_2), f_1 = \Theta(n_1), f_2 = \Theta(n_2)$, the write and read costs are respectively given by $\Theta(n_1)$ and $\Theta(1)  + n_1\mathcal{I}(\delta > 0)$. Here $\delta$ is a parameter closely related to the number of write or internal \writetoLtwo~ operations that are concurrent with the read operation, and $\mathcal{I}(\delta > 0)$ is $1$ if $\delta > 0$, and $0$ if $\delta = 0$.. Our ability to reduce the read cost to $\Theta(1)$, when $\delta = 0$ comes from the usage of {minimum bandwidth regenerating} (MBR) codes (see Section \ref{sec:models}). In order to ascertain the contribution of temporary storage cost to the overall storage cost, we carry out a multi-object (say $N$) analysis, where each of the $N$ objects is implemented by an independent instance  of the \calc~algorithm.  The multi-object analysis assumes bounded latency for point-to-point channels. We identify conditions on the  total number of concurrent write operations per unit time, such that the permanent storage cost in the second layer dominates the temporary storage cost in the first layer, and is given by  $\Theta(N)$. 
Further, we compute bounds on completion times of successful client operations, under bounded latency.

\subsection{Related Work} 
Replication based algorithms for implementing atomic shared memory appears in \cite{ABD96}, \cite{FL03}. The model in \cite{FL03} uses a two-layer system, with one layer dedicated exclusively for meta-data, and other layer for storage. The model is suitable when actual data is much larger than meta-data, and permits easy scalability of the storage layer. However, clients interact with servers in both layers, and thus is not directly comparable to our model, where clients only interact with the first layer. Both \cite{ABD96}, \cite{FL03} use quorums for implementing atomicity; variations of these algorithms appear in practical systems like Cassandra~\cite{Lakshman:2010}.  Replication based algorithms in single-layer systems, for dynamic settings appear in RAMBO~\cite{LS02}, DynaStore~\cite{AKMS11}. Dynamic setting allow servers to leave and enter the system; these algorithms rely on reconfiguration of quorums. Erasure-code based implementations of consistent data storage in single layer systems appear in ~\cite{aguilera, DGL08, CadambeLMM14, SODA2016, kedar_bounds}. Bounds on the performance costs for erasure-code based implementations appear in \cite{cadambe_podc, kedar_bounds}. In \cite{cachin}, \cite{dobre}, \cite{hendricks}, erasure codes are used in algorithms for implementing atomic memory in settings that tolerate Byzantine failures.  In \cite{DGL08, amnesic, radon}, authors provide algorithms that permit repair of crashed servers (in a static setting), while implementing consistent storage. In the content of our work, it is of future interest to develop protocols for recovery of crashed servers in the second-layer, which implements permanent coded storage.

We rely on regenerating codes which were introduced in \cite{dimakis} with the motivation of  enabling efficient repair of failed servers in distributed storage systems. For the same storage-overhead and resiliency, the communication cost for repair, termed repair-bandwidth, is substantially less than what is needed by popular codes like Reed-Solomon codes. There has been significant theoretical progress since the work of \cite{dimakis}; a survey appears in  \cite{dimakis2011survey}.  Several systems works show usefulness of these codes or their variations in practical systems for immutable data~\cite{sathiamoorthy, rashmi_fast15, hotstor}. In this work, we cast internal read operations by virtual clients in the first layer as repair operations, and this enables us to reduce the overall read cost.  We rely on code constructions from \cite{prod_matrix} for the existence of MBR codes needed in our work, and these codes offer \emph{exact repair}. A different class of codes known as Random Linear Network Codes~\cite{rlnc} permit implementation of regenerating codes via  \emph{functional repair}. These codes offer probabilistic guarantees, and permit near optimal operation of regenerating codes for any choice of operating point suggested by \cite{dimakis}. In the context of our work, it will be interesting to find out the probabilistic guarantees that can be obtained if we use RLNCs instead of the codes in \cite{prod_matrix}.

System Model and definitions appear in Section \ref{sec:models}. The pseudo code of the \calc~algorithm, along with its description is presented in Section \ref{sec:algo}. In Section \ref{sec:propeties}, we state several properties of the algorithm, which are tied together to prove its liveness and atomicity properties. Performance cost analysis appears in Section \ref{sec:costs}. Our conclusions appear in Section \ref{sec:conc}. Proofs of various claims appear in the Appendix.

\Section{System Model and Definitions} \label{sec:models}

\paragraph{Model of Computation} We assume a distributed storage system consisting of asynchronous processes of three types: writers (${\mathcal W}$), readers (${\mathcal R}$) and servers (${\mathcal S}$). The servers are organized into two logical layers $\mathcal{L}_1$ and $\mathcal{L}_2$, with $\mathcal{L}_i$ consisting of $n_i$, $i = 1, 2$ servers. Each process has a unique id, and the ids are totally ordered. Client (reader/writer) interactions are limited to servers in $\mathcal{L}_1$, the servers in $\mathcal{L}_1$ in turn interact with servers in $\mathcal{L}_2$. Further, the servers in $\mathcal{L}_1$ and $\mathcal{L}_2$ are denoted by $\{s_1, s_2, \ldots, s_{n_1}\}$ and $\{s_{n_1 + 1}, s_{n_1+ 2}, \ldots, s_{n_1 + n_2}\}$, respectively. We assume the clients to be well-formed, i.e., a client issues a new operation only after completion of its previous operation, if any. The $\mathcal{L}_1$-$\mathcal{L}_2$ interaction happens via the well defined actions \writetoLtwo~ and \readfromLtwo. We will refer to these actions as internal operations initiated by the servers in $\mathcal{L}_1$. We assume a crash failure model for processes. Once a process crashes, it does not execute any further steps for the rest of the execution. The \calc~algorithm is designed to tolerate $f_i$ crash failures in layer $\mathcal{L}_i, i = 1, 2$, where $f_1 < n_1/2$ and $f_2 < n_2/3$. Any number of readers and writers can crash during the execution. Communication is modeled via reliable point-to-point links between any two processes. This means that as long as the destination process is non-faulty, any message sent on  the link is guaranteed to eventually reach the destination process.  The model allows the sender process to fail after placing the message in the channel; message-delivery depends only on whether the destination is non-faulty.

\paragraph{Liveness and Atomicity}
We implement one object, say $x$, via the \calc~algorithm supporting read/write operations. 
For multiple objects, we simply run multiple instances of the \calc~algorithm. 
The object value $v$ come from the set $\mathcal{V}$; initially $v$ is set to a distinguished value $v_0$ ($\in \mathcal{V}$). Reader $r$ requests a read operation on  object $x$. Similarly, a write operation is requested by a writer $w$. Each operation at a non-faulty client begins with an \emph{invocation step} and terminates with a  \emph{response step}. An operation $\op$ is \emph{incomplete} in an execution when the invocation step of $\op$ does not have the associated response step; otherwise we say that $\op$ is \emph{complete}.  In an execution, we say that an operation (read or write) $\op_1$ {\em precedes} another operation $\op_2$,  if the response step for $\pi_1$ precedes the invocation step of $\pi_2$.  Two operations are {\em concurrent} if neither precedes the other. 

By liveness, we mean that during any well-formed execution of the algorithm,  any read or write operation initiated by a non-faulty reader or writer completes,  despite the crash failure of any other client and up to $f_1$ server crashes in $\mathcal{L}_1$, and up to $f_2$ server crashes in $\mathcal{L}_2$. By atomicity of an execution, we mean that the read and write operations in the execution can be arranged in a sequential order that is consistent with the order of invocations and responses. We refer to \cite{Lynch1996} for formal definition of atomicity. We use the sufficient condition presented in Lemma $13.16$ of \cite{Lynch1996} to prove atomicity of \calc.

\paragraph{Regenerating Codes}
We introduce the framework as in \cite{dimakis}, and then see its usage in our work. In the regenerating-code framework, a file $\mathcal{F}$  of size $B$ symbols is encoded and stored across $n$ servers such that each server stores $\alpha$ symbols. The symbols are assumed to be drawn from a finite field $\mathbb{F}_q$, for some $q$. The content from any $k$ servers ($k\alpha$ symbols) can be used to decode the original file $\mathcal{F}$. For repair of a failed server, the replacement server contacts any subset of $d \geq k$ surviving servers in the system, and downloads $\beta$ symbols from each of the $d$ servers. The $\beta$ symbols from a \emph{helper} server is possibly a function of the $\alpha$ symbols in the server.  The parameters of the code, say $\mathcal{C}$, shall be denoted as $\{(n, k, d)(\alpha, \beta)\}$. It was shown in \cite{dimakis} that the file-size $B$ is upper bounded by $B \leq \sum_{i=0}^{k-1}\min(\alpha, (d-i)\beta)$. Two extreme points of operation correspond to the minimum storage overhead (MSR) operating point, with $B = k\alpha$ and minimum repair bandwidth (MBR) operating point, with $\alpha = d\beta$.  In this work, we use codes at the MBR operating point. The file-size at the MBR point is give by $B_{MBR} = \sum_{i=0}^{k-1}(d-i)\beta$. We also focus on exact-repair codes, meaning that the content of a replacement server after repair is identical to what was stored in the server before crash failure (the framework permits \emph{functional repair}~\cite{dimakis} which we do not consider). Code constructions for any set of parameters at the MBR point appear in \cite{prod_matrix}, and we rely on this work for existence of codes. In this work, the file $\mathcal{F}$ corresponds to the object value $v$ that is written.

In this work, we use an $\{(n = n_1 + n_2, k, d)(\alpha, \beta)\}$ MBR code $\mathcal{C}$. The parameters $k$ and $d$ are such that $n_1 = 2f_1 + k$ and $n_2 = 2f_2 + d$. We define two additional codes $\mathcal{C}_1$ and $\mathcal{C}_2$ that are derived from the code $\mathcal{C}$. The code $\mathcal{C}_1$ is obtained by restricting attention to the first $n_1$ coded symbols of $\mathcal{C}$, while the code $\mathcal{C}_2$ is obtained by restricting attention to the last $n_2$ coded symbols of $\mathcal{C}$. Thus if $[c_1 \ c_2 \ \ldots c_{n_1} \ c_{n_1 + 1} \ \ldots c_{n_1 + n_2}], c_i \in \mathbb{F}_q^{\alpha}$ denotes a codeword of $\mathcal{C}$, the vectors $[c_1 \ c_2 \ \ldots c_{n_1} ]$ and $[c_{n_1 + 1} \ \ldots c_{n_1 + n_2}]$ will be codewords of $\mathcal{C}_1$ and $\mathcal{C}_2$, respectively. We associate the code symbol $c_i$ with server $s_i$, $1 \leq i \leq n_1 + n2$.

The usage of these three codes is as follows. Each server in $\mathcal{L}_1$, having access to the object value $v$ (at an appropriate point in the execution) encodes $v$ using  code $\mathcal{C}_2$ and sends coded data $c_{n_1 + i}$ to server $s_{n_1 + i}$ in $\mathcal{L}_2, 1 \leq i \leq n_2$. During a read operation, a server say $s_j$ in $\mathcal{L}_1$ can potentially reconstruct  the coded data $c_j$ using content from $\mathcal{L}_2$. Here we think of $c_j$ as part of the code $\mathcal{C}$, and $c_j$ gets reconstructed via a repair procedure (invoked by server $s_j$ in $\mathcal{L}_1$) where the $d$ helper servers belong to $\mathcal{L}_2$. By operating at the MBR point, we minimize the cost that need by the server $s_j$ to reconstruct $c_j$. Finally, in our algorithm, we permit the possibility that the reader receives $k$ coded data elements from $k$ servers in $\mathcal{L}_1$, during a read operation. In this case, the reader uses the code $\mathcal{C}_1$ to attempt decoding the object value $v$. 

An important property of the MBR code construction in \cite{prod_matrix}, which is needed in our algorithm, is the fact the a helper server only needs to know the index of the failed server, while computing the helper data, and does not need to know the indices of the other $d-1$ helpers whose helper data will be used in repair. Not all regenerating code constructions, including those of MBR codes, have this property that we need. In our work, a server $s_j \in \mathcal{L}_1$ requests for help from all servers in $\mathcal{L}_2$, and does not know a priori, the subset of $d$ servers that will form the helper servers. As we shall see in the algorithm, the server $s_j$ simply relies on the first $d$ responses that it receives, and considers these as the helper data for repair.  In this case, it is crucial that any  server in $\mathcal{L}_2$ that computes its $\beta$ symbols does so without any assumption on the specific set of $d$ servers in $\mathcal{L}_2$ that will eventually form the helper servers for the repair operation.

\paragraph{Storage and Communication Costs}

The communication cost associated with a read or write operation is the (worst-case) size of the total data that gets transmitted in the messages sent as part of the operation. While calculating write-cost, we also include costs due to internal \writetoLtwo~operations initiated as a result of the write, even though these internal \writetoLtwo~operations do not influence the termination point of the write operation. The storage cost at any point in the execution is the worst-case total amount of data that is stored in the servers in $\mathcal{L}_1$ and $\mathcal{L}_2$. The total data in $\mathcal{L}_1$ contributes to temporary storage cost, while that in $\mathcal{L}_2$ contributes to permanent storage cost. Costs contributed by meta-data (data for book keeping such as tags, counters, etc.) are ignored while ascertaining either storage or communication costs.  Further the costs are normalized by the size of $v$; in other words, costs are expressed as though size of $v$ is $1$ unit.

\section{$LDS$~Algorithm }\label{sec:algo}

\begin{algorithm*}
	\begin{algorithmic}[2]
		\begin{multicols}{2}
			{\footnotesize
				\Part{ $LDS$ steps  at  a writer $w$} {}\EndPart
				\Part{\underline{\GetTag}} {
					\State  Send  {\queryTag} to servers in ${\mathcal L}_1$
					\State  Wait for responses from  $f_1+k$ servers, and select max tag $t$. 
				}\EndPart
				
				\Part{\underline{\PutData}} {
					\State Create new tag $t_w = (t^.z + 1, w)$.  
					\State Send  ({\putDataLabel}, $(t_w, v)$) to servers in ${\mathcal L}_1$
					\State  Wait for responses from $f_1+k$  servers in ${\mathcal L}_1$, and terminate
				}	\EndPart
			}
		\end{multicols}
	\end{algorithmic}	
	\algrule  
	\begin{algorithmic}[2]
		\begin{multicols}{2}
			{\footnotesize
				\Part{ $LDS$ steps  at  a reader $r$} {}\EndPart
				\Part{\underline{\GetCommitedTag}} {
					\State  Send  {\queryCommitedTag} to servers in ${\mathcal L}_1$
					\State  Await  $f_1+k$ responses, and  select  max tag $t_{req}$
				}\EndPart
				
				\Part{\underline{\GetData}} {
					\State  Send   $(${\queryData}$, t_{req})$ to servers in ${\mathcal L}_1$
					\State  Await responses from $f_1+k$ servers such that at least one of them is a (tag, value) pair, or at least $k$ of them are (tag, coded-element) pairs corresponding to some common tag.  In the latter case, decode corresponding value using code $\mathcal{C}_1$. Select the $(t_r, v_r)$ pair corresponding to the highest tag, from the available (tag, value) pairs.  
				}\EndPart
				\Part{\underline{\PutTag}} {
					\State  Send   ({\putTagLabel}, $t_{r}$) to servers in ${\mathcal L}_1$
					\State  Await responses from $f_1+k$ servers in ${\mathcal L}_1$. Return $v_r$
				}\EndPart
			}
		\end{multicols}
	\end{algorithmic}
	
	\caption{The $LDS$ algorithm for a writer $w \in \mathcal{W}$ and  reader $r \in {\mathcal R}$.}
	\label{fig:lcas-clients}
\end{algorithm*}

			\begin{algorithm*}[!ht]
				\begin{algorithmic}[2]
					\begin{multicols}{2}
						{\footnotesize
							\Part{ $LDS$ state variables \& steps  at an ${\mathbf {\mathcal L}_1}$ server, $s_j$} {}\EndPart
							\Part{State Variables}{ 
								\Statex $L \subseteq   \mathcal{T} \times {\mathcal V}$, initially $\{ (t_0, \bot) \}$				
								\Statex $\Gamma \subseteq   \mathcal{R} \times {\mathcal T}$, initially empty		
								\Statex $t_c \in \mathcal{T}$ initially $t_c = t_0$
								\Statex $commitCounter[t]$ : $ t \in  {\mathcal T}$, initially  $0 \ \forall t \in  {\mathcal T}$
								\Statex $readCounter[r]$:   $r \in  {\mathcal R}$, initially  $0 \ \forall r \in  {\mathcal R}$ 
								\Statex $writeCounter[t]$:  $t \in  {\mathcal T}$, initially  $0 \ \forall t \in  {\mathcal T}$
								\Statex $K$ : key-value set; keys from $\mathcal{R}$, values from $\mathcal{T} \times \mathcal{H}$  	
								
							}\EndPart
							\Statex
							\Part {\underline{\GetTagResp~({\queryTag}) from $w \in \mathcal{W}$}} {
								\State send  $\max\{t:  (t, *) \in L \}$ to $w$
							}\EndPart
							
							\Statex
							\Part{ \underline{\PutDateResp~  ({\putDataLabel}, $(t_{in}, v_{in})$) received}}{
								\State  $broadcast$({\reqCommitTag}, $t_{in}$) to  ${\mathcal L}_1$
								\If{ $t_{in} > t_{c}$ } 
								\State $L \leftarrow   L \cup \{ (t_{in}, v_{in})\}$
								\Else
								\State send {\sc ACK} to writer $w$ of tag $t_{in}$
								\EndIf
							}\EndPart
							\Statex	
							\Part{ \underline{\broadcastResp~({\reqCommitTag}, $t_{in}$) received}}{
								\State  $commitCounter[t_{in}] \leftarrow commitCounter[t_{in}] +1$
								\If{  $(t_{in}, *) \in L \wedge $ $commitCounter[t_{in}]  \geq f_1 + k$} 
								\State send {\sc ACK} to writer $w$ of tag $t_{in}$
								\If{ $t_{in} > t_{c}$ } 
								\State  $t_c \leftarrow t_{in}$
								
								\State  {\bf for each} $\gamma \in \Gamma$ such that $t_c \geq \gamma.t_{req}$, 
								\Statex \T \T\T\T send $(t_{in}, v_{in})$ to reader $\gamma.r$
								\Statex \T\T\T\T $\Gamma \leftarrow \Gamma \setminus \{\gamma\}$
								\State  {\bf for each} $(t, *) \in L$ s.t. $t <t_c$ //delete older value
								\Statex \T\T\T\T  $L \leftarrow L \setminus \{  (t, *)\} \cup \{  (t, \perp)\}$  
								\State   {\bf initiate} \writetoLtwo$(t_{in}, v_{in})$ // write $v_{in}$ to ${\mathcal   L}_2$
								\EndIf
								\EndIf
							}\EndPart
							\Statex
							\Part {\underline{\writetoLtwo$(t_{in}, v_{in})$}} {
								\State {\bf for each} $s_{n_1+i} \in {\mathcal L}_2$
								\State \T\T Compute coded element $c_{n_1+i}$ for value $v$ 
								\State \T\T send ({\writeCodeElem}, $(t_{in}, c_{n_1+i}$) to $s$
							}\EndPart
							\Statex
							\Part{ \underline{\WriteAckResp \ ({\ackCodeElem}, $t$)~received}}{
								\State $writeCounter[t] \leftarrow writeCounter[t] + 1$
								\If{$writeCounter[t]  = n_2 -f_2  $} 
								\State  $L \leftarrow L \setminus \{  (t, *)  \} \cup \{(t, \bot) \}$
								\EndIf
							}\EndPart
							\Part {\underline{\GetCommitedTagResp~({\queryCommitedTag}) from $r \in \mathcal{R}$}} {
								\State send  $t_c$ to $r$
							}\EndPart
							\Part {\underline{\GetDataResp~({\queryData}, $t_{req}$) from $r \in \mathcal{R} $}} {
								\If{ $(t_{req}, v_{req}) \in L$ }
								\State send $(t_{req}, v_{req})$ to reader    $r$
								\Else 
								\If{  $t_c > t_{req} \wedge (t_c, v_c) \in L$ }
								\State send $(t_{c}, v_{c})$ to reader $r$
								\Else
								\State $\Gamma \leftarrow \Gamma \cup \{  (r, t_{req}) \}$
								\State {\bf initiate} \readfromLtwo$(r)$     
								\EndIf
								\EndIf
							}\EndPart
							\Statex
							\Part {\underline{\readfromLtwo$(r)$}} {
								\State {\bf for each} $s \in {\mathcal L}_2$
								\State \T\T send ({\queryCodeElem}, $r$) to $s$
							}\EndPart
							\Statex 
							\Part{ \underline{\readfromLtwocomplete({\sendCodeElem}, $(r, t, h_{n_1 + i, j})$) recv}}{
								\State  $readCounter[r] \leftarrow readCounter[r] +1$
								\State $K[r] \leftarrow K[r] \cup \{ (t, h_{n_1 + i, j})  \}$
								\If {$readerCounter[r] =  n_2 - f_2 = f_2 + d $}
								\State  $(\widehat{t}, \widehat{c}_j) \leftarrow$ regenerate  highest possible tag using $K[r]$ 
								\Statex \T\T //$(\bot, \bot)$ if failed to regenerate any tag
								\State {\bf clear} $K[r]$ 
								\If{ $\widehat{c}_j \neq \bot  \wedge  \widehat{t} \geq  \gamma.t_{req} $} // where $\gamma = (r, t_{req})$ 
								\State send $(\widehat{t}, \widehat{c}_j)$ to $r$
								\Else
								\State send $(\bot, \bot)$ to $r$
								\EndIf
								\EndIf
							}\EndPart
							\Statex				
							\Part{ \underline{\PutTagResp~({\putTagLabel},  $(t_{in})$) received  from $r \in \mathcal{R}$}}{
								\State $\Gamma \leftarrow \Gamma \setminus \{ \gamma' \}$ // $\gamma' = (r, t_{req})$
								\If{ $t_{in} > t_{c}$ } 
								\State  $t_c \leftarrow t_{in}$
								\If{ $(t_{c}, v_{c}) \in L$}
								\State  {\bf for each} $\gamma \in \Gamma$  s. t. $t_c \geq \gamma.t_{req}$
								\State  \T\T send $(t_{c}, v_{c})$ to reader $\gamma.r$,
								\State \T\T  $\Gamma \leftarrow   \Gamma \setminus \{\gamma\}$.
								\State   {\bf initiate} \writetoLtwo$(t_{in}, v_{in})$
								\Else
								\State $L \leftarrow L \cup \{  (t_c, \perp)\}$
								\State  $\bar{t} \leftarrow \max \{ t:  t < t_{c} \wedge (t,  v) \in L\} $ 
								\Statex \T\T\T // $\bar{t} = \perp$, if none exists
								\State  {\bf for each} $\gamma \in \Gamma$ such that $\bar{t} \geq \gamma.t_{req}$, 
								\Statex \T\T\T\T send $(\bar{t}, \bar{v})$ to reader $\gamma.r$
								\Statex \T\T\T\T  $\Gamma \leftarrow \Gamma \setminus \{\gamma\}$
								\EndIf
								\State  {\bf for each} $(t, *) \in L$ s.t. $t <t_c$
								\Statex \T\T\T\T   $L \leftarrow L \setminus \{  (t, *)\} \cup \{  (t, \perp)\}$  
								\EndIf
								\State  send {\sc ACK} to $r$ 
								
							}\EndPart
						}
					\end{multicols}
				\end{algorithmic}	
					\caption{The $LDS$ algorithm for any  server in ${\mathcal L}_1$. }\label{fig:lcas-edge-servers}
				\end{algorithm*}

				\begin{algorithm*}[!ht]
				\begin{algorithmic}[2]
					\begin{multicols}{2}
						{	\footnotesize
							\Part{ $LDS$ state variables \& steps  at an ${\mathbf {\mathcal L}_2}$ server $s_{n_1 + i}$} {}\EndPart
							\Part{State Variables}{ 		
								\Statex  $(t, c) \in {\mathcal T} \times {\mathbb F}_q^{\alpha}$, initially $(t_0, c_0)$	
							}\EndPart
							\Statex
							\Part {\underline{ $write\act{-}to\act{-}L2\act{-}resp$ ({\writeCodeElem}, $(t_{in}, c_{in})$) from  $s_{j}$}} {
								\If{ $t_{in} > t$ }
								\State $(t,c) \leftarrow (t_{in}, c_{in})$
								\EndIf
								\State send ({\ackCodeElem}, $t_{in}$) to $s_{j}$
							}\EndPart
							\Statex	
							\Part {\underline{\readfromLtworesp({\queryCodeElem}, $r$) from  $s_{j}$}} {
								\State Compute helper data $h_{n_1 + i, j} \in \mathbb{F}_q^{\beta}$ for repairing $c_{j}$
								\State send ({\sendCodeElem}, $(r, t, h_{n_1 + i, j})$) to $s_{j}$
							}\EndPart
						} \end{multicols}
					\end{algorithmic}	
					\caption{The $LDS$ algorithm for any  server in ${\mathcal L}_2$ . }\label{fig:lcas-backend-servers}
				\end{algorithm*}

	In this section,  we  present the {$LDS$~algorithm. The protocols for clients, servers in $\mathcal{L}_1$ and servers in $\mathcal{L}_2$ appear in Figs. \ref{fig:lcas-clients}, \ref{fig:lcas-edge-servers} and \ref{fig:lcas-backend-servers} respectively. Tags are used for version control of object values. A tag $t$ is defined as a pair $(z, w)$, where $z \in \mathbb{N}$ and $w \in \mathcal{W}$ denotes the ID of a writer. We use $\mathcal{T}$ to denote the set of all possible tags. For any two tags $t_1, t_2 \in \mathcal{T}$ we say  $t_2 > t_1$ if $(i)$ $t_2.z > t_1.z$ or $(ii)$ $t_2.z = t_1.z$ and $t_2.w > t_1.w$. The relation $>$ imposes a total order on the set $\mathcal{T}$.

		Each server $s$ in $\mathcal{L}_1$ maintains the following state variables: $a)$ a list $L \subseteq \mathcal{T} \times \mathcal{V}$, which forms a temporary storage for tag-value pairs received as part of write operations, $b)$ $\Gamma \subseteq \mathcal{R} \times \mathcal{T}$, which indicates the set of readers being currently served. The pair $(r, t_{req}) \in \Gamma$ indicates that the reader $r$ requested for tag $t_{req}$ during the read operation. $c)$ $t_c$: committed tag at the server, $d)$ $K$ : a key-value set used by the server as part of internal {\readfromLtwo} operations. The keys belong to $\mathcal{R}$, and values belong to $\mathcal{T} \times \mathcal{H}$. Here $\mathcal{H}$ denotes the set of all possible helper data corresponding to coded data elements $\{c_s(v), v \in \mathcal{V}\}$.  Entries of $\mathcal{H}$ belong to $\mathbb{F}_q^{\beta}$. In addition to these, the server also maintains three counter variables for various operations. The state variable for  a server in $\mathcal{L}_2$ simply consists of one (tag, coded-element) pair. For any server $s$, we use the notation $s.y$ to refer its state variable $y$. Further, we write $s.y|_T$ to denote the value of $s.y$ at point $T$ of the execution. Following the spirit of I/O automata~\cite{Lynch1996}, an execution fragment of the algorithm is simply an alternating sequence of (the collection of all) states and actions. By an action, we mean a block of code executed by any one process without waiting for further external inputs.

				In our algorithm, we use a \emph{broadcast} primitive for certain meta-data message delivery. The primitive has the property that if the message is consumed by any one server in $\mathcal{L}_1$, the same message is eventually consumed by every non-faulty server in ${\mathcal L}_1$. An implementation of the primitive, on top of reliable communication channels (as in our model), can be found in \cite{SODA2016}.  In this implementation, the idea is that the process that invokes the broadcast protocol first sends, via point-to-point channels, the message to a fixed set $S_{f_1 + 1}$ of $f_1 + 1$ servers in  $\mathcal{L}_1$. Each of these servers, upon reception of the message for first time, sends it to all the servers in $\mathcal{L}_1$, before consuming the message itself.  The primitive helps in the scenario when the process that invokes the broadcast protocol crashes before sending the message to all servers.
				
				\subsection{Write Operation} 
				
				The write operation has two phases, and aims to temporarily store the object value $v$ in $\mathcal{L}_1$ such that up to $f_1$ failures in $\mathcal{L}_1$ does not result in loss of the value. During the first phase \GetTag, the writer $w$ determines the new tag for the value to be written. To this end, the writer queries all servers in $\mathcal{L}_1$ for maximum tags, and awaits responses from $f_1 + k$  servers in $\mathcal{L}_1$. Each server that gets the \GetTag~ request responds with the maximum tag  present in the list $L$, i.e. $\max_{(t, *) \in L}t$. The writer picks the maximum tag, say $t$, from among the responses, and creates a new and higher tag $t_w = tag(\pi)$.
				
				In the second phase \PutData, the writer sends the new (tag, value) pair to all severs in $\mathcal{L}_1$, and awaits acknowledgments from $f_1 + k$  servers in $\mathcal{L}_1$. A server that receives the new (tag, value) pair $(t_{in}, v_{in})$ as a first step uses the \emph{broadcast} primitive to send a data-reception message to all servers in $\mathcal{L}_1$. Note that this message contains only meta-data information and not the actual value. Each server $s \in \mathcal{L}_1$ maintains a committed tag variable $t_c$ to indicate the highest tag that the server either finished writing or is currently writing to $\mathcal{L}_2$. After sending the broadcast message, the server adds the pair $(t_{in}, v_{in})$ to its temporary storage list $L$ if $t_{in} > t_c$. If $t_{in} < t_c$, the server simply sends an acknowledgment back to the writer, and completes its participation in the ongoing write operation.
				If the server adds the pair $(t_{in}, v_{in})$ to $L$, it waits to hear the \emph{broadcast} message regarding the ongoing write operation from at least $f_1 + k$  servers before sending acknowledgment to the writer $w$. We implement this via the \emph{broadcast-resp} phase.  It may be noted that a server sends acknowledgment to writer via the \emph{broadcast-resp} phase only if it had added the (tag, value) pair to $L$ during the \PutDataResp~phase.
				
				\subsection{Additional Steps in \broadcastResp~Phase}
				The server performs a few additional steps during the \broadcastResp~phase, and these aid ongoing read operations, garbage collection of the temporary storage, and also help to offload the coded elements to $\mathcal{L}_2$. The execution of these additional steps do not affect the termination point of the write operation. We explain these steps next.
				
				\vspace{0.05in}
				\emph{\underline{Update committed tag $t_c$}:}
				The server checks if  $t_{in}$ is greater than the committed tag $t_c$, and if so updates the $t_c$ to $t_{in}$. We note that even though the server added $(t_{in}, v_{in})$ to $L$ only after checking $t_{in} > t_c$, the committed tag might have advanced due to concurrent write operations corresponding to higher tags and thus it is possible that  $t_{in} < t_c$ when the server does the check. Also, if $t_{in} > t_c$, it cab be shown that $(t_{in}, v_{in}) \in L$. In other words the value $v_{in}$ has not been garbage collected yet from the temporary storage $L$. We explain the mechanism of garbage collection shortly.
				
				\vspace{0.05in}
				\emph{\underline{Serve outstanding read requests}:} The server sends the pair $(t_{in}, v_{in})$ to any outstanding read request whose requested tag $t_{req} \leq t_{in}$. In this case, the server also considers the read operation as being served and will not send any further message to the corresponding reader. We note this is only one of the various possibilities to serve a reader. Read operation is discussed in detail later.
				
				\vspace{0.05in}
				\emph{\underline{Garbage collection of older tags}:}
				Garbage collection happens in two ways in our algorithm. We explain one of those here, the other will be explained as part of the description of the internal \writetoLtwo~ operation. The server replaces any (tag, value) pair $(t, v)$ in the list $L$ corresponding to $t < t_c = t_{in}$ with $(t, \perp)$, and thus removing the value associated with tags which are less than the committed tag. The combination of our method of updating the committed tag and garbage collection described here ensures that during intervals of concurrency from multiple write operations, we only offload the more recent (tag, value) pairs (after the tags get committed) to the back-end layer. Also the garbage collection described here eliminates values corresponding to older write operations which might have failed during the execution (and thus, which will not get a chance to get garbage collected via the second option which we describe below). 
				
				\vspace{0.05in}
				\emph{\underline{Internal \writetoLtwo~operation}:}
				The server computes the coded elements $\{c_{n_1 + 1}, \ldots, c_{n_1+n_2}\}$ corresponding to value $v_{in}$ and sends $(t_{in}, c_{n_1+i})$ to server $s_{n_1 + i}$ in $\mathcal{L}_2$, $1 \leq i \leq n_2$. In our algorithm, each server in $\mathcal{L}_2$ stores coded data corresponding to exactly one tag at any point during the execution. A server in $\mathcal{L}_2$ that receives (tag, coded-element) pair $(t, c)$ as part of an internal \writetoLtwo~operation replaces the local pair (tag, coded-element) pair $(t_{\ell}, c_{\ell})$ with the incoming one if $t > t_{\ell}$. The \writetoLtwo~operation initiated  by server $s \in \mathcal{L}_1$ terminates after it receives acknowledgments from $f_1 + d$ servers in $\mathcal{L}_2$. Before terminating, the server also garbage collects the pair $(t_{in}, v_{in})$ from its list.

				A pictorial illustration of the events that occur as a result of an initiation of a \PutData~phase by a writer is shown in Fig. \ref{fig:writer}.

			\setcounter{figure}{3}    
			\begin{figure*}[!ht]
				\centering
				\begin{minipage}[b]{0.48\textwidth}
					\includegraphics[width=\textwidth]{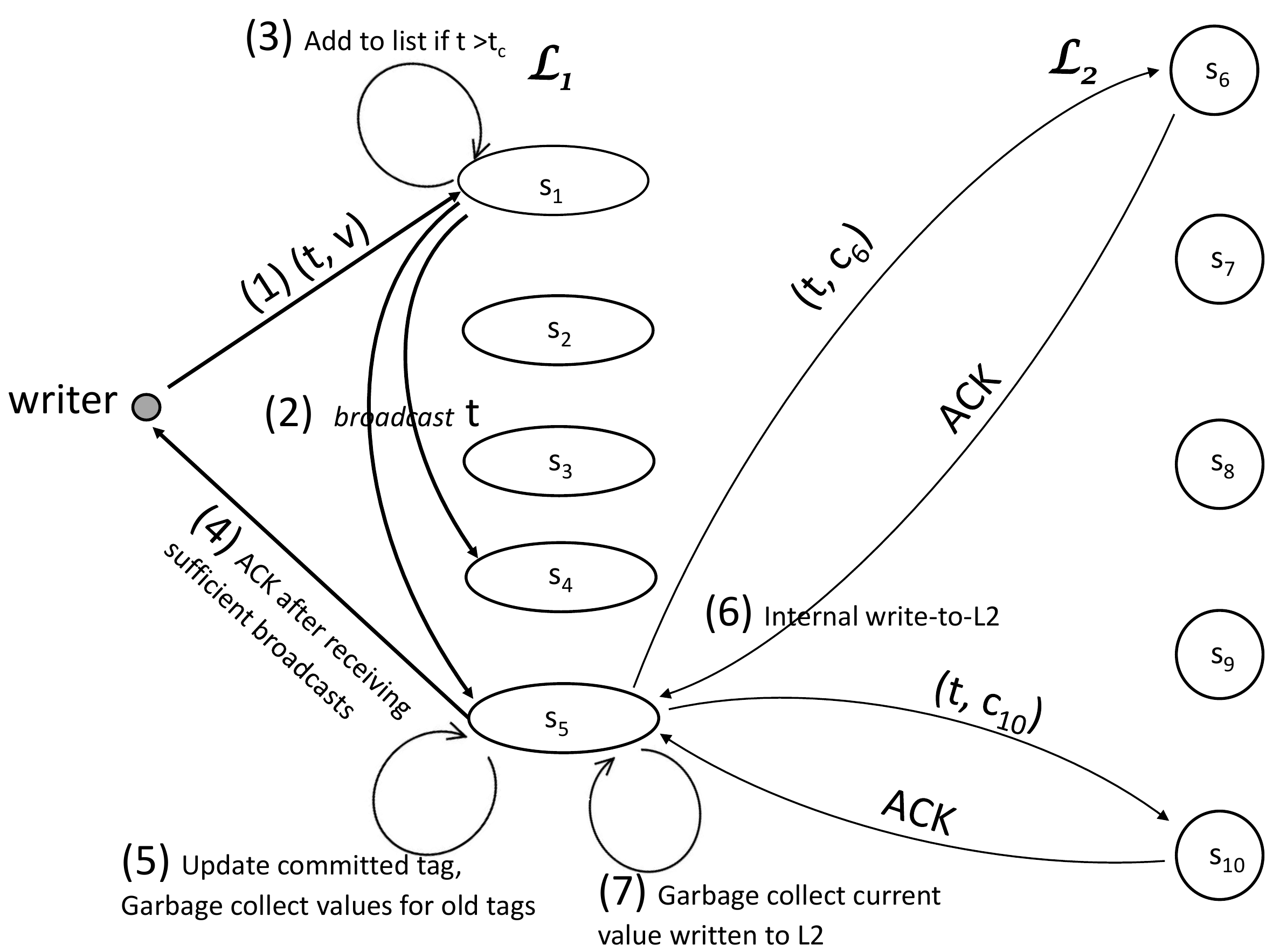}
					\caption{An illustration of the events that occur as a result of an initiation of a \PutData~phase by a writer. The illustration is only representative, and does not cover all possible executions. In this illustration, the steps occur in the order $(1), (2)$ and so on. Steps $(5), (6), (7)$ occur after sending ACK to the writer, and hence does not affect the termination point of the write operation. }
					\label{fig:writer}
				\end{minipage}
				\hfill
				\begin{minipage}[b]{0.48\textwidth}
					\includegraphics[width=\textwidth]{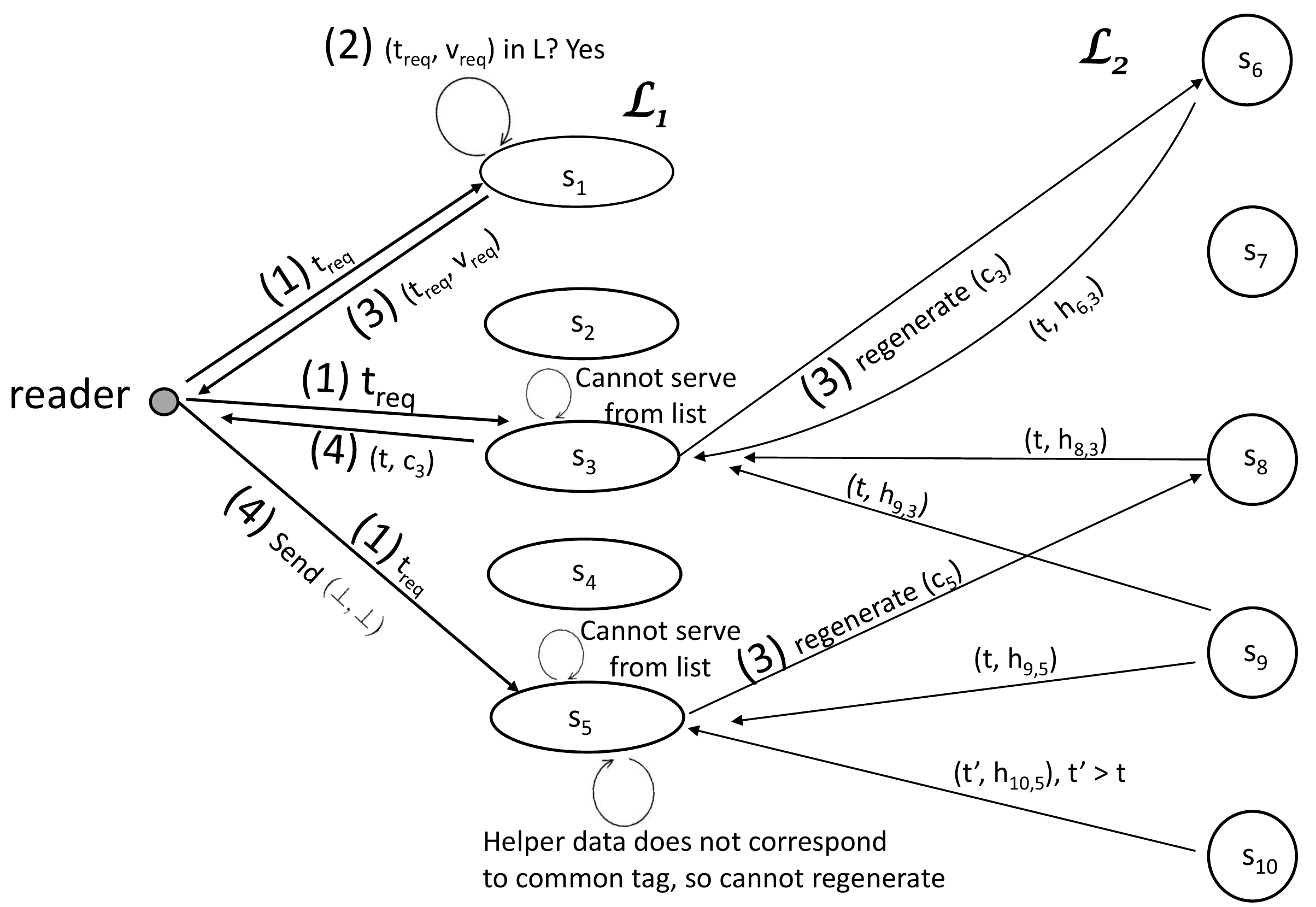}
					\caption{An illustration of the events that occur as a result of an initiation of a \GetData~phase by a reader. Once again, the illustration is only representative, and does not cover all possible executions. Server $s_1$ responds to the reader using content from local list, server $s_3$ regenerates successfully, server $s_5$ fails to regenerate successfully.                                                                                                                                             
						 } \label{fig:reader}
				\end{minipage} 
				\algrule
			\end{figure*}

				\subsection{Read operation}
				The idea behind the read operation is that the reader gets served (tag, value) pairs from temporary storage in $\mathcal{L}_1$, if it overlaps with concurrent write or internal \writetoLtwo~operations. If not, servers in $\mathcal{L}_1$ regenerate (tag, coded-element) pairs via \readfromLtwo~operations, which are then sent to the reader. In the latter case, the reader needs to decode the value $v$ using the code $\mathcal{C}_1$. A read operation consists of three phases.  During the first phase \GetCommitedTag, the reader identifies the minimum tag, $t_{req}$, whose corresponding value it can return at the end of the operation. Towards this, the reader collects committed tags from $f_1 + k$ servers in $\mathcal{L}_1$, and computes the requested tag $t_{req}$ as the maximum of these $f_1 + k$ committed tags.
				
				During the second \GetData~phase, the reader sends $t_{req}$ to all the servers in $\mathcal{L}_1$, awaits responses from $f_1 + k$ distinct servers such that $1)$ at least one of the responses contains a tag-value pair, say $(t_r, v_r), t_r \geq t_{req}$ or $2)$ at least $k$ of the responses contain coded elements corresponding to some fixed tag, say  $t_r$ such that $t_r \geq t_{req}$. In the latter case, the reader uses the code $\mathcal{C}_2$ to decode the value $v_r$ corresponding to tag $t_r$. If more than one candidate is found for the (tag, value) pair that can be returned, the reader picks the pair corresponding to the maximum tag. From the servers' point of view, a server $s \in \mathcal{L}_1$ upon reception of the \GetData~request checks if either $(t_{req}, v_{req})$ or $(t_c, v_c), t_c > t_{req}$ is in its list; in this case, $s$ responds immediately to the reader with the corresponding pair. Otherwise, $s$ adds the reader to its list $\Gamma$ of outstanding readers, initiates an internal \readfromLtwo~operation in which $s$ attempts to regenerate a tag-coded data element pair $(t', c'_{s}), t' \geq t_{req}$ via a repair process taking help from servers in $\mathcal{L}_2$. Towards this, the server $s$ contacts all servers in $\mathcal{L}_2$, and each server $\bar{s} \in \mathcal{L}_2$, using its state variable $(\bar{t}, \bar{c}_s)$, computes and sends the helper data $(\bar{t}, \bar{h}_s)$ back to the server $s$. We note that the MBR code that we use (from \cite{prod_matrix}) has the property that $\bar{h}_s$ can be uniquely computed given $\bar{c}_s$ and the id of the server $s$ which invokes the \readfromLtwo~operation. The server $s$ waits for $d + f_2$ responses, and if at least $d$ responses correspond to a common tag, say $t', t' \geq t_{req}$, regenerates the pair $(t', c')$, and sends  $(t', c')$ back to the reader. It is possible that regeneration fails, and this can happen in two ways: $1)$ the server $s$ regenerates a pair $(t'', c'')$, however $t'' < t_{req}$, $2)$ among the $f_2 + d$ responses received by the server $s$, there is no subset of $d$ responses corresponding to a common tag. In our algorithm, if regeneration from $\mathcal{L}_2$ fails, the server $s$ simply sends $(\perp, \perp)$ back to the reader. The reader interprets the response $(\perp, \perp)$ as a sign of unsuccessful regeneration.  We note that irrespective of if regeneration succeeds or not, the server does not remove the reader from its list of outstanding readers. In the algorithm, we allow the server $s$ to respond to a registered reader with a tag-value pair, during the \broadcastResp~action as explained earlier. It is possible that while the server awaits responses from $\mathcal{L}_2$ towards regeneration, a new tag $t$ gets committed by $s$ via the \broadcastResp~action; in this case, if $t \geq t_c$, server $s$ sends $(t, v)$ to $r$, and also unregisters $r$ from its outstanding reader list. A pictorial illustration of the events that occur as a result of an initiation of a \GetData~phase by a writer is shown in Fig. \ref{fig:reader}.
				
				In the third phase \PutTag, the reader \emph{writes-back}  tag $t_r$ corresponding to $v_r$, and ensures that at least $f_1 + k$ servers in $\mathcal{L}_1$ have their committed tags at least as high as $t_r$, before the read operation completes. However, the value $v_r$ is not written back in this third phase, and this is important to decrease the read cost. When a server $s \in  \mathcal{L}_1$ receives the \PutTag~request for tag $t_r$, it checks if $(t_r, v_t)$ is in its list. In this case, the server thinks of the \PutTag~request simply as a proxy for having encountered the event $commitCounter[t_r] = f_1 + k$ during the \broadcastResp~phase, and carries out all the steps that it would have done during the the \broadcastResp~phase (except sending an ACK to the writer). However, if the server sees the tag $t_r$ for the first time during the execution, it still updates its committed tag to $t_r$, and simply adds $(t_r,\perp)$ to its list.  Further, the server carries out a sequence of steps similar to the case when $(t_r, v_r) \in L$ (except initiating \writetoLtwo) before sending ACK to reader. The third phase also helps in unregistering the reader from the servers in $\mathcal{L}_1$.

				\section{Properties of the  Algorithm} \label{sec:propeties}
				We state several interesting properties of the \calc~algorithm. These will be found useful while proving the liveness and atomicity properties of the algorithm. We let $S_a \subset \mathcal{L}_1, |S_a| = f_1 + k$ to denote the set of $f_1 + k$ servers in $\mathcal{L}_1$ that never crash fail during the execution. 
				The following lemmas are only applicable to servers that are alive at the concerned point(s) of execution appearing in the lemmas.  
				
				For every operation $\pi$ in $\Pi$ corresponding to a non-faulty reader or writer, we associate a $(tag, value)$ pair that we denote as $(tag(\pi),$ $value(\pi))$. For a write operation $\pi$,  we define the $(tag(\pi), value(\pi))$ pair as the message $(t_w, v)$ which the writer sends in the {\PutData} phase. If $\pi$ is a read, we define the $(tag(\pi), value(\pi))$ pair as $(t_{r}, v)$ where $v$ is the value that gets returned, and $t_{r}$ is the  associated tag. We also define tags, in a similar manner for those failed write operations that at least managed to complete the first round of the write operation. This is simply the tag $t_w$ that the writer would use in {\PutData} phase, if it were alive. In our discussion, we ignore writes that failed before completion of the first round. 
				
				For any two points $T_1, T_2$ in an execution of \calc, we say $T_1 < T_2$ if $T_1$ occurs earlier than $T_2$ in the execution. The following three lemmas describe properties of committed tag $t_c$, and tags in the list.
				
				\begin{lemma} [{\bf Monotonicity of committed tag}]\label{lem:lcasb_prop1}
					Consider any two points $T_1$ and $T_2$ in an execution of $LDS$,  such that $T_1 < T_2$. Then, for any server $s \in \mathcal{L}_1$, $s.tc|_{T_1} \leq s.tc|_{T_2}$.
				\end{lemma}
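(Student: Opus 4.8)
The plan is to prove this as a straightforward \emph{step invariant} on the transition relation of $LDS$: the quantity $s.t_c$ is a local state variable of the server $s$, it is modified only by discrete actions of $s$, and I will show that no single action can decrease it. Following the I/O-automaton view of an execution as an alternating sequence of states and actions (as set up in the paragraph just before the lemma), it then suffices to argue that across every single transition $\sigma \to \sigma'$ the value of $s.t_c$ is non-decreasing; monotonicity for the arbitrary pair $T_1 < T_2$ follows by a trivial induction on the number of transitions separating them.

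First I would fix a server $s \in \mathcal{L}_1$ and scan its protocol (Fig.~\ref{fig:lcas-edge-servers}) for every assignment to $t_c$. There are exactly two: one in the \broadcastResp~action and one in the \PutTagResp~action. In both cases the assignment $t_c \leftarrow t_{in}$ sits inside the guard \texttt{if} $t_{in} > t_c$, so whenever the write actually fires we have $t_{in} > s.t_c$ in the pre-state and $s.t_c = t_{in}$ in the post-state; hence the value strictly increases, and in particular does not decrease. I would also explicitly note that $s.t_c$ is initialized to $t_0$ and that every other action of $s$ --- \GetTagResp, \PutDateResp, \WriteAckResp, \GetCommitedTagResp, \GetDataResp, and the internal \writetoLtwo/\readfromLtwo~handlers --- leaves $t_c$ untouched, so those transitions preserve its value exactly. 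Because ``$>$'' is the total order on $\mathcal{T}$ defined earlier in Section~\ref{sec:algo}, all of these comparisons and the conclusion $s.t_c|_{T_1} \leq s.t_c|_{T_2}$ are well defined.

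Combining the two cases, every transition satisfies $s.t_c|_{\sigma'} \geq s.t_c|_{\sigma}$. Taking the finite chain of intermediate states between $T_1$ and $T_2$ and applying transitivity of $\leq$ on the totally ordered set $\mathcal{T}$ yields the claim. The only real work here is \emph{exhaustiveness}: I must be certain I have caught every place the code writes to $t_c$, including the \PutTagResp~branch that handles a previously-unseen tag, where one might fear a hidden reset; inspection shows that even in that branch the sole update to $t_c$ is the guarded $t_c \leftarrow t_{in}$, so no decrease is possible. Since no global reasoning, concurrent-interleaving analysis, or quorum-intersection argument is needed, I expect this to be the easiest of the three ``tag'' lemmas in this group, serving mainly as a building block for the stronger ordering properties that follow.
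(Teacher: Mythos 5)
Your proposal is correct and takes essentially the same route as the paper: both identify \broadcastResp~and \PutTagResp~as the only actions that modify $t_c$, observe that each update is guarded by $t_{in} > t_c$, and conclude that the committed tag never decreases. Your version merely makes the step-invariant/induction framing and the exhaustiveness check over the remaining actions explicit, which the paper leaves implicit.
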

				
				\begin{lemma}[{\bf Garbage collection of older tags}]\label{lem:tctag}
					For any server $s \in \mathcal{L}_1$, at any point $T$ in an execution of \calc, if $(t, v) \in s.L|_T$, we have $t \geq s.t_c|_T$.
				\end{lemma}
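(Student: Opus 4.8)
The plan is to prove the statement as an invariant by induction on the actions of the execution, exploiting the fact that each action at an $\mathcal{L}_1$ server is executed atomically as a single indivisible block, per the I/O-automaton convention stated above. Throughout I read $(t,v) \in s.L|_T$ as asserting that $L$ contains a pair whose value field is a genuine object value $v \in \mathcal{V}$ (i.e. $v \neq \perp$); garbage-collected pairs of the form $(t,\perp)$ are explicitly permitted to have $t < s.t_c$, so the claim can only concern non-$\perp$ entries. For the base case, the initial state sets $s.L = \{(t_0,\perp)\}$ and $s.t_c = t_0$, so there is no non-$\perp$ pair and the invariant holds vacuously.

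For the inductive step I would assume the invariant holds at the point $T_1$ immediately before some action and show it is restored at the point $T_2$ immediately after. The only two ways the invariant can be broken are (a) inserting a non-$\perp$ pair $(t,v)$ with $t < s.t_c$, or (b) advancing $s.t_c$ past the tag of some surviving non-$\perp$ pair. I would then dispatch the $\mathcal{L}_1$-server actions case by case. In \PutDataResp a non-$\perp$ pair $(t_{in},v_{in})$ is added only inside the branch guarded by $t_{in} > t_c$, and $t_c$ is not changed in this action, so the new pair satisfies $t_{in} > s.t_c|_{T_2}$, ruling out (a). In \WriteAckResp the only effect is to overwrite some $(t,*)$ with $(t,\perp)$, which neither inserts a non-$\perp$ pair nor changes $t_c$. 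The remaining actions \GetTagResp, \GetCommitedTagResp, \GetDataResp, \readfromLtwo, and \readfromLtwocomplete neither insert non-$\perp$ pairs into $L$ nor modify $t_c$, so they preserve the invariant trivially.

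The interesting cases are \broadcastResp and \PutTagResp, where $t_c$ may advance to a new value $t_{in}$, which is precisely concern (b). Here I would point to the garbage-collection loop that the same action executes immediately after the assignment $t_c \leftarrow t_{in}$ — namely, for each $(t,*) \in L$ with $t < t_c$, replace $(t,*)$ by $(t,\perp)$ — which reads the updated $t_c$ and therefore strips the value from every pair whose tag is smaller than the new committed tag. Consequently every surviving non-$\perp$ pair has tag $\geq s.t_c|_{T_2}$. In \PutTagResp the only freshly inserted pair is $(t_c,\perp)$, which is a $\perp$-pair and anyway has tag equal to $t_c$, so it too is harmless.

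The main obstacle, and the place where care is genuinely needed, is the atomicity argument in the \broadcastResp and \PutTagResp cases: the conclusion hinges on the $t_c$-update and the garbage-collection loop belonging to one and the same action, and on the loop reading the already-updated value of $t_c$ rather than its stale value. I would verify from the pseudocode that no externally visible state is exposed between these two steps, so that there is no intermediate point at which $t_c$ has advanced while an out-of-date non-$\perp$ pair still lingers in $L$; this is exactly what the single-block-per-action convention guarantees, and it is what makes the invariant hold at every point $T$ of the execution rather than merely at action boundaries of a coarser granularity.
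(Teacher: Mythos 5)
Your proposal is correct and takes essentially the same route as the paper's proof: the paper likewise observes that non-$\perp$ pairs enter $L$ only via \PutDataResp{} under the guard $t > t_c$, and that $t_c$ advances only in \broadcastResp{} or \PutTagResp{}, where the same action immediately replaces every $(t,v)$ with $t < t_c$ by $(t,\perp)$. Your version is merely a more explicit, action-by-action rendering of that invariant argument, including the (correct) reading that the claim concerns only non-$\perp$ entries.
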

				
				\begin{lemma}[{\bf Persistence of tags corresponding to completed operations}]\label{lem:maxL_geq_tag}
					Consider any successful write or read operation $\phi$ in an execution of $LDS$, and let $T$ be any point in the execution after $\phi$ completes. For any set $S'$ of $f_1 + k$ servers in $\mathcal{L}_1$ that are non-faulty at $T$, there exists $s \in  S'$ such that $s.t_c|_T \geq tag(\phi)$ and $\max \{ t: (t, *) \in s.L|_T \}  \geq tag(\phi)$.
				\end{lemma}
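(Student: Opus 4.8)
The plan is to reduce both assertions to a single fact about committed tags and then finish with a quorum-intersection argument. I would first record an auxiliary invariant: for every $\mathcal{L}_1$-server $s$ and every point $T$, $\max\{t : (t,*) \in s.L|_T\} \geq s.t_c|_T$. Granting this, it suffices to produce one server $s \in S'$ with $s.t_c|_T \geq tag(\phi)$; the committed-tag half of the lemma is then immediate, and the list half follows because $\max\{t : (t,*) \in s.L|_T\} \geq s.t_c|_T \geq tag(\phi)$.

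To prove the invariant I would inspect every action that touches $t_c$ or $L$. The committed tag is raised only in the \broadcastResp action, where the assignment $t_c \leftarrow t_{in}$ sits inside the branch guarded by $(t_{in},*) \in L$, and in the \PutTagResp action, where after $t_c \leftarrow t_{in}$ the code either already has $(t_{in},*) \in L$ or explicitly inserts $(t_{in},\perp)$ into $L$. Hence at the instant $t_c$ is assigned, its new value is present as a tag in $L$. No action ever deletes a tag: the two garbage-collection steps and the \WriteAckResp action only overwrite a pair $(t,v)$ by $(t,\perp)$, so the tag survives. Combining tag-persistence with the monotonicity of $t_c$ (Lemma \ref{lem:lcasb_prop1}) yields the invariant at all points.

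Next I would argue that completion of $\phi$ drives an entire quorum to committed tag $tag(\phi)$. Let $W$ be the set of $f_1+k$ distinct $\mathcal{L}_1$-servers whose acknowledgments let $\phi$ return: those awaited in the \PutData phase for a write, and in the \PutTag phase for a read. If $\phi$ is a write, a server acknowledges either in \PutDataResp, which is reached only when $t_{in} = tag(\phi) \leq t_c$, or in \broadcastResp, where after acknowledging the server's committed tag is at least $tag(\phi)$ (the guard $t_{in} > t_c$ either fires and sets $t_c = tag(\phi)$, or fails because $t_c$ already equals $tag(\phi)$, since Lemma \ref{lem:tctag} forces $tag(\phi) \geq t_c$ whenever $(tag(\phi),*) \in L$). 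If $\phi$ is a read, the decisive acknowledgments are those of \PutTagResp, which sets $t_c \geq t_r = tag(\phi)$ before replying. In every case the acknowledging server has $t_c \geq tag(\phi)$ at the moment it replies, and by Lemma \ref{lem:lcasb_prop1} it retains $t_c \geq tag(\phi)$ at all later points at which it is alive.

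Finally I would intersect the quorums. Since $|W| = |S'| = f_1 + k$ and $|\mathcal{L}_1| = n_1 = 2f_1 + k$, we get $|W \cap S'| \geq 2(f_1+k) - n_1 = k \geq 1$. Picking any $s \in W \cap S'$, membership in $S'$ makes $s$ non-faulty at $T$, so monotonicity gives $s.t_c|_T \geq tag(\phi)$, and the invariant supplies $\max\{t : (t,*) \in s.L|_T\} \geq tag(\phi)$, proving both halves for the same $s$. I expect the main obstacle to be the acknowledgment-path analysis for the write operation: one must verify that \emph{both} the \PutDataResp fast path and the \broadcastResp commit path leave the server with committed tag at least $tag(\phi)$, invoking Lemma \ref{lem:tctag} to rule out the reading in which a server acknowledges $tag(\phi)$ while its committed tag sits strictly below it. The tag-persistence bookkeeping underlying the invariant is routine but must be checked against every $L$-modifying action.
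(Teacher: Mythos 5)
Your proposal is correct and follows essentially the same route as the paper's proof: intersect the $f_1+k$ acknowledging servers with $S'$ to obtain a common server $s$, observe that $s.t_c \geq tag(\phi)$ at the moment $s$ acknowledges (via \PutDataResp, \broadcastResp, or \PutTagResp), propagate this to $T$ by Lemma \ref{lem:lcasb_prop1}, and conclude the list half from the invariant that $s.L$ always contains a pair whose tag equals $s.t_c$. The only difference is that you spell out the acknowledgment-path case analysis and the list invariant in more detail than the paper, which simply declares them straightforward from inspection of the algorithm.
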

				
				The following lemma shows that an internal \readfromLtwo~operation respects previously completed internal \writetoLtwo~operations. Our assumption that $f_2 < n_2/3$ is used in the proof of this lemma.
				
				\begin{lemma}[{\bf Consistency of Internal Reads with respect to Internal Writes}] \label{lem:lcasb_prop5}
					Let $\sigma_2$ denote a successful internal \writetoLtwo$(t, v)$ operation executed by some server in $\mathcal{L}_1$. Next, consider an internal \readfromLtwo \ operation $\pi_2$, initiated after the completion of $\sigma_2$, by a server $s \in \mathcal{L}_1$ such that a tag-coded-element pair, say $(t', c')$ was successfully regenerated by the  server $s$. Then, $t' \geq t$; i.e., the regenerated tag is at least as high as what was written before the read started. 
				\end{lemma}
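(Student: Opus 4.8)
The plan is to run a quorum–intersection argument between the set of $\mathcal{L}_2$ servers that acknowledged $\sigma_2$ and the set that responded to $\pi_2$, and to combine it with the fact that each $\mathcal{L}_2$ server stores a monotonically non-decreasing tag. First I would fix notation for the two quorums. Since $\sigma_2$ is a successful \writetoLtwo$(t,v)$, by the $\mathcal{L}_1$ code it terminated only after receiving acknowledgments from a set $W \subseteq \mathcal{L}_2$ of $n_2 - f_2 = f_2 + d$ servers; similarly $\pi_2$ regenerated $(t', c')$ only after collecting responses from a set $R \subseteq \mathcal{L}_2$ of $n_2 - f_2 = f_2 + d$ servers. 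As both $W$ and $R$ are subsets of the $n_2 = 2f_2 + d$ servers of $\mathcal{L}_2$, inclusion–exclusion gives $|W \cap R| \geq |W| + |R| - n_2 = 2(f_2 + d) - (2f_2 + d) = d$.

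Next I would track what each server in $W \cap R$ reports to $\pi_2$. From the $\mathcal{L}_2$ server code, a server overwrites its stored pair $(t_\ell, c_\ell)$ with an incoming $(t_{in}, c_{in})$ only when $t_{in} > t_\ell$; hence the stored tag at any $\mathcal{L}_2$ server is non-decreasing throughout the execution. Every $\bar{s} \in W$ acknowledged $\sigma_2$, so at the moment it did so its stored tag was $\geq t$. Because $\pi_2$ is initiated only after $\sigma_2$ completes, every $\bar{s} \in W \cap R$ processes the \queryCodeElem request of $\pi_2$ at a point strictly later than its acknowledgment of $\sigma_2$; by monotonicity the tag $\bar{s}$ includes in its helper response to $\pi_2$ is $\geq t$. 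Thus at least $d$ of the responses gathered by $\pi_2$ carry a tag $\geq t$.

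Finally I would argue about the regeneration step, which is where $f_2 < n_2/3$ enters. The regeneration that produced $(t', c')$ used $d$ responses sharing the common tag $t'$. Suppose for contradiction that $t' < t$. None of the $\geq d$ responses coming from $W \cap R$ can be among these $d$ responses, since each of the former carries a tag $\geq t > t'$. Hence all $d$ responses with tag $t'$ lie in $R \setminus W$. But $|R \setminus W| = |R| - |R \cap W| \leq (f_2 + d) - d = f_2$, while the hypothesis $f_2 < n_2/3$ gives $3f_2 < 2f_2 + d$, i.e., $f_2 < d$. So $|R \setminus W| \leq f_2 < d$, contradicting the existence of $d$ responses with common tag $t'$ inside $R \setminus W$. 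Therefore $t' \geq t$, as claimed.

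The main obstacle is this last step: it is not enough to know that $\geq d$ responses have tag $\geq t$, because those tags could be distinct (e.g., produced by concurrent higher-tag internal \writetoLtwo~operations) and need not by themselves form a decodable group of $d$ matching responses. The crux is to show that the particular group of $d$ matching responses actually selected for regeneration cannot have a common tag below $t$; this is exactly what the counting bound $|R \setminus W| \leq f_2 < d$ delivers, and it is the only place where the stronger resiliency assumption $f_2 < n_2/3$ (rather than merely $f_2 < n_2/2$) is required.
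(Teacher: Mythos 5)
Your proof is correct and follows essentially the same route as the paper's: the quorum intersection $|W\cap R|\ge d$, the monotonicity of stored tags at $\mathcal{L}_2$ servers, and the bound $|R\setminus W|\le f_2 < d$ (derived from $f_2 < n_2/3$) forcing at least one of the $d$ matching responses to come from the intersection. The only cosmetic difference is that you phrase the final counting step as a contradiction, whereas the paper states it directly.
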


				The following three lemmas are central to prove the liveness of read operations. 
				
				\begin{lemma}[{\bf If internal \readfromLtwo~operation fails}]\label{lem:lcasb_prop3}
					Consider an internal \readfromLtwo \ operation initiated at point $T$ of the execution by a server $s_1 \in \mathcal{L}_1$ such that $s_1$ failed to regenerate any tag-coded-element pair based on the responses. Then, there exists a point  $\widetilde{T} > T$ in the execution  such that the following statement is true:  There exists a subset $S_b$ of  $S_a$  such that $|S_b| = k$, and $\forall s' \in S_b$  $(\widetilde{t}, \widetilde{v}) \in s'.L|_{\widetilde{T}}$, where $\widetilde{t} = \max_{s \in \mathcal{L}_1}s.t_c|_{\widetilde{T}}$.
				\end{lemma}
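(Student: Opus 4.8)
The plan is to work with three well-chosen points of the execution and two quorum-intersection arguments, one in each layer. Let $T'$ be the point at which $s_1$ has collected all $f_2+d$ responses making up the failed regeneration, let $\widetilde{t}=\max_{s\in\mathcal{L}_1}s.t_c|_{T'}$ be the highest committed tag at that point, and let $\widetilde{v}$ be its value. Let $T_0\le T'$ be the first point at which \emph{any} server commits $\widetilde{t}$ (i.e.\ sets $s.t_c\leftarrow\widetilde{t}$), and let $T_1$ be the first point at which \emph{any} internal \writetoLtwo$(\widetilde{t},\widetilde{v})$ completes. The first intersection is in $\mathcal{L}_1$: committing $\widetilde{t}$ requires its commit counter to reach $f_1+k$, so at least $f_1+k$ servers received the \PutData$(\widetilde{t},\widetilde{v})$ message and broadcast \reqCommitTag; since $|S_a|=f_1+k$ and $n_1=2f_1+k$, at least $k$ of these servers lie in $S_a$. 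Because $T_0$ is the first commit of $\widetilde{t}$, Lemma~\ref{lem:lcasb_prop1} implies the maximum committed tag is strictly below $\widetilde{t}$ before $T_0$, so each of these $k$ servers had $t_c<\widetilde{t}$ when it processed \PutData and therefore inserted $(\widetilde{t},\widetilde{v})$ into its list $L$ by time $T_0$.

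Next I would use the failed regeneration to control $T_1$, arguing that $T_1>T$: no internal write of $\widetilde{t}$ could have completed before $s_1$ even started its \readfromLtwo. This is the intersection argument underlying Lemma~\ref{lem:lcasb_prop5}. If some \writetoLtwo$(\widetilde{t},\widetilde{v})$ had completed before $T$, then $f_2+d$ servers in $\mathcal{L}_2$ would already store a tag $\ge\widetilde{t}$; by monotonicity no tag in $\mathcal{L}_2$ exceeds $\widetilde{t}$ through $T'$, so these servers store exactly $\widetilde{t}$ and retain it. Since $s_1$'s $f_2+d$ responders and these $f_2+d$ servers are two subsets of the $n_2=2f_2+d$ servers, they meet in at least $d$ servers (the same $\mathcal{L}_2$ quorum structure that makes Lemma~\ref{lem:lcasb_prop5} work, relying on $f_2<n_2/3$), each of which answers $s_1$ with tag $\widetilde{t}$. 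Then $s_1$ would see $d$ responses sharing the common tag $\widetilde{t}$ and the regeneration step would \emph{succeed}, contradicting the hypothesis. Hence $T_1>T$. (If $\widetilde{t}$ is first committed after $T$, i.e.\ $T_0>T$, one gets $T_1>T_0>T$ for free, since a write cannot complete before it is initiated at its commit.)

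Finally I would place $\widetilde{T}$ in the window $\big(\max(T,T_0),\ \min(T_1,T')\big]$, which is nonempty because $T_0\le T'$, $T_0<T_1$, and $T<\min(T_1,T')$ from the previous step. For any such $\widetilde{T}$, the tag $\widetilde{t}$ has already been committed (as $\widetilde{T}\ge T_0$) while no larger tag is committed through $T'\ge\widetilde{T}$, so Lemma~\ref{lem:lcasb_prop1} forces $\max_{s}s.t_c|_{\widetilde{T}}=\widetilde{t}$. The $k$ servers of $S_a$ identified above still hold the \emph{value} $\widetilde{v}$ paired with $\widetilde{t}$ at $\widetilde{T}$: by Lemma~\ref{lem:tctag} the entry is not yet discarded as stale, and the only two ways $(\widetilde{t},\widetilde{v})$ is replaced by $(\widetilde{t},\bot)$ are (i) garbage collection triggered by committing a strictly larger tag, which does not occur before $T'$, and (ii) completion of that server's own \writetoLtwo$(\widetilde{t},\widetilde{v})$, which cannot occur before $T_1>\widetilde{T}$. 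Choosing any $k$ of them as $S_b\subseteq S_a$ yields the statement.

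The main obstacle is the corner case in which $\widetilde{t}$ first enters some server's $t_c$ through a reader's \PutTag write-back rather than through a genuine write commit; then the $f_1+k$-broadcast count need not hold and the $\mathcal{L}_1$ intersection producing the $k$ value-holders must be re-derived. There one has to trace $\widetilde{t}$ back to the read $\phi$ that returned it and argue, using Lemma~\ref{lem:maxL_geq_tag} together with the fact that $\phi$ could only have returned $\widetilde{v}$ after encountering it in $\mathcal{L}_1$, that the value $\widetilde{v}$ (and not merely the tag $\widetilde{t}$) is still resident at $k$ servers of $S_a$. Managing this interaction between the \PutTag path and garbage collection, while keeping the timing bookkeeping for $T_0,T_1,T'$ mutually consistent, is the delicate part; the two intersection counts and the monotonicity and garbage-collection lemmas are otherwise routine.
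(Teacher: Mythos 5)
Your main line is the same as the paper's: a quorum intersection in $\mathcal{L}_2$ showing that no \writetoLtwo$(\widetilde{t},\widetilde{v})$ can have completed before the failed regeneration began (otherwise $d$ common responses for $\widetilde{t}$ would have forced it to succeed), a quorum intersection in $\mathcal{L}_1$ extracting $k$ servers of $S_a$ that inserted $(\widetilde{t},\widetilde{v})$ into their lists before the commit counter for $\widetilde{t}$ could reach $f_1+k$, and the observation that the pair is erased only by committing a strictly larger tag or by completing a \writetoLtwo, neither of which happens by the chosen $\widetilde{T}$. All of that is sound, modulo a harmless endpoint issue: if $T_0=T'$ your half-open window $\big(\max(T,T_0),\min(T_1,T')\big]$ is empty, though taking $\widetilde{T}=T_0$ still works.

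The genuine gap is the case you defer to your final paragraph: the first commit of $\widetilde{t}$ might occur via \PutTagResp, in which case no commit counter ever reached $f_1+k$ and your $\mathcal{L}_1$ intersection has nothing to intersect. You flag this but do not close it, and the route you sketch (trace $\widetilde{t}$ back to the read that wrote it back and invoke Lemma~\ref{lem:maxL_geq_tag}) is not the right one --- that lemma only guarantees persistence of \emph{tags} in lists, not of values, so it cannot by itself produce $k$ servers still holding $\widetilde{v}$. The paper instead proves, as a separate intermediate lemma (Lemma~\ref{lem:intermediate}), that this case cannot arise: the \emph{earliest} commit of the maximal committed tag is necessarily via \broadcastResp. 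The argument is by contradiction --- if a reader's \PutTag~were the first to commit $\widetilde{t}$, that reader must have obtained $\widetilde{t}$ during its \GetData~phase either as a full value from some $\mathcal{L}_1$ server (every code path that sends a full value to a reader presupposes that the sender has already committed a tag at least $\widetilde{t}$, or that $\widetilde{t}$ equals the reader's $t_{req}$, itself a previously committed tag) or as $k$ regenerated coded elements (which presupposes an earlier \writetoLtwo$(\widetilde{t},\widetilde{v})$, hence an earlier commit); either way $\widetilde{t}$ was committed before $T_0$, a contradiction. You need this lemma, or an equivalent, to legitimize your first intersection argument; with it, your proof goes through essentially as written.
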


				\begin{lemma}[{\bf If internal \readfromLtwo~operation regenerates a tag older than the request tag}]\label{lem:lcasb_prop4}
					Consider an internal \readfromLtwo \ operation initiated at point $T$ of the execution by a server $s_1 \in \mathcal{L}_1$ such that $s_1$ only manages to regenerate $(t, c)$ based on the responses, where $t < t_{req}$. Here $t_{req}$ is the tag sent by the associated reader during the \GetData~phase. Then, there exists a point  $\widetilde{T} > T$ in the execution such that the following statement is true:  There exists a subset $S_b$ of  $S_a$  such that $|S_b| = k$, and $\forall s' \in S_b$  $(\widetilde{t}, \widetilde{v}) \in s'.L|_{\widetilde{T}}$, where $\widetilde{t} = \max_{s \in \mathcal{L}_1}s.t_c|_{\widetilde{T}}$.
				\end{lemma}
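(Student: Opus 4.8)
The plan is to reduce the present statement to the situation already handled for a totally failed regeneration (Lemma~\ref{lem:lcasb_prop3}), whose conclusion is identical, and then re-use that persistence argument; the one new ingredient is to convert ``regenerated a tag $t<t_{req}$'' into the assertion ``no internal \writetoLtwo\ for a tag at least $t_{req}$ has completed by $T$''. First I would record the easy monotonicity facts: the requested tag $t_{req}$ is, by construction of the \GetCommitedTag\ phase, equal to the committed tag $t_c$ of some server at a point preceding $T$; hence by Lemma~\ref{lem:lcasb_prop1} the global maximum committed tag is at least $t_{req}$ at $T$ and at every later point, so $\widetilde t \ge t_{req}$ for the $\widetilde t$ appearing in the conclusion.

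The key step uses Lemma~\ref{lem:lcasb_prop5}. Since the internal \readfromLtwo\ at $T$ did regenerate a pair $(t,c)$ (a \emph{successful} regeneration, merely with a stale tag), any completed \writetoLtwo$(t',\cdot)$ with $t'\ge t_{req}$ occurring before $T$ would, taken together with this regeneration, meet the hypotheses of Lemma~\ref{lem:lcasb_prop5} and force $t\ge t'\ge t_{req}$, contradicting $t<t_{req}$. I would therefore conclude that \textbf{no} internal \writetoLtwo\ for any tag $\ge t_{req}$ has completed at or before $T$. This is exactly the premise on which the persistence argument of Lemma~\ref{lem:lcasb_prop3} rests; I expect that in that lemma the premise is reached by a direct counting argument on $\mathcal{L}_2$ (using $f_2<n_2/3$, since no tag is regenerated there), whereas here the same premise falls out cleanly through Lemma~\ref{lem:lcasb_prop5}.

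With that premise in hand I would take $\widetilde T$ to be the point immediately following $T$ (so that, the regeneration being the only action at $T$, still no back-end write for a tag $\ge t_{req}$ has completed by $\widetilde T$), and set $\widetilde t=\max_{s}s.t_c|_{\widetilde T}\ge t_{req}$. The tag $\widetilde t$ is committed at some alive server, which means $\ge f_1+k$ distinct servers broadcast the commit message for $\widetilde t$, and each such server had received the full \PutData\ message and hence holds the value $\widetilde v$. Since $|S_a|=f_1+k$ and $n_1=2f_1+k$, these $f_1+k$ broadcasters meet $S_a$ in at least $(f_1+k)+(f_1+k)-(2f_1+k)=k$ servers; let $S_b$ be $k$ of them. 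Each $s'\in S_b$ is never-faulty, holds $(\widetilde t,\widetilde v)$ from the moment it receives the \PutData\ message, and can lose the value only through one of the two garbage-collection routes. The first route (overwriting tags $t<t_c$ by $\bot$) cannot touch $\widetilde t$, because $s'$ being never-faulty together with $\widetilde t$ being the global maximum at $\widetilde T$ and Lemma~\ref{lem:lcasb_prop1} forbid $s'.t_c>\widetilde t$; the second route fires only when $s'$'s own \writetoLtwo$(\widetilde t,\cdot)$ completes, which the premise excludes. Hence $(\widetilde t,\widetilde v)\in s'.L|_{\widetilde T}$ for every $s'\in S_b$, which is the claim.

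The main obstacle I anticipate is the bookkeeping ensuring the \emph{value} (not $\bot$) is present at the $k$ servers, rather than merely the tag. The delicate point is that $\widetilde t$ can become a committed tag at a server via the \PutTag\ write-back, which inserts $(\widetilde t,\bot)$ with no value attached; I must argue that the maximum committed tag is nonetheless backed by a genuine value somewhere in $\mathcal{L}_1$. This is where the premise is used a second time: a reader writing back $\widetilde t\ge t_{req}$ could only have obtained $\widetilde t$'s value from $\mathcal{L}_1$ temporary storage, since a completed back-end write of $\widetilde t$ is ruled out, so the commit of $\widetilde t$ is ultimately traceable to a \PutData\ that deposited $\widetilde v$ at the $f_1+k$ broadcasters, and the two garbage-collection exclusions above keep it alive at the $k$ never-faulty ones through $\widetilde T$. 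Threading this traceability carefully while respecting asynchrony is the part that will need the most attention.
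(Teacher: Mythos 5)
Your proposal is correct, and its second half (locating $S_b$ via the $f_1+k$ broadcasters of the commit message for $\widetilde{t}$, intersecting with $S_a$, and excluding both garbage-collection routes) is exactly the paper's argument, which the paper itself delegates to the proof of Lemma \ref{lem:lcasb_prop3}. Where you genuinely diverge is in the first half. The paper takes $\widetilde{T}=T_1$, the earliest point at which some server in the responding set $S_2\subseteq\mathcal{L}_2$ receives the \readfromLtwo\ request, sets $\hat{t}=\max_{s}s.t_c|_{T_1}$, and must then argue \emph{inline} that no \writetoLtwo$(\hat{t},\hat{v})$ completed before $T_1$ --- it explicitly footnotes that Lemma \ref{lem:lcasb_prop5} cannot be applied as a black box because the write may overlap the read. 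By instead taking $\widetilde{T}$ immediately after $T$, you only need to exclude back-end writes completed before the read's \emph{initiation}, which is precisely the hypothesis of Lemma \ref{lem:lcasb_prop5}; this buys you a clean black-box reduction, at the cost of a slightly earlier (but equally valid) witness $\widetilde{T}$ --- the paper's later choice is what it later exploits in the latency analysis. One caution on the ``traceability'' issue you flag at the end: the paper handles it with a separate intermediate lemma (Lemma \ref{lem:intermediate}), and the justification you sketch is not quite right for one sub-case. A reader could in principle obtain $\widetilde{t}$'s value by decoding $k$ regenerated coded elements, and ruling out a \emph{completed} \writetoLtwo$(\widetilde{t},\cdot)$ does not exclude this: regeneration only needs $d$ servers of $\mathcal{L}_2$ to hold tag $\widetilde{t}$, which an initiated-but-incomplete \writetoLtwo\ can already supply. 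The correct exclusion is that \emph{initiating} \writetoLtwo$(\widetilde{t},\cdot)$ requires the initiating server to have committed $\widetilde{t}$ beforehand, contradicting the minimality of the first commit point of $\widetilde{t}$. With that repair, and spelling out that each broadcaster had $t_c<\widetilde{t}$ when it executed \PutDataResp\ (so it actually adds the value to $L$ rather than merely acknowledging), your argument goes through.
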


				\begin{lemma}[{\bf If two Internal \readfromLtwo~operations regenerate differing tags}]\label{lem:lcasb_prop6}
					Consider internal \readfromLtwo \ operations initiated at points $T$ and $T'$ of the execution, respectively by servers $s$ and $s'$ in $\mathcal{L}_1$. Suppose that $s$ and $s'$ regenerate tags $t$ and $t'$ such that $t < t'$. Then, there exists a point  $\widetilde{T} > T$ in the execution such that the following statement is true:  There exists a subset $S_b$ of  $S_a$  such that $|S_b| = k$, and  $\forall s' \in S_b$  $(\widetilde{t}, \widetilde{v}) \in s'.L|_{\widetilde{T}}$, where $\widetilde{t} = \max_{s \in \mathcal{L}_1}s.t_c|_{\widetilde{T}}$.
				\end{lemma}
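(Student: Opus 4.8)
The plan is to exploit the fact that Lemmas~\ref{lem:lcasb_prop3}, \ref{lem:lcasb_prop4} and \ref{lem:lcasb_prop6} share the identical conclusion, and to reduce the present statement to the situation already handled in Lemma~\ref{lem:lcasb_prop4}. There, the request tag $t_{req}$ serves as a \emph{certificate} that a tag at least as large as the regenerated one was genuinely committed in $\mathcal{L}_1$ (being the maximum of $f_1+k$ committed tags gathered during \GetCommitedTag). Here I would let the \emph{larger} regenerated tag $t'$ play exactly that role, and then invoke the common quorum-intersection machinery to locate the $k$ forever-alive servers holding the current maximum committed value.

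First I would certify $t'$. Since $s'$ successfully regenerates $t'$, at the instant it collects its $f_2+d$ responses at least $d$ servers in $\mathcal{L}_2$ store a coded element tagged $t'$; a tag enters $\mathcal{L}_2$ only through an internal \writetoLtwo operation, which a server in $\mathcal{L}_1$ initiates (in the \broadcastResp or \PutTagResp branch) only after setting $t_c \leftarrow t'$ while holding the value $(t',v')$ in its list. Hence $t'$ was committed, with value, at some point $T_c$, and this commit is backed by an $(f_1+k)$-size quorum that held $(t',v')$. Second, I would place this commit after $T$: because $s$ \emph{successfully} regenerates $t$ at $T$ with $t<t'$, the contrapositive of Lemma~\ref{lem:lcasb_prop5} rules out any \writetoLtwo of tag exceeding $t$---in particular that of $t'$---having completed before $T$; combining this with the monotonicity of committed tags (Lemma~\ref{lem:lcasb_prop1}) yields a point $\widetilde T_0 > T$ at which the global maximum $\max_{s\in\mathcal{L}_1} s.t_c \geq t'$. (If $t'$ was already committed before $T$, monotonicity gives this at every point after $T$ directly.)

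Finally I would run the shared core. Let $\widetilde t = \max_{s\in\mathcal{L}_1} s.t_c|_{\widetilde T}$ at a suitable $\widetilde T \geq \widetilde T_0$, witnessed by a committer of $\widetilde t$; its commit supplies an $(f_1+k)$-quorum of servers that installed $(\widetilde t,\widetilde v)$ into $L$ during \PutDateResp. Since $\mathcal{L}_1$ has $n_1 = 2f_1+k$ servers and $|S_a| = f_1+k$, this quorum meets $S_a$ in at least $(f_1+k)+(f_1+k)-(2f_1+k) = k$ servers, which I take to be $S_b$. By Lemma~\ref{lem:tctag} a server in $S_b$ replaces $(\widetilde t,\widetilde v)$ by $(\widetilde t,\perp)$ only when a strictly higher tag is committed there or when its own \writetoLtwo$(\widetilde t)$ completes, and non-committing members of $S_b$ never run \writetoLtwo$(\widetilde t)$; choosing $\widetilde T$ just after the last of the $S_b$ installations but before any tag $>\widetilde t$ is committed keeps $\widetilde t$ the global maximum and $(\widetilde t,\widetilde v)$ uncollected, giving $(\widetilde t,\widetilde v)\in s'.L|_{\widetilde T}$ for all $s'\in S_b$. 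I expect the main obstacle to be precisely this last bookkeeping: exhibiting a \emph{single} point $\widetilde T$ at which all $k$ servers of $S_b$ simultaneously hold the non-$\bot$ value for the current maximum committed tag. The delicate part is reconciling the broadcast-path commits (which carry a value quorum) with the \PutTagResp-path commits (which can raise $t_c$ while inserting only a $\bot$ entry), and using monotonicity together with the $d$-out-of-$\mathcal{L}_2$ offload of $t'$ to guarantee that the maximum committed value one lands on is genuinely the stored value rather than a placeholder.
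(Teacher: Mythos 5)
Your high-level plan --- certify that $t'$ was committed and offloaded to $\mathcal{L}_2$, use the contrapositive of Lemma~\ref{lem:lcasb_prop5} to place that event relative to $T$, and then intersect a commit quorum with $S_a$ --- is essentially the paper's, and your certification and placement steps are sound. But the ``shared core'' you defer to is exactly where the work is, and the two difficulties you flag at the end are genuine unresolved gaps, not bookkeeping. First, your candidate $\widetilde T$ is never pinned down consistently: ``just after the last of the $S_b$ installations'' need not exceed $T$ (the \PutDataResp~installations of $(\widetilde t,\widetilde v)$ can all predate $T$), while if you instead take $\widetilde T\ge\widetilde T_0$ you lose control of the \writetoLtwo~path of garbage collection --- a member of $S_b$ that itself commits $\widetilde t$ does initiate \writetoLtwo$(\widetilde t,\widetilde v)$ and, upon receiving $n_2-f_2$ acks, replaces the value by $\perp$; nothing in your argument prevents this from completing before your $\widetilde T$. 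What is needed, and what the paper arranges, is a $\widetilde T>T$ at which $\widetilde t$ is the global maximum committed tag \emph{and} no \writetoLtwo$(\widetilde t,\widetilde v)$ has yet completed. The paper gets this by anchoring at $T_1$, the earliest arrival of $s$'s request in $\mathcal{L}_2$, and splitting on whether $\hat t=\max_{s}s.t_c|_{T_1}$ exceeds $t$: if $\hat t>t$, take $\widetilde T=T_1$ (a completed \writetoLtwo$(\hat t,\hat v)$ before $T_1$ would force $s$ to regenerate a tag at least $\hat t$); if $\hat t=t$, take $\widetilde T$ to be the earliest commit point of $\widetilde t:=\max_{s}s.t_c|_{T^*}$, where $T^*$ is the earliest commit of $t'$, so that \writetoLtwo$(\widetilde t,\widetilde v)$ has not even been initiated by $\widetilde T$.

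Second, your claim that the commit of $\widetilde t$ ``supplies an $(f_1+k)$-quorum of servers that installed $(\widetilde t,\widetilde v)$'' holds only if that commit occurs via the \broadcastResp~action; a commit via the else-branch of \PutTagResp~raises $t_c$ while inserting only $(\widetilde t,\perp)$ and carries no such quorum. Ruling the latter out for the \emph{first} commit of the maximum committed tag is precisely the paper's intermediate lemma (Lemma~\ref{lem:intermediate}), proved by contradiction on how the reader issuing that \PutTag~could have obtained $\widetilde t$ in the first place. You correctly identify this reconciliation as delicate, but you do not supply the argument, so as written the proposal does not close the proof.
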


				\begin{theorem}[{\bf Liveness}]  \label{thm:lcasb_liveness}
					Consider any well-formed execution of the \calc~algorithm, where at most $f_1 < n_1/2$ and $f_2 < n_2/3$ servers crash fail in layers $\mathcal{L}_1$ and $\mathcal{L}_2$, respectively. Then every operation associated with a non-faulty client completes. 
				\end{theorem}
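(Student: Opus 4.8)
The plan is to prove liveness operation by operation, reducing everything to two counting facts about the quorum sizes together with the structural lemmas already established. Since $n_1 = 2f_1 + k$, at least $n_1 - f_1 = f_1 + k$ servers in $\mathcal{L}_1$ never crash (these are exactly the servers of $S_a$), and since $n_2 = 2f_2 + d$, at least $n_2 - f_2 = f_2 + d$ servers in $\mathcal{L}_2$ never crash. Every client phase waits for exactly $f_1 + k$ responses from $\mathcal{L}_1$, and every internal operation waits for $n_2 - f_2 = f_2 + d$ responses from $\mathcal{L}_2$, so in each case the awaited responses are guaranteed to arrive as long as every targeted non-faulty server eventually takes the matching step.

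First I would dispatch the straightforward phases. The \GetTag{} and \GetCommitedTag{} query phases, and the \PutTag{} write-back phase, each trigger an unconditional response at every non-faulty server in $\mathcal{L}_1$, so they collect their $f_1 + k$ responses and terminate. For \PutData{} the subtlety is that a server sends its ACK either immediately (when $t_{in} < t_c$) or only after its counter reaches $f_1 + k$ in the \broadcastResp{} action; here I would argue that the $\geq f_1 + k$ non-faulty servers receiving the \PutData{} message each broadcast their commit message, and by the property of the broadcast primitive every non-faulty server eventually consumes all of them, so the counter reaches $f_1 + k$. Because a tag is never deleted from $L$ (only its value is replaced by $\bot$), the guard $(t_{in},*) \in L$ still holds and the ACK is sent. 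The internal \writetoLtwo{} and \readfromLtwo{} operations likewise terminate since they await $f_2 + d$ responses that the non-faulty $\mathcal{L}_2$ servers supply.

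The crux is the \GetData{} phase of a read, and here I would invoke Lemmas \ref{lem:lcasb_prop3}--\ref{lem:lcasb_prop6} together with monotonicity (Lemma \ref{lem:lcasb_prop1}). Each of the $f_1 + k$ servers of $S_a$ eventually produces a response: if it already holds a suitable value it answers at once, otherwise its \readfromLtwo{} terminates. I would then proceed by cases. If at least $k$ servers of $S_a$ regenerate coded elements for a common tag $t_r \geq t_{req}$, condition (2) of \GetData{} is met and the reader decodes via $\mathcal{C}_1$. Otherwise the regenerations are non-uniform: some \readfromLtwo{} fails, or regenerates a tag below $t_{req}$, or two of them regenerate differing tags, and these are precisely the hypotheses of Lemmas \ref{lem:lcasb_prop3}, \ref{lem:lcasb_prop4}, and \ref{lem:lcasb_prop6}. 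Each concludes that at some point $\widetilde{T}$ there is a $k$-subset $S_b \subseteq S_a$ whose members all hold $(\widetilde{t}, \widetilde{v})$ in their lists, with $\widetilde{t} = \max_{s \in \mathcal{L}_1} s.t_c|_{\widetilde{T}}$. Since $t_{req}$ was the maximum committed tag over $f_1+k$ servers in \GetCommitedTag{}, monotonicity gives $\widetilde{t} \geq t_{req}$, so each server of $S_b$ serves the reader the pair $(\widetilde{t}, \widetilde{v})$ — either directly in \GetDataResp{} if the value is present, or via the registered-reader branch of \broadcastResp{}/\PutTag{} once $\widetilde{t}$ commits — satisfying condition (1). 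In either case the reader assembles a satisfying set of $f_1 + k$ responses and \GetData{} completes, after which \PutTag{} terminates as above.

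The main obstacle I anticipate is closing the \GetData{} argument: it is not enough that the three bad cases are covered by the lemmas; I must also verify that the reader is genuinely delivered the value held by $S_b$. This needs the registration bookkeeping — that a reader placed in $\Gamma$ is dropped only when served an actual value, that a \readfromLtwo{} response of $(\bot,\bot)$ or an old tag leaves the reader still registered, and that committing $\widetilde{t}$ at the servers of $S_b$ triggers the registered-reader branch for any request with $t_{req} \leq \widetilde{t}$. Combined with the persistence of $(\widetilde{t}, \widetilde{v})$ at $\widetilde{T}$ that the lemmas guarantee, this closes the loop; it is the bookkeeping, rather than any counting argument, that demands the most care.
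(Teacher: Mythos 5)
Your overall strategy is the same as the paper's: dispatch the query and write-back phases by quorum counting, reduce \PutData{} to the broadcast-primitive property, and handle the \GetData{} crux by splitting into ``$k$ servers regenerate a common tag $\geq t_{req}$'' versus the three bad events covered by Lemmas \ref{lem:lcasb_prop3}, \ref{lem:lcasb_prop4} and \ref{lem:lcasb_prop6}, each of which yields the point $\widetilde{T}$ and the $k$-subset $S_b$ holding $(\widetilde{t},\widetilde{v})$. Your route is even slightly cleaner than the paper's in one respect: you do not need the paper's device of ordering the $S_a$ servers by response time and intersecting $S_b$ with the first $f_1+i$ of them, because a single eventual $(tag,value)$ response from any member of $S_b$, together with the $f_1+k-1$ other responses from $S_a$, already satisfies condition (1) of the reader's await.

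However, the step you defer as ``bookkeeping'' is not a routine verification --- it is the bulk of the paper's proof, and your one-sentence version of it is not yet correct. You assert that each $s'\in S_b$ serves the reader ``either directly in \GetDataResp{} if the value is present, or via the registered-reader branch once $\widetilde{t}$ commits.'' Two problems. First, holding $(\widetilde{t},\widetilde{v})$ with $\widetilde{t}>t_{req}$ does \emph{not} trigger a direct \GetDataResp{} response: that action only returns a pair whose tag equals $t_{req}$ or equals the server's \emph{current} committed tag $t_c$, so a server of $S_b$ with $t_c<\widetilde{t}$ registers the reader and regenerates instead. Second, you have not shown that $\widetilde{t}$ ever commits at $s'$. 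The paper closes this by invoking the broadcast primitive: some server committed $\widetilde{t}$ via \broadcastResp, hence $f_1+k$ servers issued the corresponding $broadcast$ messages, hence $s'$ eventually has $commitCounter[\widetilde{t}]\geq f_1+k$ and (since $(\widetilde{t},\widetilde{v})$ persists in its list until a \writetoLtwo{} completes, which has not happened by $\widetilde{T}$) either commits $\widetilde{t}$ and serves the registered reader, or has already leapfrogged to a higher $t_c$. That last possibility needs its own case analysis --- in particular the else-clause of \PutTagResp, where the newly committed tag's value is absent from $L$ and the server instead serves the reader the highest-tagged value remaining in its list; it is exactly here that the persistence of $(\widetilde{t},\widetilde{v})$ is used, to guarantee that value has tag $\geq\widetilde{t}\geq t_{req}$. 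Until these cases are written out, the claim that the reader is ``genuinely delivered the value held by $S_b$'' is an assertion, not a proof.
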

				
				\begin{theorem}[{\bf Atomicity}]  \label{thm:lcasb_atomicity}
					Every well-formed execution of the $LDS$ algorithm is atomic.
				\end{theorem}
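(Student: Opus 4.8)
The plan is to prove atomicity of the $LDS$ algorithm by invoking the sufficient condition stated in Lemma 13.16 of Lynch~\cite{Lynch1996}. That lemma says an execution is atomic provided we can define a partial order on operations and verify a small set of ordering axioms; the standard approach in tag-based storage algorithms is to define the ordering via the $(tag, value)$ pairs already associated with every operation in Section~\ref{sec:propeties}. Concretely, I would order two operations $\pi_1, \pi_2$ by declaring $\pi_1 \prec \pi_2$ whenever $tag(\pi_1) < tag(\pi_2)$, and breaking ties between a write and a read sharing the same tag by placing the write before the read. The bulk of the proof is then checking the four ordering conditions of the sufficient condition.

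First I would verify the key monotonicity/consistency facts on tags that make the order well-defined and partial: writes receive distinct tags (tags embed the writer id, and a writer always strictly increments the maximum tag it learns in the \GetTag{} phase), so the order restricted to writes is a total order. Next I would establish the central precedence property: if an operation $\phi$ completes before an operation $\psi$ begins, then $tag(\phi) \leq tag(\psi)$, with strict inequality when $\psi$ is a write. This is exactly where Lemma~\ref{lem:maxL_geq_tag} (persistence of tags of completed operations) does the work: after $\phi$ completes, among any $f_1+k$ non-faulty servers in $\mathcal{L}_1$ at least one carries $tag(\phi)$ in both its committed tag and its list. Since $\psi$'s first phase (\GetTag{} for a write, \GetCommitedTag{} for a read) also contacts $f_1+k$ servers, and any two such sets of size $f_1+k$ intersect in a server that is alive (because $2(f_1+k) > n_1 + k > n_1$, so the quorums overlap even after discounting the $f_1$ failures), $\psi$ must observe a tag at least $tag(\phi)$. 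This yields the read-after-write, write-after-write, and read-after-read orderings in one stroke.

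The remaining conditions are: every read returns the value of a write that is not ordered after it (i.e. the read returns the value associated with its own tag, and that tag was genuinely written), and no write is ordered strictly between a read's tag-source and the read. For the first, I would argue that whenever a reader returns $(t_r, v_r)$, the value $v_r$ is the value written by the (unique) write that created tag $t_r$; this is immediate when the reader receives an explicit $(t_r, v_r)$ pair from $\mathcal{L}_1$, and follows from correctness of the code $\mathcal{C}_1$ (decoding $k$ coded elements of a common tag) or the regeneration guarantee (Lemma~\ref{lem:lcasb_prop5}, consistency of internal reads with internal writes) when the value is reconstructed from $\mathcal{L}_2$. Combined with the precedence property, this shows the read is placed immediately after its source write with no intervening write, as required.

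The main obstacle I anticipate is the care needed in the tie-breaking and in the read-returns-a-written-value argument, rather than the quorum intersection, which is routine. Because the write operation terminates after storing the value only in $\mathcal{L}_1$ (it does not wait for the \writetoLtwo{} offload), I must confirm that a tag returned by a reader always corresponds to a value that was actually supplied by some write's \PutData{} phase --- including the subtle case where the reader's tag is served via a regenerated coded element from $\mathcal{L}_2$, or where a reader's \PutTag{} write-back advances a server's committed tag to a tag $t_r$ whose value that server never held (it stores $(t_r,\perp)$). I would handle this by tracking each tag back to the unique write that introduced it, using the garbage-collection and persistence lemmas (Lemmas~\ref{lem:tctag} and~\ref{lem:maxL_geq_tag}) to guarantee that the value associated with any tag a reader can legitimately return was, at the moment it was read, either present in some live server's list or faithfully recoverable from $\mathcal{L}_2$; the consistency lemma~\ref{lem:lcasb_prop5} ensures the $\mathcal{L}_2$-recovered value matches the one the originating write encoded. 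Once these value-correctness facts are in place, the four axioms of Lemma 13.16 follow and atomicity is established.
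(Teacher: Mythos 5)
Your proposal is correct and follows essentially the same route as the paper: invoke Lemma 13.16 of \cite{Lynch1996}, define the partial order by tag with writes preceding reads on ties, and establish the precedence property via Lemma \ref{lem:maxL_geq_tag} together with the intersection of $f_1+k$-sized quorums. Your additional care about tracing each returned value back to the write that introduced its tag (via Lemmas \ref{lem:tctag} and \ref{lem:lcasb_prop5}) is a more explicit treatment of a step the paper's proof of property $P3$ states only briefly, but it does not change the argument.
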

				
\section{Cost Computation: Storage, Communication and Latency} \label{sec:costs}

In this section we discuss  storage and communication costs associated with read/write operations, and also carry out a latency analysis of the algorithm, in which estimates for durations of successful client operations are provided. We also analyze a multi-object system, under bounded latency, to ascertain the contribution of temporary storage toward the overall storage cost. We calculate costs for a system in which the number of servers in the two layers are of the same order, i.e., $n_1 = \Theta(n_2)$. We further assume that the parameters $k, d$ of the regenerating code are such that $k  = \Theta(n_2), d = \Theta(n_2)$. This assumption is consistent with usages of codes in practical systems. 

In this analysis, we assume that corresponding to any failed write operation $\pi$, there exists a successful write operation $\pi'$ such that $tag(\pi') > tag(\pi)$.  This essentially avoids pathological cases where the execution is a trail of only unsuccessful writes. Note that the restriction on the nature of execution was not imposed while proving liveness or atomicity. 

Like in Section \ref{sec:propeties}, our lemmas in this section apply only to servers that are non-faulty at the concerned point(s) of execution appearing in the lemmas. Also, we continue to ignore writes that failed before the completion of the first round.

\begin{lemma}[{\bf Temporary Nature of $\mathcal{L}_1$ Storage}]\label{lem:gctime1}
	Consider a successful write operation $\pi \in \beta$. Then, there exists a point of execution $T_e(\pi)$ in $\beta$ such that for all $T' \geq T_{e}(\pi)$ in $\beta$, we have  $s.t_c|_{T'} \geq tag(\pi)$ and $(t, v) \not\in s.L|_{T'}$,  $\forall s \in \mathcal{L}_1, t \leq tag(\pi)$. 
\end{lemma}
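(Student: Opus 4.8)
The plan is to prove the two conjuncts separately and in order: first that every server $s \in \mathcal{L}_1$ (non-faulty at the relevant point) eventually has $s.t_c \ge tag(\pi)$, and second that every non-$\perp$ value stored under a tag $t \le tag(\pi)$ is eventually garbage collected and never reinserted. Throughout I write $t_w = tag(\pi)$ and exploit that list entries are only ever overwritten as $(t,*)\mapsto(t,\perp)$ and never deleted, so that $(t,*)\in s.L$, once true, stays true.

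For the first conjunct I would use the fact that the very first action of an $\mathcal{L}_1$ server in the \PutDataResp~phase, upon receiving $(t_w,v)$, is to invoke $broadcast$(\reqCommitTag,$t_w$), before it ever consults $t_c$ or $L$. Since $\pi$ is successful, the writer gathered $f_1+k$ acknowledgments; tracing both acknowledgment paths (the direct ACK in \PutDataResp~and the ACK issued in \broadcastResp~once $commitCounter[t_w]\ge f_1+k$) shows that at least $f_1+k$ distinct servers invoked this broadcast. As each invoker executes the send to the fixed set $S_{f_1+1}$ while alive and at least one member of $S_{f_1+1}$ is non-faulty, the key property of the broadcast primitive forces every non-faulty server to eventually record $commitCounter[t_w]\ge f_1+k$. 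The delicate point is the interleaving of \PutDataResp~(which inserts $(t_w,v)$ into $L$) and \broadcastResp~(which increments the counter and performs the commit): I would argue that a server's \emph{own} \reqCommitTag~broadcast is delivered back to it strictly after it processed \PutData~and inserted $(t_w,*)$ into $L$, so some \broadcastResp~execution sees both $(t_w,*)\in L$ and the counter above threshold and therefore sets $t_c\ge t_w$. Combined with Lemma~\ref{lem:lcasb_prop1}, this produces a point $T_1$ beyond which every non-faulty server has $t_c\ge t_w$.

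For the second conjunct I would first observe that no non-faulty server can insert a non-$\perp$ value under any $t\le t_w$ after $T_1$, since \PutDataResp~adds $(t_{in},v_{in})$ only when $t_{in}>t_c\ge t_w\ge t$; hence only finitely many ``leftover'' entries need clearing (at most one per server per tag, over the finite set of tags $\le t_w$). A leftover under $t<t_w$ is cleared by the for-loop rewriting every $(t,*)$ with $t<t_c$ to $(t,\perp)$, which fires at the \broadcastResp~or \PutTagResp~action that lifts $t_c$ above $t$. The case $t=t_w$ is the main obstacle, because that loop is keyed on $t<t_c$ and does not fire when $t_c=t_w$ exactly. Here I would instead argue that the unique action first setting $s.t_c=t_w$ (with $(t_w,v)\in s.L$) initiates \writetoLtwo$(t_w,v)$; this internal operation completes because at most $f_2<n_2/3$ servers in $\mathcal{L}_2$ fail, so at least $n_2-f_2$ acknowledgments arrive and the ensuing \WriteAckResp~action (on $writeCounter[t_w]=n_2-f_2$) rewrites $(t_w,v)$ to $(t_w,\perp)$; if instead $t_c$ jumps strictly past $t_w$, the $t<t_c$ loop clears it. In every branch the value, once cleared, cannot reappear, since reinsertion would again require $t_{in}>t_c\ge t_w$.

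Finally I would set $T_e(\pi)$ to the maximum of $T_1$ and the finitely many clearing times identified above; beyond this point both conjuncts hold, and by the no-reappearance argument together with Lemma~\ref{lem:lcasb_prop1} they persist for all later $T'$. I expect the commit-propagation timing in the first conjunct and the $t=t_w$ garbage-collection case in the second to be the two places demanding the most care.
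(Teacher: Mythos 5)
Your proof is correct and rests on the same ingredients as the paper's: eventual delivery of the \PutData{} message to every non-faulty $\mathcal{L}_1$ server, the broadcast primitive driving $commitCounter[tag(\pi)]$ past $f_1+k$ so that every non-faulty server commits a tag $\geq tag(\pi)$, the guaranteed completion of \writetoLtwo{} (via the $n_2-f_2$ acknowledgments) to garbage-collect the value at tag $tag(\pi)$ itself, and Lemmas~\ref{lem:lcasb_prop1} and~\ref{lem:tctag} for persistence. The paper merely organizes the argument as a per-server case split on how $s.t_c$ compares to $tag(\pi)$ when \PutData{} arrives, rather than by conjunct, so the two proofs are essentially the same.
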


For a failed write operation $\pi \in \beta$, let $\pi'$ be the first successful write in $\beta$ such that $tag(\pi') > tag(\pi)$ (i,e., if we linearize all write operations, $\pi'$ is the first successful write operation that appears after $\pi$ - such a write operation exists because of our assumption in this section). Then, it is clear that for all $T' \geq T_{e}(\pi')$ in $\beta$, we have  $(t, v) \not\in s.L|_{T'}$,  $\forall s \in \mathcal{L}_1, t \leq tag(\pi)$, and thus Lemma \ref{lem:gctime1} indirectly applies to failed writes as well. Based on this observation, for any failed write $\pi  \in \beta$, we define the termination point $T_{end}(\pi)$ of $\pi$ as the point $T_e(\pi')$, where $\pi'$ is the first successful write in $\beta$ such that $tag(\pi') > tag(\pi)$. 

\begin{definition}[{\bf Extended write operation}]
	Corresponding to any write operation $\pi \in \beta$, we define a hypothetical extended write operation $\pi_e$ such that $tag(\pi_e) = tag(\pi)$, $T_{start}(\pi_e) = T_{start}(\pi)$ and $T_{end}(\pi_e) = \max(T_{end}(\pi), T_{e}(\pi))$, where $T_{e}(\pi)$ is as obtained from Lemma \ref{lem:gctime1}.
\end{definition}

The set of all extended write operations in $\beta$ shall be denoted by $\Pi_e$.

\begin{definition}[{\bf Concurrency Parameter $\delta_{\rho}$}]
	Consider any successful read operation $\rho \in \beta$, and let $\pi_{e}$ denote the last extended write operation in $\beta$ that completed before the start of $\rho$. Let $\Sigma = \{ \sigma_e \in \Pi_e | tag(\sigma) > tag(\pi_e) ~\mbox{and} ~\sigma_e~ \mbox{overlaps with} ~\rho  \}$. We define concurrency parameter $\delta_{\rho}$ as the cardinality of the set $\Sigma$.
\end{definition}

\begin{lemma}[{\bf Write, Read Cost}]\label{lem:writecost}
	The communication cost associated with any write operation in $\beta$ is given by 
	$n_1 + n_1n_2\frac{2d}{k(2d - k + 1)} = \Theta(n_1)$. The communication cost associated with any successful read operation $\rho$ in $\beta$ is given by $n_1(1 + \frac{n_2}{d})\frac{2d}{k(2d-k+1)} + n_1I(\delta_{\rho} > 0) = \Theta(1) + n_1I(\delta_{\rho} > 0)$. Here, $I(\delta_{\rho} > 0)$ is $1$ if $\delta_{\rho} > 0$, and $0$ if $\delta_{\rho} = 0$. 
\end{lemma}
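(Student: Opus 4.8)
The plan is to compute the communication cost separately for the write operation and the successful read operation, in each case tracing through the phases of Algorithm~\ref{fig:lcas-clients} and the induced internal operations, and counting only the full-value-sized messages (since meta-data is ignored by the cost convention). The normalization treats the object value $v$ as $1$ unit, so a full value is $1$, a coded element $c_i \in \mathbb{F}_q^\alpha$ costs $\alpha/B_{MBR}$, and a helper symbol $h \in \mathbb{F}_q^\beta$ costs $\beta/B_{MBR}$, where at the MBR point $\alpha = d\beta$ and $B_{MBR} = \sum_{i=0}^{k-1}(d-i)\beta = \beta\,k(2d-k+1)/2$. These ratios are the source of the factor $\frac{2d}{k(2d-k+1)}$ appearing in the statement, so the first step is to record them cleanly.

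\medskip
\noindent\textbf{Write cost.} First I would observe that the \GetTag\ and \PutData\ phases at the writer send one full value to each of the $n_1$ servers in $\mathcal{L}_1$ (the query and tag messages are meta-data), giving the leading $n_1$. By the cost convention, the write cost must also absorb the internal \writetoLtwo\ operations triggered in the \broadcastResp\ phase: each such operation sends a coded element $c_{n_1+i}$ to each of the $n_2$ servers in $\mathcal{L}_2$. The subtle point is multiplicity: in the worst case each of the $n_1$ edge servers initiates a \writetoLtwo\ for the committed tag, so there are up to $n_1 n_2$ coded-element transmissions, each costing $\alpha/B_{MBR} = \frac{2d}{k(2d-k+1)}$. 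Summing yields $n_1 + n_1 n_2 \frac{2d}{k(2d-k+1)}$. Under $n_1=\Theta(n_2)$, $k=\Theta(n_2)$, $d=\Theta(n_2)$, the coded-element ratio is $\Theta(1/n_2)$, so the second term is $\Theta(n_1 n_2)\cdot\Theta(1/n_2)=\Theta(n_1)$, matching the claim.

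\medskip
\noindent\textbf{Read cost.} For the read I would split according to whether $\delta_\rho = 0$. The \GetCommitedTag\ and \PutTag\ phases carry only meta-data (tags), so they contribute nothing. The cost lives in the \GetData\ phase. When $\delta_\rho = 0$, no concurrent write serves the reader a full value, so every response comes from an internal \readfromLtwo: each of up to $n_1$ edge servers contacts all $n_2$ servers in $\mathcal{L}_2$ and collects helper data $h_{n_1+i,j}\in\mathbb{F}_q^\beta$ (cost $\beta/B_{MBR}$), then relays a regenerated coded element $c_j\in\mathbb{F}_q^\alpha$ (cost $\alpha/B_{MBR}$) to the reader. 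The $\mathcal{L}_2$-to-$\mathcal{L}_1$ traffic is $n_1 n_2$ helper symbols and the $\mathcal{L}_1$-to-reader traffic is $n_1$ coded elements; factoring $\frac{2d}{k(2d-k+1)}$ and using $\alpha=d\beta$ gives $n_1(1+\frac{n_2}{d})\frac{2d}{k(2d-k+1)}$, which is $\Theta(1)$ under the scaling assumptions. When $\delta_\rho>0$ a concurrent write (or \writetoLtwo) lets servers answer directly from $L$ with a full $(t_r,v_r)$ pair; in the worst case the reader collects up to $n_1$ full values, adding the $n_1 I(\delta_\rho>0)$ term. Combining the two regimes yields the stated expression.

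\medskip
\noindent\textbf{Main obstacle.} The hard part will be pinning down the worst-case multiplicities without double-counting or under-counting, and justifying them from the pseudocode rather than by hand-waving. Specifically, I must argue that the $n_1 n_2$ factor (not something smaller) is genuinely attainable for \writetoLtwo, i.e.\ that all $n_1$ edge servers may commit the same tag and each fire off coded elements to all of $\mathcal{L}_2$; and for the read, that in the $\delta_\rho>0$ case the reader can simultaneously receive full values from up to $n_1$ servers even though a single such value suffices for termination. Getting the $\alpha$-versus-$\beta$ bookkeeping exactly right, so that the helper-data term collapses to the clean factor $(1+\frac{n_2}{d})$ via $\alpha=d\beta$, is the one genuinely error-prone calculation; everything else is routine once the per-message costs and multiplicities are fixed.
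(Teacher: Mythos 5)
Your proposal is correct and follows essentially the same route as the paper's proof: count $n_1$ full values from the writer to $\mathcal{L}_1$, $n_1 n_2$ coded elements (size $\alpha$) for the internal \writetoLtwo\ operations, $n_1 n_2$ helper symbols (size $\beta$) plus $n_1$ regenerated coded elements for the read, an extra $n_1$ full values when $\delta_\rho>0$, and then normalize by $|v| = k\beta(2d-k+1)/2$ using $\alpha=d\beta$. The only difference is cosmetic: you flag the worst-case multiplicities as a point needing justification, whereas the paper simply asserts them.
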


\begin{note} \label{rem:comm}
	Our ability to reduce the read cost to $\Theta(1)$ in the absence of concurrency from extended writes comes from the usage of regenerating codes at MBR point. Regenerating codes at other operating points are not guaranteed to give the same read cost. For instance, in a system with equal number of servers in either layer, also with identical fault-tolerance (i.e., $n_1 = n_2, f_1 = f_2$), it can be shown that usage of codes at the MSR point will imply that read cost is $\Omega(n_1)$ even if $\delta_{\rho} = 0$.
\end{note}

\begin{lemma}[{\bf Single Object Permanent Storage Cost}]\label{lem:storcost}
	The (worst case) storage cost in $\mathcal{L}_2$ at any point in the execution of the \calc~algorithm  is given by $\frac{2dn_2}{k(2d-k+1)} = \Theta(1)$.
\end{lemma}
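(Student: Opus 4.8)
The plan is to compute the worst-case total storage in $\mathcal{L}_2$ by first bounding the per-server storage and then summing over all $n_2$ servers. I would begin by recalling the structure of an $\mathcal{L}_2$ server: from Fig.~\ref{fig:lcas-backend-servers}, each such server holds exactly one (tag, coded-element) pair $(t,c)$ with $c \in \mathbb{F}_q^{\alpha}$ at any point in the execution, since every \writetoLtwo~delivery either overwrites the pair (when $t_{in} > t$) or discards the incoming element. Thus the storage at each of the $n_2$ servers in $\mathcal{L}_2$ is exactly $\alpha$ symbols, and the aggregate storage is $n_2 \alpha$ symbols, independent of the point in the execution. The central task is then to express $\alpha$, normalized by the object size $B_{MBR}$ (recall costs are normalized so that the size of $v$ is $1$ unit), in terms of $k$ and $d$.

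Next I would invoke the MBR parameter relations stated in the Regenerating Codes paragraph of Section~\ref{sec:models}. At the MBR operating point we have $\alpha = d\beta$, and the file size is $B_{MBR} = \sum_{i=0}^{k-1}(d-i)\beta = \beta\left(kd - \binom{k}{2}\right) = \beta \cdot \frac{k(2d - k + 1)}{2}$. The normalized per-server storage is therefore $\frac{\alpha}{B_{MBR}} = \frac{d\beta}{\beta\, k(2d-k+1)/2} = \frac{2d}{k(2d-k+1)}$. Multiplying by the $n_2$ servers in the back-end layer yields the claimed worst-case storage cost $\frac{2dn_2}{k(2d-k+1)}$. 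This is the entire exact expression in the lemma, so the bulk of the work is just the arithmetic simplification of $\sum_{i=0}^{k-1}(d-i)$.

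Finally, to establish the $\Theta(1)$ asymptotic claim, I would substitute the standing assumptions from the opening of Section~\ref{sec:costs}, namely $n_2 = \Theta(n_2)$ trivially and, crucially, $k = \Theta(n_2)$ and $d = \Theta(n_2)$. Writing $k = c_1 n_2$ and $d = c_2 n_2$ with constants $c_1, c_2$ and $c_2 \geq c_1$ (which holds since $d \geq k$), the denominator $k(2d-k+1)$ scales as $\Theta(n_2^2)$ while the numerator $2dn_2$ scales as $\Theta(n_2^2)$, so the ratio is $\Theta(1)$. I would note that the condition $d \geq k$ guarantees $2d - k + 1 > 0$, so the denominator is genuinely positive and the simplification is valid.

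I do not anticipate a genuine obstacle here: the lemma is essentially a bookkeeping computation once the single-pair-per-server invariant is observed. The only point requiring a small amount of care is justifying that each $\mathcal{L}_2$ server stores exactly one coded element at all times (rather than accumulating older versions), which follows directly from the overwrite-or-discard logic in the \writetoLtwo~response handler of Fig.~\ref{fig:lcas-backend-servers}; this is the one place where I would want to state the invariant explicitly before proceeding to the arithmetic.
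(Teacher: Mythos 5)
Your proposal is correct and follows essentially the same route as the paper's proof: total storage is $n_2\alpha$, which after substituting the MBR relations $\alpha = d\beta$ and $B_{MBR} = k\beta(2d-k+1)/2$ and normalizing by the object size gives $\frac{2dn_2}{k(2d-k+1)} = \Theta(1)$ under $k = \Theta(n_2)$, $d = \Theta(n_2)$. Your additional explicit statement of the one-pair-per-server invariant for $\mathcal{L}_2$ is a reasonable elaboration the paper leaves implicit, but it does not change the argument.
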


\begin{note} \label{rem:stor}
	Usage of MSR codes, instead of MBR codes, would give a storage cost of $\frac{n_2}{k} = \Theta(1)$. For fixed $n_2, k, d$, the storage-cost due to MBR codes is at most twice that of MSR codes.  As long as we focus on order-results, MBR codes do well in terms of both storage and read costs; see Remark \ref{rem:comm} as well. 
\end{note}

\subsection{Bounded Latency Analysis}

For bounded latency analysis, we assume the delay on the various point-to-point links are upper bounded as follows: $1)$ $\tau_1$, for any link between a client and a server in $\mathcal{L}_1$, $2)$ $\tau_2$, for any link between a server in $\mathcal{L}_1$ and a server in $\mathcal{L}_2$, and $3)$ $\tau_0$, for any link between two servers in $\mathcal{L}_1$. We also assume that the local computations on any process take negligible time when compared to delay on any of the links. In edge computing systems, $\tau_2$ is typically much higher than both $\tau_1$ and $\tau_0$. 

\begin{lemma}[{\bf Write, Read Latency}]\label{lem:latency}
	A successful write operation in $\beta$ completes within a duration of $4\tau_1 + 2\tau_0$. The associated extended write operation completes within a duration of $\max(3\tau_1 + 2\tau_0 + 2\tau_2, 4\tau_1 + 2\tau_0)$. A successful read operation in $\beta$ completes within a duration of $\max(6\tau_1 + 2\tau_2, 5\tau_1 + 2\tau_0 + \tau_2)$.
\end{lemma}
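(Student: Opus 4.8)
The goal is to establish latency bounds for the three operations by tracing each phase through the communication links and summing the delays. The key observation is that each phase of an operation is a round-trip to servers in $\mathcal{L}_1$ (cost $2\tau_1$ per phase), except where the broadcast primitive or interaction with $\mathcal{L}_2$ is involved. My plan is to handle the write, the extended write, and the read separately, and in each case to walk through the phases in execution order, accounting carefully for the broadcast-resp step and the $\mathcal{L}_1$-$\mathcal{L}_2$ round trips.

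First I would bound the \emph{successful write}. The write has two phases, \GetTag{} and \PutData, each a round trip between the writer and $\mathcal{L}_1$ servers, contributing $2\tau_1$ each, for $4\tau_1$. The subtlety is that the writer terminates only after receiving ACKs triggered by the \broadcastResp{} phase: a server that adds $(t_{in},v_{in})$ to $L$ must wait to hear the broadcast message from $f_1+k$ servers before ACKing. So within \PutData{} I must account for the broadcast round: the server broadcasts (via the $f_1+1$-relay implementation described after Fig.~\ref{fig:lcas-backend-servers}), which costs $2\tau_0$ for the message to propagate through the relay set and back, and only then is the ACK sent to the writer. Summing the two client round trips with the intra-$\mathcal{L}_1$ broadcast gives $4\tau_1 + 2\tau_0$. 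I would verify that the ACK-to-writer link is already counted inside the second $2\tau_1$ so there is no double counting.

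Next I would bound the \emph{extended write}, whose termination point (per the definition of $T_{end}$ and Lemma~\ref{lem:gctime1}) is the later of the ordinary write completion and the point at which the internal \writetoLtwo{} finishes and the value is garbage collected from $L$. The \writetoLtwo{} operation is initiated during \broadcastResp{} and terminates after $f_1+d$ (equivalently $n_2-f_2$) ACKs from $\mathcal{L}_2$, which is one round trip to $\mathcal{L}_2$ costing $2\tau_2$. I would trace the critical path to the \writetoLtwo{} initiation: writer to $\mathcal{L}_1$ ($\tau_1$), broadcast propagation so that a server commits the tag ($2\tau_0$, or possibly $\tau_0$ depending on when commitment happens), then the $\mathcal{L}_2$ round trip ($2\tau_2$), plus the final $\tau_1$ if the write-completion branch dominates. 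Taking the maximum of this back-end path $3\tau_1 + 2\tau_0 + 2\tau_2$ with the ordinary write bound $4\tau_1 + 2\tau_0$ yields the stated expression; the care needed is in pinning down exactly how many $\tau_1$ and $\tau_0$ terms precede the \writetoLtwo{} initiation.

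Finally, the \emph{read}. It has three phases: \GetCommitedTag{} ($2\tau_1$), \GetData, and \PutTag{} ($2\tau_1$). The decisive phase is \GetData, where a server may respond directly from $L$ (cost $2\tau_1$) or must invoke \readfromLtwo, which is a round trip to $\mathcal{L}_2$ costing $2\tau_2$, so the server's response reaches the reader after $\tau_1 + 2\tau_2 + \tau_1$ on that branch. The two terms in the read bound come from taking maxima over these response modes combined across the three phases: one branch charges $2\tau_2$ for regeneration plus the surrounding client round trips, and the other reflects a path through both $\tau_0$ (a concurrent commit served via \broadcastResp) and a single $\tau_2$. \textbf{The main obstacle} will be correctly identifying the worst-case critical path inside \GetData: because a registered reader can be served either by regeneration from $\mathcal{L}_2$, or later by a concurrent write's \broadcastResp{} or \PutTagResp{} action, I must argue that no admissible execution forces a longer chain than these two candidate paths, and in particular that waiting for $f_1+k$ responses does not serialize the $\mathcal{L}_2$ round trips. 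I would resolve this by noting that all server-side \readfromLtwo{} invocations proceed in parallel, so the reader waits for only a single $2\tau_2$ back-end latency rather than a sum, and then take the maximum over the finitely many response patterns to obtain $\max(6\tau_1 + 2\tau_2,\ 5\tau_1 + 2\tau_0 + \tau_2)$.
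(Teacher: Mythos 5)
Your treatment of the write ($2\tau_1$ per phase plus the $2\tau_0$ broadcast round inside \PutData) and of the extended write (critical path $2\tau_1 + \tau_1 + 2\tau_0 + 2\tau_2$ to the completion of the internal \writetoLtwo, maximized against the ordinary write bound via $T_{end}(\pi_e) = \max(T_{end}(\pi), T_e(\pi))$) matches the paper's argument. The favorable read branch, $2\tau_1 + (2\tau_1 + 2\tau_2) + 2\tau_1 = 6\tau_1 + 2\tau_2$, when all $k$ servers in the tail of $S_a$ regenerate a common tag $\geq t_{req}$, is also handled the same way, including your correct observation that the parallel \readfromLtwo{} invocations cost only a single $2\tau_2$.

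The gap is in the second read branch. You describe it as ``a path through both $\tau_0$ \ldots and a single $\tau_2$'' and propose to ``take the maximum over the finitely many response patterns,'' but that is the statement of what must be proved, not a proof. The difficulty is that when a server's regeneration fails, returns a tag below $t_{req}$, or disagrees with other regenerated tags, the reader is \emph{not} served by that $\mathcal{L}_2$ round trip at all; it must instead be served by some other server in $S_a$ responding with a full tag--value pair, and you must bound \emph{when} that fallback response occurs. The paper does this by re-running the liveness argument under bounded latency: Lemmas \ref{lem:lcasb_prop3}, \ref{lem:lcasb_prop4} and \ref{lem:lcasb_prop6} produce a point $\widetilde{T}$ at which $k$ non-faulty servers hold $(\widetilde{t},\widetilde{v})$ in their lists, and---this is the step your sketch has no substitute for---Remark \ref{rem:Ttilde} shows $\widetilde{T} < T_2$, where $T_2$ is the point by which all of $S_2 \subset \mathcal{L}_2$ has merely \emph{received} the \readfromLtwo{} request, so $\widetilde{T} \leq T_{start} + 3\tau_1 + \tau_2$ with only \emph{one} $\tau_2$, not a round trip. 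After $\widetilde{T}$, the fallback server needs at most a further $2\tau_0$ to accumulate $f_1+k$ broadcast messages and respond via \broadcastResp, and $\tau_1$ for the response to reach the reader, giving $4\tau_1 + \tau_2 + 2\tau_0$ for the first two phases. Without the $\widetilde{T} < T_2$ observation, the natural accounting on this branch is ``wait the full $2\tau_2$ for regeneration to fail, then wait for a concurrent commit,'' which yields a strictly weaker bound with $2\tau_2$ in the second term. (Separately, note that the paper's own derivation yields $6\tau_1 + 2\tau_0 + \tau_2$ for this branch, as in the appendix restatement, rather than the $5\tau_1 + 2\tau_0 + \tau_2$ in the main-body statement; your proposal asserts the latter without exhibiting the arithmetic that would justify saving a $\tau_1$.)
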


\subsubsection{Impact of Number of Concurrent Write Operations on Temporary Storage, via Multi-Object Analysis}
Consider implementing $N$ atomic objects in our two-layer storage system, via $N$ independent instances of the \calc~algorithm. The value of each of the objects is assumed to have size $1$. Let $\theta$ denote an upper bounded on the total number of concurrent extended write operations experienced by the system within any duration of $\tau_1$ time units. We show that under appropriate conditions on $\theta$, the total storage cost is dominated by that of permanent storage in $\mathcal{L}_2$. We make the following simplifying assumptions: $1)$ System is symmetrical so that $n_1 = n_2, f_1 = f_2 (\implies k = d)$ $2)$ $\tau_0 = \tau_1$, and $3)$ All the invoked write operations are successful. We note that it is possible to relax any of these assumptions and give a more involved analysis. Also, let $\mu = \tau_2/\tau_1$. 

\begin{lemma}[{\bf Relative Cost of Temporary Storage}]\label{lem:multiobject}
	At any point in the execution, the worst case storage cost in $\mathcal{L}_1$ and $\mathcal{L}_2$ are upper bounded by $\left\lceil 5 + 2\mu \right\rceil \theta n_1$ and $\frac{2Nn_2}{k+1}$. Specifically, if $\theta << \frac{Nn_2}{kn_1\mu}$, the overall storage cost is dominated by that of permanent storage in $\mathcal{L}_2$, and is given by $\Theta(N)$.
\end{lemma}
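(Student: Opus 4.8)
The plan is to bound the two layers separately and then compare. For $\mathcal{L}_2$, I would invoke the single-object result of Lemma \ref{lem:storcost}. Under the symmetry assumption $n_1 = n_2$, $f_1 = f_2$, the code parameters satisfy $k = d$, so the per-object permanent storage cost $\frac{2dn_2}{k(2d-k+1)}$ collapses to $\frac{2n_2}{k+1}$. Since the $N$ objects run as independent \calc~instances, their permanent storage costs simply add, giving $\frac{2Nn_2}{k+1}$, which is the claimed $\mathcal{L}_2$ bound.

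The heart of the argument is the $\mathcal{L}_1$ bound, for which I would first pin down how long a single written value occupies temporary storage. By Lemma \ref{lem:gctime1} together with the definition of the extended write operation $\pi_e$, the value associated with a write $\pi$ can reside in any server's list $L$ only while $T \in [T_{start}(\pi_e), T_{end}(\pi_e)]$; before $T_{start}(\pi_e)$ it has not yet been inserted, and after $T_{end}(\pi_e) \ge T_e(\pi)$ it has been garbage collected everywhere. (Since read operations write back only tags and not values, writes are the sole source of stored values, so this accounts for all of $\mathcal{L}_1$'s content.) Each such active value is replicated on at most $n_1$ servers, hence contributes at most $n_1$ units. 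Therefore the $\mathcal{L}_1$ storage at any point $T$ is at most $n_1$ times the number of extended writes, across all $N$ objects, whose interval contains $T$.

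It then remains to bound this count by $\lceil 5 + 2\mu\rceil\theta$. I would use Lemma \ref{lem:latency} to bound the duration of an extended write by $\max(5\tau_1 + 2\tau_2, 6\tau_1)$ (substituting $\tau_0 = \tau_1$), which is at most $(5 + 2\mu)\tau_1$ whenever $\mu \ge 1/2$, as holds in edge settings. Consequently every extended write active at $T$ must have started within the preceding window $[T - (5+2\mu)\tau_1, T]$. Partitioning this window into $\lceil 5 + 2\mu\rceil$ consecutive subintervals each of length at most $\tau_1$, the definition of $\theta$ bounds the number of extended writes falling in each subinterval by $\theta$; summing over the $\lceil 5 + 2\mu\rceil$ subintervals yields the bound $\lceil 5 + 2\mu\rceil\theta$ on the number of writes active at $T$, and hence $\lceil 5 + 2\mu\rceil\theta n_1$ on $\mathcal{L}_1$ storage.

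Finally I would compare the two bounds: $\mathcal{L}_1$ costs $\Theta(\mu\theta n_1)$ while $\mathcal{L}_2$ costs $\frac{2Nn_2}{k+1} = \Theta(Nn_2/k)$. The $\mathcal{L}_2$ term dominates precisely when $\mu\theta n_1 \ll Nn_2/k$, i.e. $\theta \ll \frac{Nn_2}{kn_1\mu}$, and since $k = \Theta(n_2)$ the permanent cost is $\Theta(N)$, giving the stated total. The main obstacle lies in the second and third steps: correctly translating the per-object guarantee of Lemma \ref{lem:gctime1} that a value lives only during its extended write into a global, server-summed count, and matching the exact constant $\lceil 5 + 2\mu\rceil$ by covering the extended-write duration (including the $6\tau_1$ branch of the latency bound) with unit-$\tau_1$ windows under the chosen reading of $\theta$.
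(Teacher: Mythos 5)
Your proposal is correct and follows essentially the same route as the paper: the $\mathcal{L}_2$ bound comes from $N$ independent copies of the per-object MBR storage cost with $d=k$, and the $\mathcal{L}_1$ bound comes from the $(5+2\mu)\tau_1$ lifetime of an extended write (via Lemma \ref{lem:gctime1} and Lemma \ref{lem:latency}) covered by $\lceil 5+2\mu\rceil$ windows of length $\tau_1$, each contributing at most $\theta$ writes replicated on at most $n_1$ servers. Your explicit covering argument and the remark that the $6\tau_1$ branch of the latency bound is absorbed when $\mu\ge 1/2$ are details the paper leaves implicit, but the substance is identical.
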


An illustration of Lemma \ref{lem:multiobject} is provided in Fig. \ref{fig:storage_cost}. In this example, we assume $n_1 = n_2 = 100, k = d = 80, \tau_2=10\tau_1$ and $\theta = 100$, and plot $\mathcal{L}_1$ and $\mathcal{L}_2$ storage costs as a function of the number $N$ of objects stored. As claimed in Lemma \ref{lem:multiobject}, for large $N$, overall storage cost is dominated by that of permanent storage in $\mathcal{L}_2$, and increases linearly with $N$. Also, for this example, we see that the $\mathcal{L}_2$ storage cost per object is less than $3$. If we had used replication in $\mathcal{L}_2$ (along with a suitable algorithm), instead of MBR codes, the $\mathcal{L}_2$ storage cost per object would have been $n_2 = 100$.

\begin{figure}
	\centering
	\includegraphics[width=70mm]{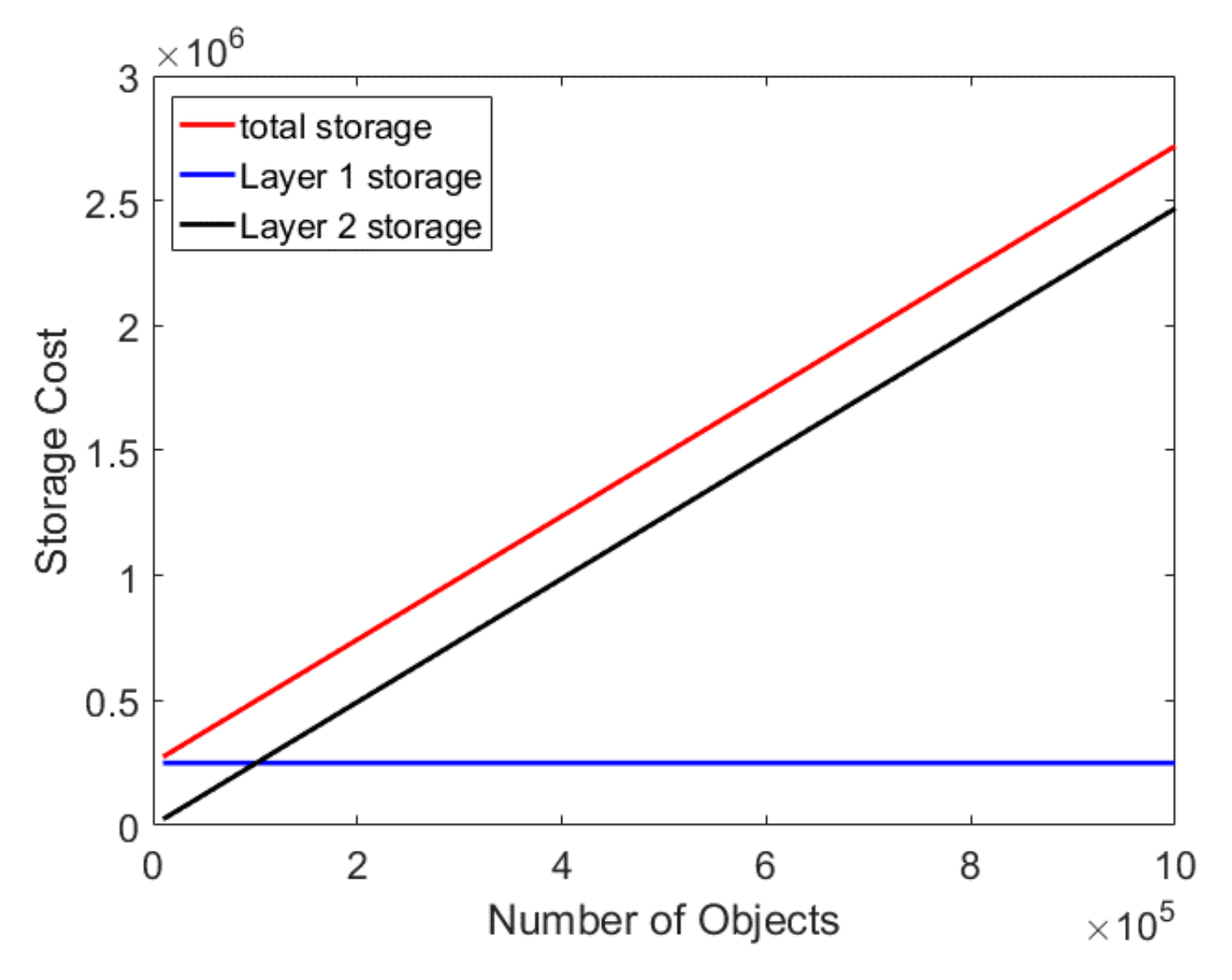}
	\caption{Illustration of the variation of $\mathcal{L}_1$ and $\mathcal{L}_2$ storage costs as a function of the number of objects stored. In this example, we assume $n_1 = n_2 = 100, k = d = 80, \tau_2=10\tau_1$ and $\theta = 100$.}
	\label{fig:storage_cost}
\end{figure}				
\section{Conclusion}\label{sec:conc}

In this paper we proposed a two-layer model for strongly consistent data-storage, while supporting read/write operations. Our model and \calc~algorithm were both motivated by the proliferation of edge computing applications. In the model, the first layer is closer (in terms of network latency) to the clients and the second layer stores bulk data. In the presence of frequent read and write operations, most of the operations are served without the need to communicate with the back-end layer, thereby decreasing the latency of operations. In this regard, the first layer behaves as a proxy cache.  In our algorithm, we use regenerating codes to simultaneously optimize storage and  read costs.  Several interesting avenues for future work exist. It is of interest to extend the framework to carry out repair of erasure-coded servers in $\mathcal{L}_2$. A model for repair in single-layer systems using erasure codes was proposed in \cite{radon}. The modularity of implementation possibly makes the repair problem in $\mathcal{L}_2$ simpler that the one in  \cite{radon}. Furthermore, we would like to explore if the modularity of implementation could be advantageously used to implement a different consistency policy like regularity without affecting the implementation of the erasure codes in the  back-end. Similarly, it is also of interest to study feasibility of other codes from the class of regenerating codes (like RLNCs~\cite{rlnc}) in the back-end layer, without affecting client protocols.

\bibliographystyle{plain}
\bibliography{biblio}

\appendices

\section*{Appendix I: Proofs of \calc~Properties}

\subsection{Proof of Lemma \ref{lem:lcasb_prop1} [{\bf Monotonicity of committed tag}]}

\begin{nono-lemma} 
	Consider any two points $T_1$ and $T_2$ in an execution of $LDS$,  such that $T_1 < T_2$. Then, for any server $s \in \mathcal{L}_1$, $s.tc|_{T_1} \leq s.tc|_{T_2}$.
\end{nono-lemma}
\begin{proof}
	For any server $s \in \mathcal{L}_1$, committed tag gets updated either via the \broadcastResp~action or the \PutTagResp~action. In both these instances, from an inspection of the algorithm, we see that the committed tag is changed to an incoming tag $t_{in}$ only if the committed tag before update is less than $t_{in}$, and this ensures that the committed tag never decreases during the execution.
\end{proof}

\subsection{Proof of Lemma \ref{lem:tctag} [{\bf Garbage collection of older tags}]}

\begin{nono-lemma}
For any server $s \in \mathcal{L}_1$, at any point $T$ in an execution of \calc, if $(t, v) \in s.L|_T$, we have $t \geq s.t_c|_T$.
\end{nono-lemma}
\begin{proof}
The statement is trivially true at the start of the execution. From an inspection of the algorithm, we see that new (tag, value) pairs are added to list only via  via the \PutDataResp~action. Further, note that a new (tag, value) pair $(t, v)$is added by server $s$ to its list via the \PutDataResp~action (say at point $T'$ of the execution) only if $t > s.t_c|_{T'}$. 

Next, note that for any server $s \in \mathcal{L}_1$, committed tag gets updated  either via the \broadcastResp~action or the \PutTagResp~action. Now, in either of these two actions, if the committed tag is updated to a new value, say $t'$, we also replace any $(t, v), t  < t'$ in the list with $(t, \perp)$. This completes the proof of the lemma.
\end{proof}

\subsection{Proof of Lemma \ref{lem:maxL_geq_tag} [{\bf Persistence of tags corresponding to completed operations}]}

\begin{nono-lemma}
Consider any successful write or read operation $\phi$ in an execution of $LDS$, and let $T$ be any point in the execution after $\phi$ completes. For any set $S'$ of $f_1 + k$ servers in $\mathcal{L}_1$ that are non-faulty at $T$, there exists $s \in  S'$ such that $s.t_c|_T \geq tag(\phi)$ and $\max \{ t: (t, *) \in s.L|_T \}  \geq tag(\phi)$.
\end{nono-lemma}
\begin{proof}
	Let us first consider the case of a successful write operation; let $w_{\phi}$ denote the writer that initiated $\phi$. Let us denote by $S_{\phi}$ the set of $f_1 + k$ servers in ${\mathcal L}_1$ whose responses were used by $w_{\phi}$ to determine termination. Since $S_{\phi}$ and $S'$ are both majorities, there exists one server\footnote{There are at least $k$ servers in the intersection, we can focus on any one of them for the proof.} $s \in S_{\phi} \cap S'$. Let $T'$ denote the point of execution where $s$ sends acknowledgment to $w_{\phi}$. We note that $s$ sends acknowledgment to $w_{\phi}$ either via the $put\act{-}data\act{-}resp$ or $broadcast\act{-}resp$ action. In either case, it is straightforward to see\footnote{By definition, the whole action occurs at one point in the execution. Recall that an action is part of code that is executed by one of the processes without waiting for any external inputs.} that state variable $s.t_c|_{T'} \geq tag(\phi)$. Clearly, $T > T'$; from Lemma \ref{lem:lcasb_prop1} it follows that $s.t_c|_{T} \geq s.t_c|_{T'} \geq tag(\phi)$. Also, from an  inspection of the algorithm, we see that any point $T''$ in the execution, the list $s.L|_{T''}$ always contains the pair $(s.t_c|_{T''}, *)$, where $*$ is either $\perp$ or the value corresponding to $s.t_c|_{T''}$. The lemma now follows (for writes) by combining the last two statements.
	
	Let us now consider the case of read operation; let $r_{\phi}$ denote the corresponding reader. Let $S_{\phi}$ denote the set of $f_1 + k$ servers in ${\mathcal L}_1$ whose responses were used by $r_{\phi}$ during the \PutTag~phase to determine termination. If $T$ denotes the point of execution when server $s \in S_{\phi}$ responded to $r_{\phi}$ via the \PutTagResp~action, we see from an inspection of the algorithm that $s.t_c|_T \geq tag(\phi)$. The rest of the proof can now be argued like in the case of writes.
\end{proof}

\subsection{Proof of Lemma \ref{lem:lcasb_prop5} [{\bf Consistency of Internal Reads with respect to Internal Writes}]}

\begin{nono-lemma}
	Let $\sigma_2$ denote a successful internal \writetoLtwo$(t, v)$ operation executed by some server in $\mathcal{L}_1$. Next, consider an internal \readfromLtwo \ operation $\pi_2$, initiated after the completion of $\sigma_2$, by a server $s \in \mathcal{L}_1$ such that a tag-coded-element pair, say $(t', c')$ was successfully regenerated by the  server $s$. Then, $t' \geq t$; i.e., the regenerated tag is at least as high as what was written before the read started. 
\end{nono-lemma}
\begin{proof}
	Let $S$ denote the set of $f_2 + d$ servers in $\mathcal{L}_2$ whose acknowledgments were	used to determine termination of $\sigma_2$, and let $S'$ denote the set of $f_2 + d$ servers in $\mathcal{L}_2$ whose responses were used by the reader to regenerate the pair $(t', c')$. Clearly, $|S \cap S'| \geq d$, since $n_2 = 2f_2 + d$. If $T$ denotes the point of execution where $\sigma_2$ completed, from an inspection of $\mathcal{L}_2$ protocols, we see that $s.t|_{T'} \geq t, \forall T' \geq T, s \in S$. 
	
	Now, recall our assumption on the system model that $f_2 < n_2/3$, and since $n_2 = 2f_2 + d$, we get that $d > f_2$. Now, for the read operation $\pi_2$ to successfully regenerate $(t', v')$, at least $d$ responses received from $S'$ must correspond to tag $t'$.
	Since $|S \backslash (S \cap S')| \leq f_2 < d$, it follows that the reader must use at least one of the responses from $S \cap S'$ while regenerating $(t', v')$. The proof now follows since $s.t|_{T'} \geq t, \forall T' \geq T, s \in S \cap S'$. 
\end{proof}

\subsection{Proof of Lemma \ref{lem:lcasb_prop3} [{\bf If internal \readfromLtwo~operation fails}]}

We need the following intermediate lemma before proving Lemma \ref{lem:lcasb_prop3}. In fact, this lemma will be also be used in the proofs of Lemmas \ref{lem:lcasb_prop4} and \ref{lem:lcasb_prop6}.

\begin{lemma}[{\bf An Intermediate Lemma}]\label{lem:intermediate}
	Consider any point $T$ in the execution, and let $\widetilde{t} = \max_{s \in \mathcal{L}_1}s.t_c|_{T}$. Let $\widetilde{T}$ denote the earliest point in the execution when the tag $\widetilde{t}$ was committed by any server in $\mathcal{L}_1$. Let $s' \in \mathcal{L}_1$ be the server which committed $\widetilde{t}$ at $\widetilde{T}$. Then, $s'$ committed the tag $\widetilde{t}$ necessarily via the  \broadcastResp({\reqCommitTag}, $\widetilde{t}$) action, and not via the \PutTagResp({\putTagLabel}, $\widetilde{t}$) action.
\end{lemma}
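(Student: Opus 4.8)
The plan is to argue by contradiction. Suppose instead that the earliest commit of $\widetilde{t}$ in the whole system, the one performed by $s'$ at $\widetilde{T}$, occurs through the \PutTagResp({\putTagLabel}, $\widetilde{t}$) action rather than through \broadcastResp. This \PutTagResp must be triggered by a \PutTag request of some reader $r$ carrying tag $t_r = \widetilde{t}$. Since the phases of a read are sequential, $r$ must have fixed $t_r = \widetilde{t}$ already during its preceding \GetData phase; hence every server response that let $r$ settle on $\widetilde{t}$ was received strictly before $\widetilde{T}$. I would then show that \emph{each} mechanism by which $r$ could acquire the value for $\widetilde{t}$ forces some server to have committed $\widetilde{t}$ strictly before $\widetilde{T}$, contradicting the minimality of $\widetilde{T}$. (Note that any prior commit suffices for the contradiction, regardless of whether it was itself via \broadcastResp or \PutTagResp, so there is no circularity in the argument.)

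During \GetData the reader returns $v_r$ for $t_r = \widetilde{t}$ only if either (a)~it received a genuine (tag, value) pair $(\widetilde{t}, v)$ from some $\mathcal{L}_1$ server, or (b)~it decoded $v_r$ from at least $k$ coded elements all tagged $\widetilde{t}$. I would dispatch case (b) first: the coded elements are produced by \readfromLtwo regenerations in $\mathcal{L}_1$, and regenerating tag $\widetilde{t}$ requires that $\widetilde{t}$ was already stored on servers of $\mathcal{L}_2$ by a \writetoLtwo$(\widetilde{t}, \cdot)$ operation. Inspecting the algorithm, \writetoLtwo is initiated — in both \broadcastResp and \PutTagResp — only immediately after the initiating server sets $t_c \leftarrow \widetilde{t}$; hence some server committed $\widetilde{t}$ before the store in $\mathcal{L}_2$, and therefore before $\widetilde{T}$, a contradiction.

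The remaining and most delicate work is case (a), where I must rule out that $r$ obtains a genuine value for $\widetilde{t}$ with no server having yet committed it. I would split according to how the serving server produced $(\widetilde{t}, v)$. If it is sent while serving an outstanding reader in \broadcastResp, or as the $(t_c, v_c)$ branch of \GetDataResp or \PutTagResp, then the sender had just set, or already held, $t_c = \widetilde{t}$, giving a commit before $\widetilde{T}$. The genuinely subtle sub-case is the immediate \GetDataResp response with $(t_{req}, v_{req}) = (\widetilde{t}, v)$, where the server may merely hold $(\widetilde{t}, v)$ in its list without having committed it; here I would use that this branch fires only when $t_{req} = \widetilde{t}$, and that $t_{req}$ was computed in \GetCommitedTag as the maximum of $f_1 + k$ committed tags, so some server reported $t_c = \widetilde{t}$ during \GetCommitedTag — again a commit before $\widetilde{T}$. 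Finally I would discharge the $(\bar{t}, \bar{v})$ branch of \PutTagResp using Lemma~\ref{lem:tctag}: a genuine value with tag $\bar{t} < t_c$ cannot reside in the list, so that branch never delivers a value and is vacuous.

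I expect the main obstacle to be exactly this bookkeeping across the several sending points in \GetDataResp, \broadcastResp and \PutTagResp, together with the care needed to distinguish genuine values from garbage-collected $\perp$ entries in the list. Lemma~\ref{lem:tctag} (to kill the $\bar{t}$ branch and to certify that every value carried in $L$ sits above $t_c$) and the monotonicity of $t_c$ in Lemma~\ref{lem:lcasb_prop1} (to transport any witnessed commit forward to $\widetilde{T}$) are the tools that keep this case analysis finite and closed; the rest is a routine inspection of the pseudocode.
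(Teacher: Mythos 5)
Your overall strategy is the same as the paper's: assume the earliest commit of $\widetilde{t}$ happens via \PutTagResp, trace back to the reader's \GetData phase, and show that every way the reader could have settled on tag $\widetilde{t}$ --- a direct (tag, value) response from some $\mathcal{L}_1$ server, or $k$ regenerated coded elements --- forces a commit of $\widetilde{t}$ (or the presence of a higher committed tag) strictly before $\widetilde{T}$, contradicting minimality. Your treatment of the coded-element case (regeneration presupposes a completed or initiated \writetoLtwo$(\widetilde{t},\cdot)$, which in turn presupposes a commit) is exactly the paper's argument, and your handling of the $(t_c,v_c)$ branches and the \broadcastResp{} serving of outstanding readers is the "simple inspection" the paper waves at.

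However, your dispatch of the $(\bar{t},\bar{v})$ branch of \PutTagResp{} is wrong as stated. You claim that by Lemma~\ref{lem:tctag} no genuine value with tag below $t_c$ can sit in $L$, so the branch is vacuous. But Lemma~\ref{lem:tctag} is an invariant about the state \emph{between} actions; inside the \PutTagResp{} action, $\bar{t}$ is computed \emph{after} $t_c$ has just been raised to $t_{in}$ and \emph{before} the garbage-collection step at the end of the action. At that mid-action moment $L$ can perfectly well contain genuine values whose tags lie in the interval between the old committed tag and $t_{in}$ (e.g., deposited by a concurrent \PutDataResp), and indeed serving such values to registered readers is the entire purpose of that branch --- it is not dead code. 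The sub-case is still closeable, but by a different argument: if that branch handed the reader the pair $(\widetilde{t},\widetilde{v})$, then $\bar{t}=\widetilde{t}<t_c=t_{in}$, so the serving server committed a tag $t_{in}>\widetilde{t}$ at a point preceding $\widetilde{T}\leq T$; by Lemma~\ref{lem:lcasb_prop1} that server's committed tag at $T$ would exceed $\widetilde{t}$, contradicting the maximality of $\widetilde{t}=\max_{s\in\mathcal{L}_1}s.t_c|_T$. With that repair your proof goes through and coincides with the paper's.
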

\begin{proof}
	We prove the lemma via contradiction. Thus, let us suppose that   $s'$ commits the tag $\widetilde{t}$ via the \PutTagResp({\putTagLabel}, $\widetilde{t}$) action. Consider the reader $r$ that initiated the corresponding \PutTag ~ phase. Let $t_{req}$ denote the tag the reader $r$ sent during the \GetData~phase. From Lemma \ref{lem:lcasb_prop1}, we know that $t_{req} < \widetilde{t}$, since $t_{req}$ is also a committed tag (the equality condition $t_{req} = \widetilde{t}$ is ruled out since $t_{req}$ is necessarily committed by a server in $\mathcal{L}_1$ at a point in the execution earlier than $\widetilde{T}$). The reader $r$, during the {\GetData} phase either received $(\widetilde{t}, \widetilde{v})$ from the one of the servers, say $\widetilde{s}$, in $\mathcal{L}_1$, or received $k$ coded elements corresponding to  tag $\widetilde{t}$ from $k$ servers in $\mathcal{L}_1$. We rule out either of these possibilities as follows. Let us first consider the possibility that $r$ received $(\widetilde{t}, \widetilde{v})$ from $\widetilde{s}$. The server $\widetilde{s}$ returns $(\widetilde{t}, \widetilde{v})$ to  $r$ via one of the following three actions: \GetDataResp, \broadcastResp ~ and \PutTagResp. Based on the arguments so far,  it is straightforward to rule out all these three possibilities via a simple inspection of the algorithm. Next, consider the case when $r$ received $k$ coded elements corresponding to  tag $\widetilde{t}$ from $k$ servers in $\mathcal{L}_1$. Each of these $k$ servers responds to $r$ via the \readfromLtwocomplete~action. In this case, it is clear that some server in $\mathcal{L}_1$ must have  committed tag $\widetilde{t}$, and initiated \writetoLtwo$(\widetilde{t}, \widetilde{v})$ before $\widetilde{T}$. Clearly, this contradicts the assumption that $\widetilde{t}$ was not committed until $\widetilde{T}$, and thus we rule out this case as well. This completes the proof of the lemma.
\end{proof}

We are now ready to prove Lemma \ref{lem:lcasb_prop3}}

\begin{nono-lemma}[{\bf If internal \readfromLtwo~operation fails}]
	Consider an internal \readfromLtwo \ operation initiated at point $T$ of the execution by a server $s_1 \in \mathcal{L}_1$ such that $s_1$ failed to regenerate any tag-coded-element pair based on the responses. Then, there exists a point  $\widetilde{T} > T$ in the execution  such that the following statement is true:  There exists a subset $S_b$ of  $S_a$  such that $|S_b| = k$, and $\forall s' \in S_b$  $(\widetilde{t}, \widetilde{v}) \in s'.L|_{\widetilde{T}}$, where $\widetilde{t} = \max_{s \in \mathcal{L}_1}s.t_c|_{\widetilde{T}}$.
\end{nono-lemma}
\begin{proof}
	Let $S_2$ denote the subset of $f_2 + d$ servers in $\mathcal{L}_2$ whose responses were used by $s_1$ to attempt regenerating a tag-coded-element pair. Also, let $T_1$ ($T_2$)   denote the earliest point in the execution when any-one (all) of the   servers in $S_2$ received the \readfromLtwo \  request from  $s_1$. Further, let $\hat{t} = \max_{s \in \mathcal{L}_1}s.t_c|_{T_2}$. Observe that no server in $\mathcal{L}_1$ completed \writetoLtwo$(\hat{t}, \hat{v})$ before $T_1$, where $\hat{v}$ is the value associated with the tag $\hat{t}$. This is because, if any server completed \writetoLtwo$(\hat{t}, \hat{v})$ before $T_1$, then $s_1$ would regenerate the coded element corresponding to $\hat{t}$, using the responses from $S_2$. This follows since, in this case, it is certain that at least $d$ servers in  $S_2$ store coded-elements corresponding to tag $\hat{t}$ during the execution fragment $[T_1 \ T_2]$. Now, let $T^*$ denote the earliest point in the execution when $\hat{t}$ was committed by any server in $\mathcal{L}_1$. The candidate for the point $\widetilde{T}$ in the lemma is $\max(T_1, T^*)$. Note that $\widetilde{T} < T_2$. Now, since $\hat{t}$ is the maximum among all committed tags at $T_2$, and since $\widetilde{T} < T_2$, it must be true  $\hat{t}$ is the also the maximum among all committed tags at $\widetilde{T}$, i.e., $\widetilde{t} = \max_{s \in \mathcal{L}_1}s.t_c|_{\widetilde{T}} = \hat{t}$. We explicitly note that no server in $\mathcal{L}_1$ completed \writetoLtwo$(\widetilde{t}, \widetilde{v})$ before $\widetilde{T}$.

	Next, we proceed to find the subset $S_b \subset S_a, |S_b| = k$, which satisfies the lemma. Toward this, let $\widetilde{s} \in \mathcal{L}_1$ be the server that committed $\hat{t}$ at $T^*$.  From  Lemma \ref{lem:intermediate}, we know that $\widetilde{s}$ committed tag  $\hat{t}$ as part of an execution of the \broadcastResp~action. In this case, we observe that at least $k$ servers (which we take as the members of $S_b$) among $S_a$ must have executed the $broadcast$({\reqCommitTag}, $\hat{t}$) step via \PutDataResp(PUT-DATA, $(\hat{t}, \hat{v})$) action; otherwise $\widetilde{s}$ would not satisfy $commitCounter[\hat{t}] \geq f_1 + k$ that is needed to commit $\hat{t}$. Also, each of these servers in $S_b$, during its corresponding   \PutDataResp(PUT-DATA, $(\hat{t}, \hat{v})$)~action, adds $(\hat{t}, \hat{v})$ to its  list. This follows since any server in  $S_b$ satisfies $\hat{t} > t_c$ at the point of the execution of the corresponding \PutDataResp(PUT-DATA, $(\hat{t}, \hat{v})$)~action. Now, it is clear that every server $s' \in S_b$ retains $(\widetilde{t}, \widetilde{v})$ in its list at $\widetilde{T}$. 
	This follows from the maximality of $\widetilde{t}$ and thus a server removes $(\widetilde{t}, \widetilde{v})$ from its list only after completion of \writetoLtwo, which has not happened at $\widetilde{T}$. This completes the proof of the lemma.
\end{proof}

\begin{note} \label{rem:Ttilde}
	In the above proof, the fact that $\widetilde{T} < T_2$ will be found useful in the proof of Lemma \ref{lem:latency}, where we compute bounds on completion times of client operations, under a bounded latency model.
\end{note}

\subsection{Proof of Lemma \ref{lem:lcasb_prop4} [{\bf If internal \readfromLtwo~operation regenerates a tag older than the request tag}]}

\begin{nono-lemma}
	Consider an internal \readfromLtwo \ operation initiated at point $T$ of the execution by a server $s_1 \in \mathcal{L}_1$ such that $s_1$ only manages to regenerate $(t, c)$ based on the responses, where $t < t_{req}$. Here $t_{req}$ is the tag sent by the associated reader during the \GetData~phase. Then, there exists a point  $\widetilde{T} > T$ in the execution such that the following statement is true:  There exists a subset $S_b$ of  $S_a$  such that $|S_b| = k$, and $\forall s' \in S_b$  $(\widetilde{t}, \widetilde{v}) \in s'.L|_{\widetilde{T}}$, where $\widetilde{t} = \max_{s \in \mathcal{L}_1}s.t_c|_{\widetilde{T}}$.
\end{nono-lemma}
\begin{proof}
	Let $S_2$ denote the subset of $f_2 + d$ servers in $\mathcal{L}_2$ whose responses were used by $s_1$ to regenerate $(t, c)$. Also, let $T_1$  denote the earliest point in the execution when any one of the servers in $S_2$ received the \readfromLtwo \  request from  $s_1$. Now, let $\hat{t} = \max_{s \in \mathcal{L}_1}s.t_c|_{T_1}$. Clearly, $\hat{t} \geq t_{req}$, since $t_{req}$ is also a committed tag. Observe that no-server in $\mathcal{L}_1$ completed \writetoLtwo$(\hat{t}, \hat{v})$ before $T_1$; otherwise $s_1$ would have regenerated a tag that is at least as high as $\hat{t}$. The reasons for the last statement are similar\footnote{ Technically, we cannot directly apply Lemma \ref{lem:lcasb_prop5} since the two internal operations potentially overlap during the execution. Lemma \ref{lem:lcasb_prop5} assume that these operations do not overlap, even though an overlap as occurring in this proof does not affect the result of Lemma \ref{lem:lcasb_prop5}.} to those used in the proof of Lemma \ref{lem:lcasb_prop5}. The candidate for   $\widetilde{T}$ is $T_1$, and thus  $\widetilde{t} = \max_{s \in \mathcal{L}_1}s.t_c|_{\widetilde{T}} = \hat{t}$.  We explicitly note that no server in $\mathcal{L}_1$ completed \writetoLtwo$(\widetilde{t}, \widetilde{v})$ before $\widetilde{T}$.
	
	The rest of the proof, where we find the subset $S_b$, which satisfies the lemma, is similar to the proof of Lemma \ref{lem:lcasb_prop3}.
\end{proof}

\subsection{Proof of Lemma \ref{lem:lcasb_prop6} [{\bf If two Internal \readfromLtwo~operations regenerate differing tags}]}

\begin{nono-lemma}
	Consider internal \readfromLtwo \ operations initiated at points $T$ and $T'$ of the execution, respectively by servers $s$ and $s'$ in $\mathcal{L}_1$. Suppose that $s$ and $s'$ regenerate tags $t$ and $t'$ such that $t < t'$. Then, there exists a point  $\widetilde{T} > T$ in the execution such that the following statement is true:  There exists a subset $S_b$ of  $S_a$  such that $|S_b| = k$, and  $\forall s' \in S_b$  $(\widetilde{t}, \widetilde{v}) \in s'.L|_{\widetilde{T}}$, where $\widetilde{t} = \max_{s \in \mathcal{L}_1}s.t_c|_{\widetilde{T}}$.
\end{nono-lemma}
\begin{proof}
	Let $S_2 \subset \mathcal{L}_2, |S_2| = f_2 + d$ denote the set of $f_2 + d$ servers in $\mathcal{L}_2$ whose responses were used by $s$ to regenerate $(t, c)$. Also, let $T_1$  denote the earliest point in the execution when any-one of the  servers in $S_2$ received the \readfromLtwo \  request from  $s$.	We observe  that no server in $\mathcal{L}_1$ completed \writetoLtwo$(t', v')$ before $T_1$; else $s$ would regenerate $(t', c')$ using the responses from $S_2$. Now, let $\hat{t} = \max_{s \in \mathcal{L}_1}s.t_c|_{T_1}$. We consider two sub-cases based on $\hat{t}$ in order to determine candidates for $\widetilde{t}$ and $\widetilde{T}$.
	\begin{itemize}
		\item \underline{$\hat{t} > t$}:  In this case, we chose $\widetilde{t} = \hat{t}$, and $\widetilde{T} = T_1$. Once again, no server completes \writetoLtwo$(\widetilde{t}, \widetilde{v})$ before $\widetilde{T}$.
		\item \underline{$\hat{t} = t$}: In this case, let $T^*$ denote the earliest point in the execution when $t'$ is committed by any server in $\mathcal{L}_1$.  We define $\widetilde{t} = \max_{s \in \mathcal{L}_1}s.t_c|_{T^*}$. Also, we define  $\widetilde{T}$ as the earliest point in the execution when $\widetilde{t}$ committed by any server in $\mathcal{L}_1$. Since  $\widetilde{t} \geq t' > t = \hat{t}$, from Lemma \ref{lem:lcasb_prop1} it follows that  $\widetilde{T} > T_1 > T$.  Also, trivially, note that no server in $\mathcal{L}_1$ completes \writetoLtwo$(\widetilde{t}, \widetilde{v})$ before $\widetilde{T}$.
	\end{itemize}
	In either of the two cases above, we can argue like in the proof of Lemma \ref{lem:lcasb_prop3} in order to find the subset $S_b$ that satisfies the lemma.
\end{proof}

\section*{Appendix II: Proofs of Liveness and Atomicity}

\subsection{Proof of Theorem \ref{thm:lcasb_liveness} [{\bf Liveness}]}

\begin{nono-theorem}  
	Consider any well-formed execution of the \calc~algorithm, where at most $f_1 < n_1/2$ and $f_2 < n_2/3$ servers crash fail in layers $\mathcal{L}_1$ and $\mathcal{L}_2$, respectively. Then every operation associated with a non-faulty client completes. 
\end{nono-theorem}
\begin{proof}
	The liveness of write operations is straightforward; we only argue liveness of reads. Consider a read operation $\pi_1$ associated with a non-faulty reader $r$. Let $t_{req}$ denote the tag that was sent by $r$ during the \GetData \ phase. It is easy to see that all servers in $S_a$ definitely respond to reader $r$. Note that each of the responses can either be a valid tag-value pair, or a regenerated tag-coded-element-pair or $(\perp, \perp)$. Let $T_j$ denote the point of execution when server $s_{a, j} \in S_a, 1 \leq j \leq f_1 + k$ acts on the get-data request from $r$. Without loss of generality, let us assume that $T_{i} < T_{i+1}, 1 \leq i \leq f_1 + k -1$.
	
	Consider the sequence ${\bf s}$ of $k$ servers $(s_{a, f_1 + 1}, s_{a, f_1 + 2}, \ldots, s_{a, f_1 + k})$, and suppose that no server in this sequence responds to the reader $r$ as part of the execution of the corresponding \GetDataResp~action; instead all $k$ of them initiate the internal \readfromLtwo$(r)$ operation. Now, let $s_{a, f_1 + i}, 1 \leq i \leq k$ be the last server in this sequence that results in one of the following two events:
	\begin{enumerate}
		\item $s_{a, f_1 + i}$ returns $(\perp, \perp)$
		\item $s_{a, f_1 + i}$ returns a regenerated tag-coded-element pair $(t, c)$, where the tag $t$ is different from the tags returned by the servers $s_{a, f_1 + i + 1}, \ldots, s_{a, f_1 + k}$. Since $s_{a, f_1 + i}$ is taken as the last such server in the sequence, in this case, it is clear that all the servers  $s_{a, f_1 + i + 1}, \ldots, s_{a, f_1 + k}$ indeed regenerate and return a common tag $t'$ such that $t' \geq t_{req}$.
	\end{enumerate}
	First of all note if there is no such server in the sequence ${\bf s}$ that satisfies either of the two conditions above, then it is clear that all $k$ servers respond with with the common regenerated tag $t' \geq t_{req}$, and this ensures liveness. We analyze each of the two events above separately:
	\begin{enumerate}
		\item \emph{$s_{a, f_1 + i}$ returns $(\perp, \perp)$}:  This can happen in two ways:
		\begin{enumerate}
			\item \emph{Server $s_{a, f_1 + i}$ is unable to regenerate any $(t, c)$ pair based on the responses}: In this case, we know from Lemma that \ref{lem:lcasb_prop3} that there exists a point $\widetilde{T} > T_{f_1 + i}$
			in the execution such that the following statement is true: there exists a subset $S_b$ of  $S_a$  such that $|S_b| = k$, and $\forall s' \in S_b$  $(\widetilde{t}, \widetilde{v}) \in s'.L|_{\widetilde{T}}$, where $\widetilde{t} = \max_{s \in \mathcal{L}_1}s.t_c|_{\widetilde{T}}$. Next, note that $|S_b \cap \{s_{a, 1}, s_{a, 2}, \ldots, s_{a, f_1 + i}\}|   \geq k + (f_1 + i) - |S_a| = i \geq 1, 1 \leq i \leq k$. Pick any server $s_{a,j} \in S_b \cap \{s_{a, 1}, s_{a, 2}, \ldots, s_{a, f_1 + i}\}$.  We now consider the two cases $s_{a,j}.t_c|_{\widetilde{T}} = \widetilde{t}$ and $s_{a,j}.t_c|_{\widetilde{T}} < \widetilde{t}$
			and show that in either case, the server $s_{a,j}$ indeed sends a valid tag-value pair $(t, v)$ to the reader such that $t \geq t_{req}$.
			\begin{itemize}
				\item \emph{$s_{a,j}.t_c|_{\widetilde{T}} = \widetilde{t}$}: Let $T$ denote the earliest point in the execution when $s_{a,j}$ commits the tag $\widetilde{t}$. Now, we either have $T < T_j$ or $T > T_j$, where we recall $T_j$ to be the point in the execution when $s_{j}$ acts on the get-data request from $r$. If $T < T_j$, it is clear that $s_{a, j}$ responds to $r$ with $(\widetilde{t}, \widetilde{v})$ as part of the \GetDataResp~action. If $T > T_j$, $s_{j}$ responds (at $T$) to $r$ with $(\widetilde{t}, \widetilde{v})$ as part of the  action where $\widetilde{t}$ gets committed\footnote{It may be noted that $s_{a,j}$ commits  $\widetilde{t}$ via the \broadcastResp~action or via the ``if-clause" of the \PutTagResp~action.  It is not possible that $s_{j}$ commits  $\widetilde{t}$ via the ``else-clause" of \PutTagResp~action, since this would mean that $s_{a,j}$ commits a tag higher than $\widetilde{t}$, earlier than $\widetilde{T}$} (if $s_{j}$ did not yet respond with a valid tag-value pair to $r$.)
				\item \emph{$s_{a,j}.t_c|_{\widetilde{T}} < \widetilde{t}$}: In this case, we claim that there exists a point $T' > \widetilde{T} > T_j$ in the execution such that $t' = s_{a,j}.t_c|_{T'} \geq \widetilde{t}$. The existence of $T'$ follows because of the property of the \emph{broadcast} primitive used in the algorithm, which ensures that the $f_1 + k$ broadcast messages that resulted in some server committing $\widetilde{t}$ will eventually also be received by $s_{a,j}$, and thus $s_{a,j}$'s commitCounter[$\widetilde{t}$] eventually increases to $f_1 + k$ (note that $(\widetilde{t}, \widetilde{v}$) is part of the list of $s_{a,j}$ as discussed above).
				
				Without loss of generality, let $T'$ denote the earliest point in the execution where the above claim holds. Precisely one of the following three actions is associated with $T'$ (the three possibilities arise because of the consideration that the $T'$ is the earliest point for which the claim holds, which results in additional possibilities for $T'$). The server $s_{a,j}$ responds to reader $r$ with $(t,v), t \geq \widetilde{t}$ at $T'$ in any of these three cases, if it did not yet respond to $r$ with a valid tag-value pair.
				\begin{itemize} 
					\item $s_{a,j}$ acts on a  $broadcast(${\reqCommitTag}, $ t')$ message, and updates $t_c$ to $t'$.  In this case, server $s_{a,j}$ responds  with $(t', v')$
					\item $s_{a,j}$ acts on a  $put\act{-}tag((${\putTagLabel}, $t')$ message such that $(t', v') \in s_{a,j}.L|_{T'}$ (thus enters the ``if" clause), and updates $t_c$ to $t'$ .  In this case, server $s_{a,j}$ responds  with $(t', v')$
					\item $s_{a,j}$ acts on a $put\act{-}tag(${\putTagLabel}, $t')$ message such that $(t', v') \notin s_{a,j}.L|_{T'}$ (thus enters the ``else" clause), and updates $t_c$ to $t'$. Observe that in this case $(\widetilde{t}, \widetilde{v})$ never gets removed from list of $s_{a,j}$ before $T'$, and thus server $s_{a,j}$ responds to reader $r$ with $(t,v), t \geq \widetilde{t}$ at $T'$.
				\end{itemize}
			\end{itemize}
			
			\item \emph{Server $s_{a, f_1 + 1}$ regenerates a pair $(t, c)$ based on the responses, however $t < t_{req}$}: This case can be handled like the last one; this time we rely on Lemma \ref{lem:lcasb_prop4} instead of Lemma \ref{lem:lcasb_prop3}.
		\end{enumerate}
		\item \emph{$s_{a, f_1 + i}$ returns a regenerated tag-coded-element pair $(t, c)$, such that $t \neq t'$, where $t'$ is the tag regenerated (and returned) by the servers  $s_{a, f_1 + i + 1}, \ldots, s_{a, f_1 + k}$}. We consider both the cases $t < t'$ and $t > t'$; in either case, we can use  Lemma \ref{lem:lcasb_prop6}, and arguments like in Case $1.(a)$ above to prove liveness of the read operation. 
		
		This completes the proof of liveness.
	\end{enumerate}
\end{proof}

\subsection{Proof of Theorem \ref{thm:lcasb_atomicity} [{\bf Atomicity}]}

Our proof of atomicity is based on Lemma $13.16$ of \cite{Lynch1996} (which gives a sufficient condition for proving atomicity), which is paraphrased below:
\begin{lemma} \label{lem:atom}
	Consider any well-formed execution $\beta$ of the algorithm, such that all the invoked read and 
	write operations  complete. Now, suppose that all  the invoked read and write operations in $\beta$ can be partially ordered by an ordering $\prec$, so that the following properties are satisfied:
	\begin{itemize}
		\item [\em P1.] The partial order ($\prec)$ is consistent with the 
		external order of invocation and responses, i.e., there are no 
		operations $\pi_1$ and $\pi_2$, 
		such that $\pi_1$ completes before $\pi_2$ starts, 
		yet $\pi_2 \prec \pi_1$.
		\item[\em P2.] All operations are totally 
		ordered with respect to the write operations, 
		i.e., if $\pi_1$ is a write operation and $\pi_2$ is any other operation then 
		either $\pi_1 \prec \pi_2$ or $\pi_2 \prec \pi_1$.
		\item[\em P3.] Every read operation returns
		the value of the last write preceding it (with respect to $\prec$), and if no preceding 
		write is ordered before it, then the read returns the initial value
		of the object.
	\end{itemize}
	Then, the execution $\beta$ is atomic.	
\end{lemma}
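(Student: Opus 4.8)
The plan is to prove the stated sufficient condition \emph{directly}, i.e.\ to show that the existence of a partial order $\prec$ on the (completed) operations of $\beta$ satisfying P1--P3 forces $\beta$ to be atomic. To do this I will construct an explicit total order $\Rightarrow$ on all operations and verify that it witnesses atomicity in the sense we adopt: there must be a serialization (total order) of the operations that is (a) consistent with the real-time precedence order --- write $\pi_1 \rightarrow \pi_2$ when $\pi_1$ completes before $\pi_2$ begins --- and (b) \emph{legal}, meaning every read returns the value of the write immediately preceding it in $\Rightarrow$, or the initial value $v_0$ if no write precedes it. Exhibiting such a $\Rightarrow$ is exactly what is needed to conclude the execution is atomic.

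First I would use P2 and P3 to pin down the structure of $\prec$. By P2 the write operations form a $\prec$-chain $W_1 \prec W_2 \prec \cdots$, and every read is $\prec$-comparable to every write. Hence each read $R$ has a well-defined \emph{dictating write} $w(R)$, namely the $\prec$-largest write with $w(R) \prec R$ (and $w(R) = \perp$ if none), and P3 says $R$ returns the value of $w(R)$ (or $v_0$). The key structural observation is that $R$ is \emph{$\prec$-sandwiched}: by maximality of $w(R)$ together with P2, every write $W' \succ w(R)$ satisfies $R \prec W'$. Consequently \emph{any} total order extending $\prec$ already satisfies (b), since in such an order the write immediately before $R$ is precisely $w(R)$. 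This reduces the whole task to producing a total order that extends $\prec$ \emph{and} is consistent with the real-time order $\rightarrow$.

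To build such an order I would list the writes as $W_1, W_2, \ldots$ (a $\prec$-chain, and consistent with $\rightarrow$ by P1), place each read $R$ into the slot immediately after $w(R)$ and before the next write, and break the remaining ties among reads by a topological sort of $\rightarrow$ restricted to each slot. By the sandwiching observation this order extends $\prec$, so (b) holds; it remains to check (a), which I would do by a short case analysis verifying that no real-time edge is reversed. Write--write edges hold by P1. For an edge between a read $R$ and a write $W'$ placed after $R$ (i.e.\ $W'$ in a later slot), a real-time edge $W' \rightarrow R$ would give $W' \prec R$ by P1 and P2, contradicting the maximality defining $w(R)$; if $W'$ is placed before $R$, a real-time edge $R \rightarrow W'$ contradicts P1 since $W' \prec R$ already holds. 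For two reads in the same slot the topological sort respects $\rightarrow$ by construction; for reads $R_1,R_2$ in distinct slots with $R_1$ earlier, a real-time edge $R_2 \rightarrow R_1$ would yield $R_1 \prec w(R_2) \prec R_2$, hence $R_1 \prec R_2$, contradicting P1.

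The main obstacle is exactly step (a). It is tempting to simply take an arbitrary linear extension of the union $\prec \cup \rightarrow$, but this union is not acyclic a priori, and P1 by itself only forbids length-two reversals. The real content of the argument is showing that the slot construction never conflicts with $\rightarrow$ --- equivalently, that no longer mixed cycle of $\prec$- and $\rightarrow$-edges exists --- and this is where P2 is indispensable: it is P2-comparability between a read and every write that lets me convert each hypothetical real-time violation into a $\prec$-reversal and then invoke P1. Once (a) and (b) are both verified, $\Rightarrow$ is a legal serialization of $\beta$ consistent with real time, and atomicity follows.
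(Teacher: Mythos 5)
The paper does not prove this lemma at all: it is a paraphrase of Lemma 13.16 of \cite{Lynch1996}, invoked as a black-box sufficient condition in the proof of Theorem \ref{thm:lcasb_atomicity}. So your argument is not an alternative to a paper-internal proof but a self-contained reconstruction of the textbook result, and in substance it is correct. The sandwich observation (each read $R$ lies $\prec$-between its dictating write $w(R)$ and every $\prec$-later write, by P2-comparability plus maximality of $w(R)$) is exactly the right structural fact; the slot construction and the case analysis against real-time edges then yield a legal, real-time-consistent total order, which is the standard equivalent formulation of atomicity for executions in which every operation completes (Lynch's own proof instead inserts serialization points into the execution; your combinatorial linear-extension route is an acceptable substitute for complete executions). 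Two blemishes are worth fixing. First, your constructed order does not literally ``extend $\prec$'': two reads in the same slot are ordered only by a topological sort of $\rightarrow$, so a $\prec$-edge between them may be reversed. This is harmless, because legality depends only on where reads sit relative to writes, which the slots guarantee; but the sentence ``by the sandwiching observation this order extends $\prec$, so (b) holds'' should be replaced by the direct statement that the write immediately preceding $R$ in the constructed order is $w(R)$ by construction. Second, Lynch's original lemma carries a finiteness hypothesis (every operation has only finitely many $\prec$-predecessors) which the paper's paraphrase silently drops; your enumeration of the writes as a chain $W_1, W_2, \ldots$ and the embedding of the resulting order into the timeline tacitly rely on it. This matters only for infinite executions---and not for the way the paper applies the lemma---but a fully rigorous standalone proof should either restore that hypothesis or argue that P1--P3 (in particular P3's implicit demand that a last preceding write exist) rule out the infinite pathologies.
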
	

Let $\Pi$ denote the set of all successful client operations in $\beta$. Towards defining the partial ordering on $\Pi$, we first recall the definition of tags of read and write operations, associated with non-faulty clients (see Section \ref{sec:propeties}).For a write operation $\pi$,  recall that we defined the $(tag(\pi), value(\pi)))$ pair as the message $(t_w, v)$ which the writer sends in the {\PutData} phase. If $\pi$ is a read, we defined the $(tag(\pi), value(\pi)))$ pair as $(t_{r}, v)$ where $v$ is the value that gets returned, and $t_{r}$ is the  associated tag. The partial order ($\prec$) in $\Pi$ is defined as follows: For any $\pi, \phi \in \Pi$, we say $\pi \prec \phi$  if one of the following holds: $(i)$  $tag(\pi)  < tag(\phi)$, or $(ii)$ $tag(\pi) = tag(\phi)$, and  $\pi$ and $\phi$ are write and read operations, respectively. 

We are now ready to prove Theorem \ref{thm:lcasb_atomicity}.
\begin{nono-theorem}
	Every well-formed execution of the $LDS$ algorithm is atomic.
\end{nono-theorem}
\begin{proof}
	We  prove the atomicity by proving properties $P1$, $P2$ and $P3$ appearing in Lemma \ref{lem:atom} for any execution of the algorithm.
	
	\emph{Property $P1$}: Consider two operations $\pi$ and $\phi$ such that $\pi$ completes before $\phi$ is invoked. We need to show that it cannot be  the case that $\phi \prec \pi$. We will show this for the case when both $\phi$ and $\pi$ are writes. The remaining three cases where at least one of the two operations is a read can be similarly analyzed. All four cases essentially use Lemma \ref{lem:maxL_geq_tag}, which we demonstrate  for the case when both $\phi$ and $\pi$ are writes. Suppose that the writes $\phi$ and $\pi$ are initiated by writers $w_{\phi}$ and $w_{\pi}$, respectively. Let $T$ denote the point of execution when $w_{\phi}$ initiates the \GetTag~phase. Since $\pi$ completes before $T$, we know from Lemma \ref{lem:maxL_geq_tag} that for any set $S'$ of $f_1 + k$ servers in $\mathcal{L}_1$ that are non-faulty at $T$, there exists $s \in  S'$ such that $s.t_c|_T \geq tag(\pi)$ and $\max \{ t: (t, *) \in s.L|_T \}  \geq tag(\pi)$. In this case, if $S'$ denotes the set of $f_1 + k$ servers whose responses were used by  $w_{\phi}$ in the \GetTag~phase (to compute $tag(\phi)$), it is clear that at least one of these responses is at least as high as $tag(\pi)$, and this ensures that $tag(\phi) > tag(\pi)$, and thus $\pi \prec  \phi$.

	\emph{Property $P2$}:  This follows from the construction of tags, and the definition of the partial order ($\prec$).
	
	\emph{Property $P3$}:  This follows from the definition of partial order ($\prec$), and by noting that value returned by a read operation $\pi$ is simply the value associated with $tag(\pi)$.
\end{proof}

\section*{Appendix III: Proofs of Performance Metrics}

\subsection{Proof of Lemma \ref{lem:gctime1} [{\bf Temporary Nature of $\mathcal{L}_1$ Storage}]}

\begin{nono-lemma}
	Consider a successful write operation $\pi \in \beta$. Then, there exists a point of execution $T_e(\pi)$ in $\beta$ such that for all $T' \geq T_{e}(\pi)$ in $\beta$, we have  $s.t_c|_{T'} \geq tag(\pi)$ and $(t, v) \not\in s.L|_{T'}$,  $\forall s \in \mathcal{L}_1, t \leq tag(\pi)$. 
\end{nono-lemma}
\begin{proof}
We first identify point of execution $T_s$ for every non-faulty server $s \in \mathcal{L}_1$ which satisfies the lemma for server $s$. In this case, we can define $T_{e}(\pi) = \max_{s \in \mathcal{L}_1}(T_s)$. Towards this, from our assumption that point-to-point channels are reliable, any (non-faulty) server $s$ in $\mathcal{L}_1$ eventually receives the \PutData~request from the writer, containing the associated (tag, value) pair, say $(t' = tag(\pi), v')$. At this point of execution, say $T$, if we suppose that $s.t_c|_T > t'$, we define $T_s = T$. It follows from Lemma \ref{lem:tctag} that $(t, v) \notin s.L|_{T_e(\pi)}$,  $\forall t \leq tag(\pi)$. The fact that the statement also holds for all $T' > T_s$ for server $s$, as long as it remains non-faulty, follows by combining Lemma \ref{lem:lcasb_prop1} with Lemma \ref{lem:tctag}.
	
Next consider the case where $s.t_c|_T = t'$. From the algorithm, we see that this happens precisely if the server $s$ updated its committed tag to $t'$ at a point $T_1 < T$, via an execution of \PutTagResp(PUT-TAG, $t'$)~action, i.e., some reader returned $(t', v')$ before the point $T$. In this case, it is clear that $(t', v') \notin s.L|_{T}$. In this case also, we define $T_s = T$, and we see that lemma statement holds for server $s$ (as long as it remains non-faulty) by  using Lemma \ref{lem:lcasb_prop1} and Lemma \ref{lem:tctag}. 
	
Finally, consider the case where $s.t_c|_T < t'$. In this case, the server $s$ adds $(t', v')$ to its list $L$ at $T$. Since the write is successful, it received acknowledgments from $f_1 + k$  servers in $\mathcal{L}_1$. From an inspection of the algorithm, we see that all these servers definitely execute the respective \emph{broadcast} primitives associated with this write, before sending the acknowledgment to the writer. Since channels are reliable, server $s$ eventually receives these broadcast messages, and there exits a point in the execution such that commitCounter[t'] $\geq f_1 + k$ for the server $s$. In this case, server $s$ (if it still has a lower $t_c$) initiates \writetoLtwo$(t', v')$, which definitely succeeds. After the termination of \writetoLtwo$(t', v')$, $s$ replaces $(t', v')$ with $(t', \perp)$, if it has not already done so. In this third case, we define $T_s$ as the point of termination of the internal \writetoLtwo~operation. It is straightforward to see that lemma statement holds for server $s$ in this case well.
\end{proof}

\subsection{Proof of Lemma \ref{lem:writecost} [{\bf Write, Read Cost}]}

\begin{nono-lemma}
	The communication cost associated with any write operation in $\beta$ is given by 
	$n_1 + n_1n_2\frac{2d}{k(2d - k + 1)} = \Theta(n_1)$. The communication cost associated with any successful read operation $\rho$ in $\beta$ is given by $n_1(1 + \frac{n_2}{d})\frac{2d}{k(2d-k+1)} + n_1I(\delta_{\rho} > 0) = \Theta(1) + n_1I(\delta_{\rho} > 0)$. Here, $I(\delta_{\rho} > 0)$ is $1$ if $\delta_{\rho} > 0$, and $0$ if $\delta_{\rho} = 0$. 
\end{nono-lemma}
\begin{proof}
	For writing value $v$, the write-cost due to messages exchanged between writer and servers in $\mathcal{L}_1$ is given by $|v|n_1$. The write-cost due to the internal \writetoLtwo~operations is given by $n_1n_2\alpha$. Recall that we are using MBR codes, and the file-size (which is size of $v$ here) of MBR codes is give by $|v| = \sum_{i=0}^{k-1}(d-i)\beta = k\beta(2d - k + 1)/2$. Also recall that MBR codes are characterized  by the relation $\alpha = d\beta$. Using the last two statements, the overall write-cost is given by $|v|n_1(1 + n_2\frac{2d}{k(2d - k + 1)})$. The result about write-cost now follows, if we normalize the cost by the size of $v$.
	
	The read cost due to \readfromLtwo~actions is given by $n_1n_2\beta = n_1n_2\frac{2|v|}{k(2d-k+1)}$. If $\delta_{\rho} = 0$, the read cost due to message exchanges from servers in $\mathcal{L}_1$ to the reader is given by $n_1\alpha = n_1d\beta = n_1\frac{2d|v|}{k(2d-k+1)}$. However, if $\delta_{\rho} > 0$, the worst-case read cost due to message exchanges from servers in $\mathcal{L}_1$ to the reader is given by $n_1\alpha + n_1|v|$. Thus the overall read-cost is given by $n_1n_2\beta + n_1\alpha + n_1|v|I(\delta_{\rho}>0)$. The result once again follows, if we normalize the total cost by the size of $v$.
\end{proof}

\subsection{Proof of Lemma \ref{lem:storcost} [{\bf Single Object Permanent Storage Cost}]}

\begin{nono-lemma}
	The (worst case) storage cost in $\mathcal{L}_2$ at any point in the execution of the \calc~algorithm  is given by $\frac{2dn_2}{k(2d-k+1)} = \Theta(1)$.
\end{nono-lemma}
\begin{proof}
	The cost associated with storing object $v$ in $\mathcal{L}_2$ is given by $n_2\alpha = n_2\frac{2d|v|}{k(2d-k+1)}$. The result follows by normalizing with the size of $v$.
\end{proof}

\subsection{Proof of Lemma \ref{lem:latency} [{\bf Write, Read Latency}]}
\begin{nono-lemma}
	A successful write operation in $\beta$ completes within a duration of $4\tau_1 + 2\tau_0$. The associated extended write operation completes within a duration of $\max(3\tau_1 + 2\tau_0 + 2\tau_2, 4\tau_1 + 2\tau_0)$. A successful read operation in $\beta$ completes within a duration of $\max(6\tau_1 + 2\tau_2, 6\tau_1 + 2\tau_0 + \tau_2)$.
\end{nono-lemma}
\begin{proof}
	A write operation has two phases, and in each phase there is one round of communication between the writer and servers in $\mathcal{L}_1$. However, during the second phase, each server in $\mathcal{L}_1$ before sending ACK to the writer needs to internally receive broadcast messages from $f_1 + k$ servers. The broadcast primitive was briefly discussed in Section \ref{sec:algo}. It we use the implementation in \cite{SODA2016}, a broadcast message is received within a duration of $2\tau_0$. Thus, a write completes within a duration of $4\tau_1 + 2\tau_0$. 
	Towards determining the duration of extended write operation associated with a successful write operation $\pi$, consider the point of execution $T_{e}(\pi)$ as in Lemma \ref{lem:gctime1}.  It is straightforward to see that $T_{e}(\pi)$ is at most $T_{start}(\pi) + 3\tau_1 + 2\tau_0 + 2\tau_2$. The result for duration of extended write $\pi_e$ follows by recalling that $T_{end}(\pi_e) = \max(T_{end}(\pi), T_{e}(\pi))$.
	
	To calculate completion time of read operation, say $\pi$, invoked by a non-faulty reader $r$, we take a re-look at the proof of liveness of read-operations (see Theorem \ref{thm:lcasb_liveness}). Our proof of liveness of reads was a constructive proof, where we identified points in the execution when the servers in $\mathcal{L}_1$ respond to the reader in a manner that guarantees completion of the read operation. The goal here is to find bounds on these points in the  execution, under the bounded latency model. Below, we repeat a lot of steps from liveness proof for sake of clarity. 
	
	As in the proof of liveness, let $t_{req}$ denote the tag that was sent by $r$ during the \GetData \ phase. It is easy to see that all servers in $S_a$ definitely respond to reader $r$. Note that each of the responses can either be a valid tag-value pair, or a regenerated tag-coded-element-pair or $(\perp, \perp)$. Let $T_j$ denote the point of execution when server $s_{a, j} \in S_a, 1 \leq j \leq f_1 + k$ acts on the get-data request from $r$. Without loss of generality, let us assume that $T_{i} < T_{i+1}, 1 \leq i \leq f_1 + k -1$. 
	
	Consider the sequence ${\bf s}$ of $k$ servers $(s_{a, f_1 + 1}, s_{a, f_1 + 2}, \ldots, s_{a, f_1 + k})$, and suppose that no server in this sequence responds to the reader $r$ as part of the execution of the corresponding \GetDataResp~action; instead all $k$ of them initiate the internal \readfromLtwo$(r)$ operation. Now, let $s_{a, f_1 + i}, 1 \leq i \leq k$ be the last server in this sequence that results in one of the following two events:
	\begin{enumerate}
		\item $s_{a, f_1 + i}$ returns $(\perp, \perp)$
		\item $s_{a, f_1 + i}$ returns a regenerated tag-coded-element pair $(t, c)$, where the tag $t$ is different from the tags returned by the servers $s_{a, f_1 + i + 1}, \ldots, s_{a, f_1 + k}$. Since $s_{a, f_1 + i}$ is taken as the last such server in the sequence, in this case, it is clear that all the servers  $s_{a, f_1 + i + 1}, \ldots, s_{a, f_1 + k}$ indeed regenerate and return a common tag $t'$ such that $t' \geq t_{req}$.
	\end{enumerate}
	First of all note if there is no such server in the sequence ${\bf s}$ that satisfies either of the two conditions above, then it is clear that all $k$ servers respond with with the common regenerated tag $t' \geq t_{req}$, and this ensures liveness. \emph{In the bounded latency model, in this case, the read operation completes within a duration of $6\tau_1 + 2\tau_2$.}
	
	We next proceed to find the read duration, if any one of the above conditions indeed hold good. We analyze both the above events separately:
	\begin{enumerate}
		\item \emph{$s_{a, f_1 + i}$ returns $(\perp, \perp)$}:  This can happen in two ways:
		\begin{enumerate}
			\item \emph{Server $s_{a, f_1 + i}$ is unable to regenerate any $(t, c)$ pair based on the responses}: In this case, we know from Lemma that \ref{lem:lcasb_prop3} that there exists a point $\widetilde{T} > T_{f_1 + i}$ in the execution such that the following statement is true: there exists a subset $S_b$ of  $S_a$  such that $|S_b| = k$, and $\forall s' \in S_b$  $(\widetilde{t}, \widetilde{v}) \in s'.L|_{\widetilde{T}}$, where $\widetilde{t} = \max_{s \in \mathcal{L}_1}s.t_c|_{\widetilde{T}}$. Next, note that $|S_b \cap \{s_{a, 1}, s_{a, 2}, \ldots, s_{a, f_1 + i}\}|   \geq k + (f_1 + i) - |S_a| = i \geq 1, 1 \leq i \leq k$. \emph{Using Remark \ref{rem:Ttilde}, under bounded latency, we get that $\widetilde{T} < T_{start}(\pi) + 3\tau_1 + \tau_2$}.
			
			Now, as in the proof of liveness, we pick any server $s_{a,j} \in S_b \cap \{s_{a, 1}, s_{a, 2}, \ldots, s_{a, f_1 + i}\}$, and consider the two cases $s_{a,j}.t_c|_{\widetilde{T}} = \widetilde{t}$ and $s_{a,j}.t_c|_{\widetilde{T}} < \widetilde{t}$. We know that $s_{a,j}$ eventually sends (say at point of execution $T_r$) a valid (tag, value) pair $(t, v)$ to the reader such that $t \geq t_{req}$. If $s_{a,j}.t_c|_{\widetilde{T}} = \widetilde{t}$, we see from the analysis of liveness proof that $T_r < \widetilde{T}$. And if $s_{a,j}.t_c|_{\widetilde{T}} < \widetilde{t}$, we once again infer from the liveness proof that $T_r < \widetilde{T} + 2\tau_0$. The follows, since in this case, the server $s_{a,j}$ satisfies $commitCounter[\widetilde{t}] \geq f_1 + k$ at a point earlier than $\widetilde{T} + 2\tau_0$, and thus can surely respond to the reader via the \broadcastResp~action, if it has not already done so.
			\item \emph{Server $s_{a, f_1 + 1}$ regenerates a pair $(t, c)$ based on the responses, however $t < t_{req}$}: This case can be handled like the last one; this time we rely on Lemma \ref{lem:lcasb_prop4} instead of Lemma \ref{lem:lcasb_prop3}. Even here we can show that $\widetilde{T} < T_{start}(\pi) + 3\tau_1 + \tau_2$, and argue like above.
		\end{enumerate}
		\item \emph{$s_{a, f_1 + i}$ returns a regenerated tag-coded-element pair $(t, c)$, such that $t \neq t'$, where $t'$ is the tag regenerated (and returned) by the servers  $s_{a, f_1 + i + 1}, \ldots, s_{a, f_1 + k}$}. For this case as well, we can show that $\widetilde{T} < T_{start}(\pi) + 3\tau_1 + \tau_2$, and argue like above.
	\end{enumerate}
	
	Thus the server $s_{a, j}$ sends a valid (tag, value) pair (that is sufficient to ensure liveness), in all cases at $T_r < T_{start}(\pi) + 3\tau_1 + \tau_2 + 2\tau_0$, and the response reaches the reader within a duration of $4\tau_1 + \tau_2 + 2\tau_0$ from the start of the read. Also, if we consider any other server whose response was one among the $f_1 + k$ responses needed by the reader to complete the  \GetData~phase, a (tag, coded-element) (or $(\perp, \perp)$) response from this server reaches the reader within a duration of $4\tau_1 + 2\tau_2$, from the start of the read. Thus, the first two phases of the read complete within a duration of $\max(4\tau_1 + 2\tau_2, 4\tau_1 + \tau_2 + 2\tau_0)$. Finally, the third phase of read completes within a duration of $2\tau_1$, and hence the result.
\end{proof}

\subsection{Proof of Lemma \ref{lem:multiobject} [{\bf Relative Cost of Temporary Storage}]}
\begin{nono-lemma}
	At any point in the execution, the worst case storage cost in $\mathcal{L}_1$ and $\mathcal{L}_2$ are upper bounded by $\left\lceil 5 + 2\mu \right\rceil \theta n_1$ and $\frac{2Nn_2}{k+1}$. Specifically, if $\theta << \frac{Nn_2}{kn_1\mu}$, the overall storage cost is dominated by that of permanent storage in $\mathcal{L}_2$, and is given by $\Theta(N)$. 
\end{nono-lemma}
\begin{proof}
	By assumption, we only consider executions with successful writes. Also $\tau_1 = \tau_0$ by assumption, and thus an extended write operation completes within a duration of $5\tau_1 + 2\tau_2 = (5 + 2\mu)\tau_1$. Recall that the definition of extended write operation was motivated by Lemma \ref{lem:gctime1}, and we know that at any point $T$ in the execution after the completion of the extended write, the corresponding  (tag, value) pair is not presented in the temporary storage of any of the servers in $\mathcal{L}_1$. In this case, if $\theta$ denotes the maximum number of concurrent write operations experienced by the system within any duration of $\tau_1$ time-units, it follows that the normalized temporary  storage-cost in $\mathcal{L}_1$ at any point in the execution is at most $\lceil(5 + 2\mu)\rceil\theta n_1$. The storage cost in $\mathcal{L}_2$ at any point is the execution is exactly $Nn_2\alpha$. Since we assume that $f_1 = f_2$ and $n_1 = n_2$, it follows that $d = k$ for the MBR code. In this case, it can be seen that $\alpha = 2|v|/(k+1)$ and thus the normalized storage cost in  $\mathcal{L}_2$ is given by $2Nn_2/(k+1)$. It is clear that if $\theta <<   \frac{Nn_2}{kn_1\mu}$, the overall storage cost is dominated by that of permanent storage in $\mathcal{L}_2$, and is given by $\Theta(N)$.
\end{proof}

\end{document}